\newtheorem{theorem}{Theorem}
\newtheorem{lemma}{Lemma}
\newtheorem{claim}{Claim}
\newtheorem{crl}{Corollary}
\newtheorem{property}{Property}
\newcommand{\pe}{potential edge\xspace}
\newcommand{\pes}{potential edges\xspace}
\newcommand{\pec}{potential empty cycle\xspace}
\newcommand{\side}{side\xspace}
\newcommand{\se}{side-edge\xspace}
\newcommand{\ses}{side-edges\xspace}
\newcommand{\sa}{side-apart\xspace}
\title{On Optimal 2- and 3-Planar Graphs}
\author{Michael~A.~Bekos$^1$, Michael~Kaufmann$^1$, Chrysanthi~N.~Raftopoulou$^2$
\\
\medskip
\\
$^1$Wilhelm-Schickhard-Institut f\"ur Informatik, Universit\"at T\"ubingen, Germany\\
\texttt{\{bekos,mk\}@informatik.uni-tuebingen.de}
\\
$^2$School of Applied Mathematical \& Physical Sciences, NTUA, Greece\\
\texttt{crisraft@mail.ntua.gr}}
\date{}
\begin{document}

\maketitle

\begin{abstract}
A graph is $k$-planar if it can be drawn in the plane such that no edge is crossed more than $k$ times. While for $k=1$, \emph{optimal} $1$-planar graphs, i.e.\ those with $n$ vertices and exactly $4n-8$ edges, have been completely characterized, this has not been the case for $k \geq 2$. For $k=2,3$ and $4$, upper bounds on the edge density have been developed for the case of simple graphs by Pach and T\'oth, Pach et al.\ and Ackerman, which have been used to improve the well-known ``Crossing Lemma''. Recently, we proved that these bounds also apply to non-simple $2$- and $3$-planar graphs without homotopic parallel edges and self-loops.

In this paper, we completely characterize optimal $2$- and $3$-planar graphs, i.e., those that achieve the aforementioned upper bounds. We prove that they have a remarkably simple regular structure, although they might be non-simple. The new characterization allows us to develop notable insights concerning new inclusion relationships with other graph classes.
\end{abstract}

\section{Introduction}
\label{sec:introduction}

Topological graphs, i.e.\ graphs that usually come with a representation of the edges as Jordan arcs between corresponding vertex points in the plane, form a well-established subject in the field of geometric graph theory. Besides the classical problems on crossing numbers and crossing configurations \cite{DBLP:journals/dcg/AlonE89,DBLP:journals/dcg/LovaszPS97,DBLP:journals/jgt/Turan77}, the well-known ''Crossing Lemma'' \cite{ACNS82,Lei83} stands out as a prominent result. Researchers on graph drawing have followed a slightly different research direction, based on extensions of planar graphs that allow crossings in some restricted local configurations~\cite{DBLP:journals/tcs/BinucciGDMPST15,DBLP:journals/algorithmica/CheongHKK15,DBLP:journals/tcs/DidimoEL11,DBLP:journals/cj/GiacomoDLMW15,KU14}. The main focus has been on \emph{1-planar graphs}, where each edge can be crossed at most once, with early results dating back to Ringel~\cite{Ringel65} and Bodendiek et al.~\cite{BSW84}. Extensive work on generation~\cite{DBLP:journals/siamdm/Suzuki10}, characterization~\cite{DBLP:conf/cocoon/HongELP12}, recognition~\cite{DBLP:journals/corr/Brandenburg16a}, coloring~\cite{Borodin95}, page number~\cite{BB0R15}, etc.\ has led to a very good understanding of structural properties of 1-planar graphs.

Pach and T\'{o}th~\cite{PachT97}, Pach et al.~\cite{PachRTT06}~and Ackerman~\cite{DBLP:journals/corr/Ackerman15} bridged the two research directions by considering the more general class of \emph{$k$-planar graphs}, where each edge is allowed to be crossed at most $k$ times. In particular, Pach and T\'{o}th provided significant progress, as they developed techniques for upper bounds on the number of edges of simple $k$-planar graphs, which subsequently led to upper bounds of $5n -10$~\cite{PachT97}, $5.5n - 11$~\cite{PachRTT06} and $6n-12$~\cite{DBLP:journals/corr/Ackerman15} for simple $2$-, $3$- and $4$-planar graphs, respectively. An interesting consequence was the improvement of the leading constant in the ''Crossing Lemma''. Note that for general $k$, the current best bound on the number of edges is $4.1 \sqrt k n$~\cite{PachT97}.

Recently, we generalized the result and the bound of Pach et al.~\cite{PachRTT06} to non-simple graphs, where non-homotopic parallel edges as well as non-homotopic self-loops are allowed~\cite{BKR16}. Note that this non-simplicity extension is quite natural and not new, as for planar graphs, the density bound of $3n-6$ still holds for such non-simple graphs.

In this paper, we now completely characterize optimal non-simple $2$- and $3$-planar graphs, i.e.\ those that achieve the bounds of $5n-10$ and $5.5n-11$ on the number of edges, respectively; refer to Theorems~\ref{thm:2-characterization} and \ref{thm:3-characterization}. In particular, we prove that the commonly known $2$-planar graphs achieving the upper bound of $5n-10$ edges, are in fact, the only optimal $2$-planar graphs. Such graphs consist of a crossing-free subgraph where all not necessarily simple faces have size $5$. At each face there are $5$ more edges crossing in its interior. We correspondingly show that the optimal $3$-planar graphs have a similar simple and regular structure where each planar face has size $6$ and contains $8$ additional crossing edges.

The remainder of this paper is structured as follows: In Section~\ref{sec:preliminaries} we introduce preliminary notions and notation. In Section~\ref{sec:properties} we present several structural properties of optimal $2$- and $3$-planar graphs that we use in Sections~\ref{sec:2planar} and \ref{sec:3planar} in order to give their characterizations. We conclude in Section~\ref{sec:discussion} with further notable insights and research directions.

\section{Preliminaries}
\label{sec:preliminaries}

Let $G$ be a (not necessarily simple) \emph{topological graph}, i.e.\ $G$ is a graph drawn on the plane, so that the vertices of $G$ are distinct points in the plane, its edges are Jordan curves joining the corresponding pairs of points, and: %
\begin{inparaenum}[(i)]
\item no edge passes through a vertex different from its endpoints,
\item no edge crosses itself and
\item no two edges meet tangentially.
\end{inparaenum}
Let $\Gamma(G)$ be such a drawing of $G$. The \emph{crossing graph} $\mathcal{X}(G)$ of $G$ has a vertex for each edge of~$G$ and two vertices of $\mathcal{X}(G)$ are connected by an edge if and only if the corresponding edges of $G$ cross in $\Gamma(G)$. A connected component of $\mathcal{X}(G)$ is called \emph{crossing component}. Note that the set of crossing components of $\mathcal{X}(G)$ defines a partition of the edges of $G$. For an edge $e$ of $G$ we denote by $\mathcal{X}(e)$ the crossing component of $\mathcal{X}(G)$ which contains $e$.

An edge $e$ in $\Gamma(G)$ is called a \emph{topological edge} (or simply \emph{edge}, if this is clear in the context). Edge $e$ is called \emph{true-planar}, if it is not crossed by any other edge in $\Gamma(G)$. The set of all true-planar edges of $\Gamma(G)$ forms the so-called \emph{true-planar skeleton} of $\Gamma(G)$, which we denote by $\Pi(G)$. Since $G$ is not necessarily simple, we will assume that $\Gamma(G)$ contains neither \emph{homotopic parallel edges} nor \emph{homotopic self-loops}, that is, both the interior and the exterior regions defined by any self-loop or by any pair of parallel edges contain at least one vertex. For a positive integer $s$, a cycle of length $s$ is called \emph{true-planar $s$-cycle} if it consists of true-planar edges of $\Gamma(G)$. If $e$ is a true-planar edge, then $\mathcal{X}(e)=\{e\}$, while for a chord $e$ of a true-planar $s$-cycle that has no vertices in its interior, it follows that all edges of $\mathcal{X}(e)$ are also chords of this $s$-cycle. Let $\mathcal{F}_s=\{v_1,v_2,\ldots,v_s\}$ be a facial $s$-cycle of $\Pi(G)$ with length $s \geq 3$.  The order of the vertices (and subsequently the order of the edges) of $\mathcal{F}_s$ is determined by a walk around the boundary of $\mathcal{F}_s$ in clockwise direction. Since $\mathcal{F}_s$ is not necessarily simple, a vertex or an edge may appear more than once in this order; see Figure~\ref{fig:non_simple_face}. More in general, a \emph{region} in $\Gamma(G)$ is defined as a closed walk along non-intersecting segments of Jordan curves that are adjacent either at vertices or at crossing points of $\Gamma(G)$. The \emph{interior} and the \emph{exterior} of a connected region are defined as the topological regions to the right and to the left of the walk.

\begin{figure}[tb]
	\centering
	\begin{minipage}[b]{.16\textwidth}
        \centering
        \includegraphics[width=\textwidth,page=1]{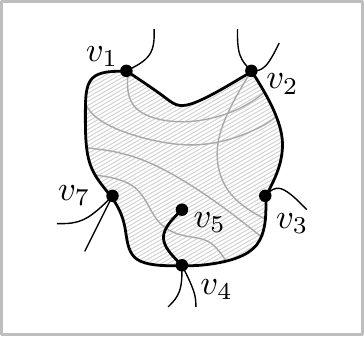}
        \subcaption{~}\label{fig:non_simple_face}
    \end{minipage}
    \begin{minipage}[b]{.16\textwidth}
        \centering
        \includegraphics[width=\textwidth,page=2]{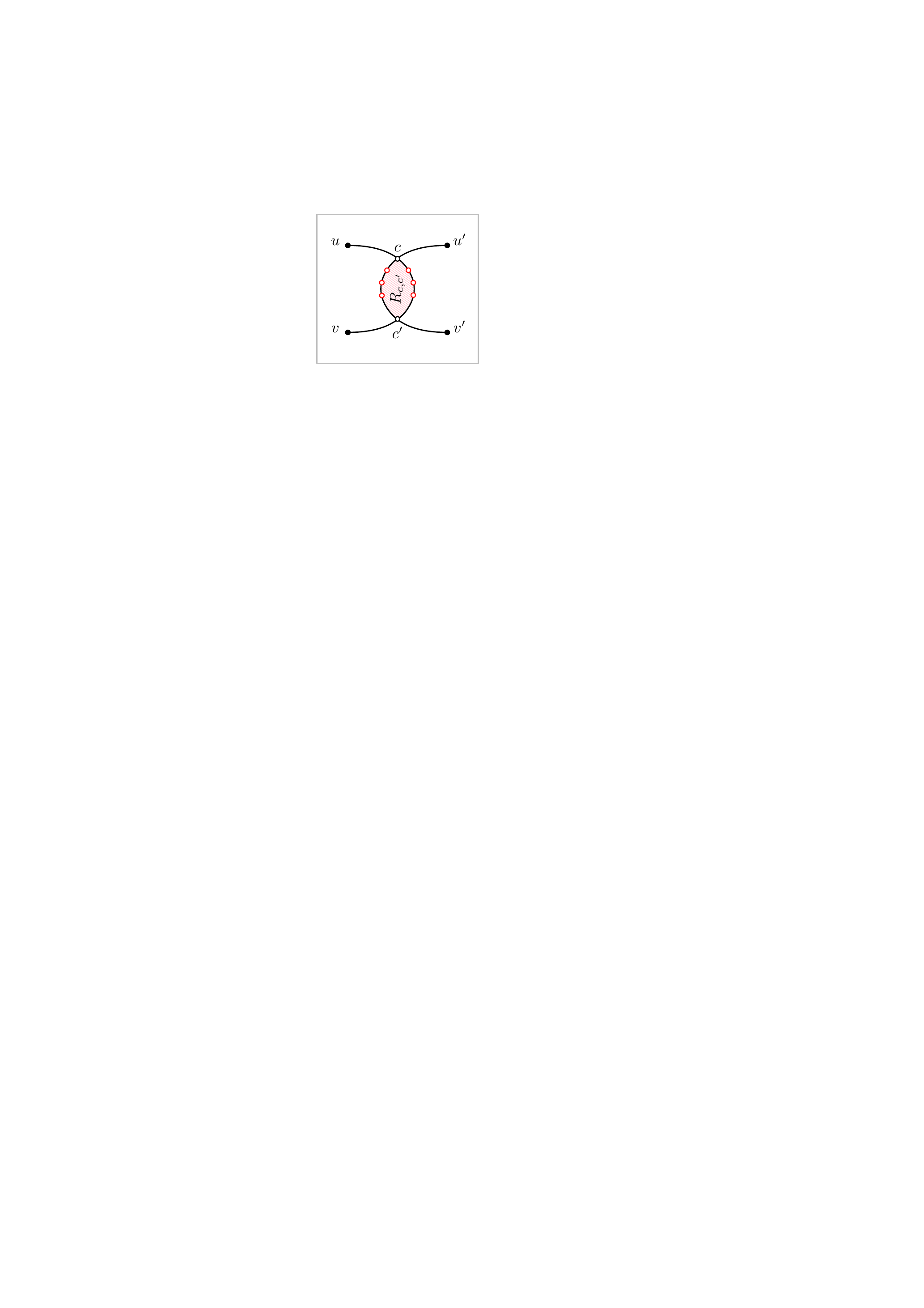}
        \subcaption{~}\label{fig:crossing_twice_reverse}
    \end{minipage}
    \begin{minipage}[b]{.16\textwidth}
        \centering
        \includegraphics[width=\textwidth,page=4]{images/pre_cross_twice}
        \subcaption{~}\label{fig:crossing_twice}
    \end{minipage}
	\begin{minipage}[b]{.16\textwidth}
        \centering
        \includegraphics[width=\textwidth,page=3]{images/pre_cross_twice}
        \subcaption{~}\label{fig:crossing_twice_2}
    \end{minipage}
    \begin{minipage}[b]{.16\textwidth}
        \centering
        \includegraphics[width=\textwidth,page=1]{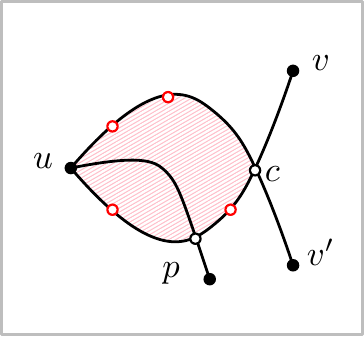}
        \subcaption{~}\label{fig:crossing_adjacent_2}
    \end{minipage}
    \begin{minipage}[b]{.16\textwidth}
        \centering
        \includegraphics[width=\textwidth,page=2]{images/pre_cross_adjacent}
        \subcaption{~}\label{fig:crossing_adjacent}
    \end{minipage}
    \caption{%
    (a)~A non-simple face $\{v_1,\ldots,v_7\}$, where $v_6$ is identified with $v_4$.
    Different configurations used in
    (b--d)~Lemma~\ref{lem:crossing_twice}, and
    (e--f)~Lemma~\ref{lem:crossing_adjacent}.}
    \label{fig:2_planar_polygon_conf}
\end{figure}

Drawing $\Gamma(G)$ is called \emph{$k$-planar} if every edge in $\Gamma(G)$ is crossed at most $k$ times. Accordingly, a graph is called \emph{$k$-planar} if it admits a $k$-planar drawing. An \emph{optimal $k$-planar} graph is a $k$-planar graph with the maximum number of edges. In particular, we consider optimal $2$- and $3$-planar graphs achieving the best-known upper bounds of $5n-10$ and $5.5n-11$ edges. For an optimal $k$-planar graph $G$ on $n$ vertices, a $k$-planar drawing $\Gamma(G)$ of $G$ is called \emph{planar-maximal crossing-minimal} or simply PMCM-drawing, if and only if $\Gamma(G)$ has the maximum number of true-planar edges among all $k$-planar drawings of $G$ and, subject to this restriction, $\Gamma(G)$ has also the minimum number of crossings.

Consider two edges $(u,v)$ and $(u',v')$ that cross at least twice in $\Gamma(G)$. Let $c$ and $c'$ be two crossing points of $(u,v)$ and $(u',v')$ that appear consecutively along $(u,v)$ in this order from $u$ to $v$ (i.e., there is no other crossing point of $(u,v)$ and $(u',v')$ between $c$ and $c'$). W.l.o.g.\ we can assume that $c$ and $c'$ appear in this order along $(u',v')$ from $u'$ to $v'$ as well. In Figures~\ref{fig:crossing_twice_reverse} and \ref{fig:crossing_twice} we have drawn two possible crossing configurations. First we drew edge $(u,v)$ as an arc with $u$ above $v$ and the edge-segment of $(u',v')$ between $u$ and $c$ to the right of $(u,v)$. The edge-segment of $(u',v')$ between $c$ and $c'$, starts at $c$ and ends at $c'$ either from the right (Figure~\ref{fig:crossing_twice_reverse}) or from the left (Figure~\ref{fig:crossing_twice}) of $(u,v)$, yielding the two different crossing configurations.

\begin{lemma}
For $k \in \{2,3\}$, let $\Gamma(G)$ be a PMCM-drawing of an optimal $k$-planar graph $G$ in which two edges $(u,v)$ and $(u',v')$ cross more than once. Let $c$ and $c'$ be two consecutive crossings of $(u,v)$ and $(u',v')$ along $(u,v)$, and let $R_{c,c'}$ be the region defined by the walk along the edge segment of $(u,v)$ from $c$ to $c'$ and the one of $(u',v')$ from $c'$ to $c$. Then, $R_{c,c'}$ has at least one vertex in its interior and one in its exterior.
\label{lem:crossing_twice}
\end{lemma}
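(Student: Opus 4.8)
The plan is to argue by contradiction: I will show that if either the interior or the exterior of $R_{c,c'}$ contained no vertex of $G$, then I could redraw $G$ so as to strictly decrease the number of crossings without decreasing the number of true-planar edges, contradicting the fact that $\Gamma(G)$ is a PMCM-drawing. Write $A$ for the segment of $(u,v)$ from $c$ to $c'$ and $B$ for the segment of $(u',v')$ from $c'$ to $c$; since $c$ and $c'$ are consecutive crossings of the two edges, $A$ and $B$ meet only at $c$ and $c'$, so $\gamma = A \cup B$ is a simple closed curve bounding $R_{c,c'}$. Viewing $\Gamma(G)$ on the sphere, $\gamma$ splits the sphere into two disks, the interior $R_{c,c'}$ and the exterior; since crossing counts and the set of true-planar edges are invariant under a homeomorphism of the sphere, it suffices to prove that \emph{no} disk bounded by $\gamma$ is free of vertices, which treats the interior and the exterior symmetrically.

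The key quantitative observation --- and the reason the statement is restricted to $k \in \{2,3\}$ --- is that $(u,v)$ and $(u',v')$ already cross each other at least twice, so in a $k$-planar drawing each of them is crossed at most $k-2 \le 1$ times by any other edge; in particular the arcs $A$ and $B$ together carry at most a couple of crossings with third edges. Suppose now that some disk $D$ bounded by $\gamma$ contains no vertex. I would remove the crossings $c$ and $c'$ by rerouting: replace $A$ by a curve drawn inside $D$ infinitesimally close to $B$, or symmetrically replace $B$ by a curve hugging $A$, choosing whichever of the two carries fewer third-edge crossings. Because $D$ is empty of vertices, this rerouting stays within the same face, eliminates the two crossings $c,c'$, and replaces the third-edge crossings on the rerouted arc by those on the arc it now hugs. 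Using the budget bound above, the rerouted edge ends up with at most $k$ crossings, so the drawing remains $k$-planar; moreover no true-planar edge can lose its planarity (an edge crossing nothing cannot cross a curve that merely hugs $B$), while the total number of crossings drops by at least one.

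Combining these facts, the resulting drawing has at least as many true-planar edges and strictly fewer crossings than $\Gamma(G)$: if the number of true-planar edges goes up we contradict planar-maximality, and otherwise we contradict crossing-minimality, so in either case the PMCM-optimality of $\Gamma(G)$ is violated; hence neither disk can be vertex-free, which is exactly the claim. The two crossing configurations of Figures~\ref{fig:crossing_twice_reverse} and~\ref{fig:crossing_twice} have to be checked separately, since the positions of $u,v,u',v'$ relative to $\gamma$ --- and therefore which arc should be pushed into the empty disk --- differ between them. I expect the main obstacle to be precisely this rerouting bookkeeping: verifying in each configuration that the rerouted edge does not exceed $k$ crossings, that the total crossing number genuinely decreases, and that the modification introduces neither a self-crossing nor a pair of homotopic parallel edges (which would take the drawing outside the admissible class, and which forces one to treat the case of adjacent edges, as in Lemma~\ref{lem:crossing_adjacent}, separately). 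The tightness of the crossing budget forced by $k \le 3$ is what makes the argument go through, and is also why it would not apply verbatim for larger $k$.
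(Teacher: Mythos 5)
Your overall strategy---a contradiction with PMCM-optimality obtained by a local redrawing that eliminates the two crossings, with the budget bound $k-2\le 1$ controlling the third-edge crossings---matches the spirit of the paper's proof, and the budget observation is in fact a legitimate substitute for the paper's ``minimal crossing pair'' device (which is what the paper uses to guarantee that the two arcs between $c$ and $c'$ carry equally many third-edge crossings). However, your key redrawing step fails as described. In the only configuration where a contradiction argument is actually needed, namely that of Figure~\ref{fig:crossing_twice} (in the configuration of Figure~\ref{fig:crossing_twice_reverse} the interior contains $v'$ and the exterior contains $u'$, so there is nothing to prove), the vertex-free disk $D$ is the lens bounded by the arcs $A$ and $B$, and at each of $c$ and $c'$ the germs of $(u,v)$ that you keep (towards $u$ and towards $v$) lie in the \emph{exterior} of $D$: locally, $D$ is exactly the quadrant between the $A$-germ and the $B$-germ. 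Consequently, any replacement for $A$ drawn \emph{inside} $D$ must pass from the exterior of $D$ into its interior near $c$ and back out near $c'$, and the only way to do so is to cross $(u',v')$ once near each of $c$ and $c'$---which are precisely the two crossings you are trying to remove. The rerouted drawing therefore has just as many crossings as before (the third-edge crossings on $A$ are traded for those on $B$, but the two crossings with $(u',v')$ persist), and no contradiction with crossing-minimality is obtained.

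The fix is small but essential: the hugging curve must run along $B$ on the \emph{outside} of $D$, so that both germs of the new $(u,v)$ at $c$ (respectively $c'$) lie on the same side of $(u',v')$ and the crossings genuinely vanish; alternatively, as the paper does, one exchanges the two arcs between $c$ and $c'$, so that each new edge ``bounces'' at the former crossing points. With that correction your bookkeeping goes through, but you should also make explicit a point your write-up only gestures at: you verify the crossing count of the rerouted edge, but not of the third edges. A third edge that crossed $B$ now additionally crosses the hugging curve, so it \emph{gains} a crossing unless it also used to cross $A$; it is exactly here that the emptiness of $D$ together with your budget bound (which forbids a third edge from crossing $A$ twice or $B$ twice, since either would already use up two of its at most $k-2\le 1$ available crossings on $(u,v)$ or $(u',v')$) is needed, because it forces every third-edge segment inside $D$ to run from $A$ to $B$, keeping all third-edge counts unchanged and making the total number of crossings drop by exactly two.
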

\begin{proof}
Consider first the crossing configuration of Figure~\ref{fig:crossing_twice_reverse}. Since $c$ and $c'$ are consecutive along $(u,v)$ and $(u',v')$ does not cross itself, vertex $u'$ lies in the exterior of $R_{c,c'}$, while vertex $v'$ in the interior of $R_{c,c'}$. Hence, the lemma holds. Consider now the crossing configuration of Figure~\ref{fig:crossing_twice}. Since $c$ and $c'$ are consecutive along $(u,v)$, vertices $u'$ and $v'$ are in the exterior of $R_{c,c'}$. Assume now, to the contrary, that $R_{c,c'}$ contains no vertices in its interior. W.l.o.g.~we further assume  that $(u,v)$ and $(u',v')$ is a \emph{minimal crossing pair} in the sense that, $R_{c,c'}$ cannot contain another region $R_{p,p'}$ defined by any other pair of edges that cross twice; for a counterexample see Figure~\ref{fig:crossing_twice_2}. Let $nc(u,v)$ and $nc(u',v')$ be the number of crossings along $(u,v)$ and $(u',v')$ that are between $c$ and $c'$, respectively (red in Figure~\ref{fig:crossing_twice}). Observe that by the ``minimality'' criterion of $(u,v)$ and $(u',v')$ we have $nc(u,v) = nc(u',v')$. We redraw edges $(u,v)$ and $(u',v')$ by exchanging their segments between $c$ and $c'$ and eliminate both crossings $c$ and $c'$ without affecting the $k$-planarity of $G$; see the dotted edges of Figure~\ref{fig:crossing_twice}. This contradicts the crossing minimality of~$\Gamma(G)$.
\end{proof}

\begin{lemma}
For $k \in \{2,3\}$, let $\Gamma(G)$ be a PMCM-drawing of an optimal $k$-planar graph $G$ in which two edges $(u,v)$ and $(u,v')$ incident to a common vertex $u$ cross. Let $c$ be the first crossing of them starting from $u$ and let $R_{c}$ be the region defined by the walk along the edge segment of $(u,v)$ from $u$ to $c$ and the one of $(u,v')$ from $c$ to $u$. Then, $R_{c}$ has at least one vertex in its interior and one in its~exterior.
\label{lem:crossing_adjacent}
\end{lemma}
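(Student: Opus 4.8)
The plan is to follow the same strategy as in the proof of Lemma~\ref{lem:crossing_twice}, adapted to the fact that $(u,v)$ and $(u,v')$ now share the endpoint $u$. Write $s_1$ for the segment of $(u,v)$ from $u$ to $c$ and $s_2$ for the segment of $(u,v')$ from $c$ to $u$; together they form a closed curve $\partial R_c$ whose corners are $u$ and $c$, bounding a bounded lens $L$ and an unbounded region. According to the two crossing configurations of Figures~\ref{fig:crossing_adjacent_2} and~\ref{fig:crossing_adjacent}, the interior of $R_c$ (the region to the right of the walk) is either $L$ or the unbounded region; hence it suffices to prove that each of $L$ and the unbounded region contains a vertex.

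First I would settle the unbounded region, which is the easy half. Since $c$ is the first crossing of the two edges encountered when traversing $(u,v)$ from $u$, the segment $s_1$ is not crossed by $(u,v')$. Consequently the subarc of $(u,v')$ from $c$ to $v'$ cannot cross $\partial R_c$ at all: it cannot meet $s_2$ (that would be a self-crossing of $(u,v')$) and it cannot meet $s_1$ (which carries no crossing with $(u,v')$). A local inspection at $c$ shows that this subarc leaves $c$ on the side of the unbounded region, so, being unable to cross back, $v'$ lies in the unbounded region.

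The main work is to show that the bounded lens $L$ contains a vertex. Assume, for contradiction, that $L$ is empty of vertices. The goal is to remove the crossing $c$ by the local swap that reassigns the two boundary segments, i.e.\ rerouting $(u,v)$ along $s_2$ and $(u,v')$ along $s_1$; in both configurations this replaces the transversal crossing at $c$ by two non-crossing turns and eliminates $c$. Two issues must be controlled for this to contradict the crossing-minimality of the PMCM-drawing. First, the swap must not create a self-crossing, which would happen precisely if the continuation of $(u,v)$ past $c$ crossed $s_2$. Since $L$ is empty, any such crossing would force that continuation to re-enter and leave $L$ through $s_2$ (it cannot use $s_1$, which is uncrossed by $(u,v')$), bounding an empty sub-lens that could be uncrossed to save crossings, contradicting minimality already. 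Hence no such crossing exists and the swap is self-crossing-free. Second, the swap must preserve $k$-planarity. Here the emptiness of $L$ is used again: by crossing-minimality no edge may cross the same side $s_1$ or $s_2$ twice (an innermost empty sub-lens could otherwise be removed), so every edge meeting $L$ traverses it once from $s_1$ to $s_2$ and therefore contributes equally to the crossing counts of $s_1$ and of $s_2$. Thus the two segments carry the same number of crossings, and after the swap every edge retains at most as many crossings as before while $c$ is gone, strictly decreasing the total number of crossings and contradicting the crossing-minimality of $\Gamma(G)$. Hence $L$ must contain a vertex.

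I expect the delicate step to be the $k$-planarity-preserving redraw: one has to argue, exactly as the minimality/equal-count device does in Lemma~\ref{lem:crossing_twice}, that the emptiness of $L$ forces the two bounding segments to be crossed the same number of times, so that exchanging them cannot push any edge above $k$ crossings; selecting \emph{innermost} sub-lenses is what makes the repeated uncrossing steps rigorous.
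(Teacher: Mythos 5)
Your overall strategy is the same as the paper's: use the fact that $c$ is the first crossing along $(u,v)$ to push $v'$ outside $R_c$, and, for the interior, exchange the two segments $s_1$ and $s_2$ at $c$ to eliminate the crossing and contradict crossing-minimality. Within that strategy, however, there are two genuine gaps. The first is that you never treat the case $u=v'$, which is admissible because $G$ may be non-simple: $(u,v')$ can be a self-loop at $u$. Then $v'$ is a corner of the boundary of $R_c$, not a point of the unbounded region, and your exterior argument proves nothing; indeed the unbounded region could a priori be vertex-free. The paper closes this case with the standing assumption that $\Gamma(G)$ contains no homotopic self-loops: if $u=v'$ and the exterior of $R_c$ had no vertex, then $(u,v')$ would be a homotopic self-loop. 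This case is not a pathology one may ignore --- the lemma is later applied exactly in self-loop situations (e.g., in Claim~\ref{clm:1} and in the proof of Property~\ref{prp:3_planar_cross_twice}).

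The second, and central, gap is your claim that ``every edge meeting $L$ traverses it once from $s_1$ to $s_2$,'' hence that $s_1$ and $s_2$ carry equally many crossings. This is false: an edge incident to the common vertex $u$ can enter the empty lens \emph{at the corner $u$} without crossing either side, and since $L$ contains no vertex it must then leave through exactly one of $s_1$, $s_2$, unbalancing the two counts; this is precisely the nested adjacent-crossing configuration that the paper's ``minimal crossing pair'' normalization is introduced to control (cf.\ Figure~\ref{fig:crossing_adjacent_2}). Your innermost-sub-lens device only excludes a single edge crossing the \emph{same} side twice, so it does not rule these corner edges out. And the imbalance is fatal to your swap: if, say, two edges leave $u$ through $s_1$, then $nc(u,v)\geq nc(u,v')+2$, and after the exchange the new edge formed by $s_1$ and the continuation of $(u,v')$ can carry $k+1$ crossings, so $k$-planarity is not preserved and no contradiction with crossing-minimality follows. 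The paper does not attempt to prove the counts equal; instead, after passing to a minimal crossing pair, it splits into two cases: if $nc(u,v)=nc(u,v')$ it performs your swap, and if the counts differ it does \emph{not} swap, but deduces (via minimality) that some edge $(u'',v'')$ crosses $(u,v)$ twice and invokes Lemma~\ref{lem:crossing_twice} to produce a vertex inside $R_c$ directly. Your proof is missing both the minimality normalization over adjacent crossing pairs and this entire second case.
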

\begin{proof}
Since $c$ is the first crossing point of $(u,v)$ and $(u,v')$ along $(u,v)$ from $u$, vertex $v'$ is not in the interior of $R_{c}$. If $u \neq v'$, then $v'$  is indeed in the exterior of $R_{c}$. Otherwise, if $u=v'$ and there is no other vertex in the exterior of $R_{c}$, then $(u,v')$ is a homotopic self-loop; a contradiction. Assume now, to the contrary, that $R_{c}$ contains no vertices in its interior. W.l.o.g.~we further assume that $(u,v)$ and $(u,v')$ is a \emph{minimal crossing pair} in the sense that, $R_{c}$ cannot include another region $R_{p}$ defined any other pair of crossing edges incident to a common vertex; for an example see Figure~\ref{fig:crossing_adjacent_2}. Denote by $nc(u,v)$ and $nc(u,v')$ the number of crossings along $(u,v)$ and $(u,v')$ that are between $u$ and $c$, respectively (red drawn in Figure~\ref{fig:crossing_adjacent}). First assume that $nc(u,v) = nc(u,v')$. We proceed by eliminating crossing $c$ without affecting the $k$-planarity of $G$; see the dotted-drawn edges of Figure~\ref{fig:crossing_adjacent}. This contradicts the crossing minimality of $\Gamma(G)$. It remains to consider the case where $nc(u,v) \neq nc(u,v')$. Assume w.l.o.g.~that $nc(u,v) > nc(u,v')$. By the ``minimality''assumption there is an edge $(u'',v'')$ that crosses at least twice edge $(u,v)$. By Lemma~\ref{lem:crossing_twice}, $R_{c}$ is not an empty region; a contradiction.
\end{proof}

In our proofs by contradiction we usually deploy a strategy in which starting from an optimal $2$- or $3$-planar graph $G$, we modify $G$ and its drawing $\Gamma(G)$ by adding and removing elements (vertices or edges) without affecting its $2$- or $3$-planarity. Then, the number of edges in the derived graph forces $G$ to have either fewer or more edges than the ones required by optimality (contradicting the optimality or the $3$-planarity of $G$, resp.). To deploy the strategy, we must ensure that we do not introduce homotopic parallel edges or self-loops, and that we do not violate basic properties of $\Gamma(G)$ (e.g., introduce a self-crossing edge). We next show how to select and draw the newly inserted~elements.

A Jordan curve $[u,v]$ connecting vertex $u$ to $v$ of $G$ is called a \emph{\pe} in drawing $\Gamma(G)$ if and only if $[u,v]$ does not cross itself and is not a homotopic self-loop in $\Gamma(G)$, that is, either $u \neq v$ or $u=v$ and there is at least one vertex in the interior and the exterior of $[u,v]$. Note that $u$ and $v$ are not necessarily adjacent in $G$. However, since each topological edge $(u,v) \in E$ of $G$ is represented by a Jordan curve in $\Gamma(G)$, it follows that edge $(u,v)$ is by definition a \pe of $\Gamma(G)$ among other \pes that possibly exist.
Furthermore, we say that vertices $v_1,v_2,\dots,v_s$ define a \emph{\pec} $\mathcal{C}_s$ in $\Gamma(G)$, if there exist \pes $[v_i,v_{i+1}]$, for $i=1,\dots, s-1$ and \pe $[v_1,v_s]$ of $\Gamma(G)$, which %
\begin{inparaenum}[(i)]
\item do not cross with each other and
\item the walk along the curves between $v_1,v_2,\dots,v_s,v_1$ defines a region in $\Gamma(G)$ that has no vertices in its interior.
\end{inparaenum}
Note that $\mathcal{C}_s$ is not necessarily simple.

\begin{lemma}
For $k \in \{2,3\}$, let $\Gamma(G)$ be a PMCM-drawing of a $k$-planar graph $G$. Let also $\mathcal{C}_s$ be a \pec of length $s$ in $\Gamma(G)$ and assume that $\kappa$ edges of $\Gamma(G)$ are drawn completely in the interior of $\mathcal{C}_s$, while $\lambda$ edges of $\Gamma(G)$ are crossing\footnote{Note that the boundary edges of $\mathcal{C}_s$ are not necessarily present in $\Gamma(G)$.} the boundary of $\mathcal{C}_s$. Also, assume that if one focuses on $\mathcal{C}_s$ of $\Gamma(G)$, then $\mu$ pairwise non-homotopic edges can be drawn as chords completely in the interior of $\mathcal{C}_s$ without deviating $k$-planarity.
\begin{enumerate}[\emph{(}i\emph{)}]
\item \label{prp:nonoptim} If $\mu > \kappa + \lambda$, then $G$ is not optimal.
\item \label{prp:boundary} If $G$ is optimal and $\mu = \kappa + \lambda$, then all boundary edges of $\mathcal{C}_s$ exist\footnote{We say that a Jordan curve $[u,v]$ \emph{exists} in $\Gamma(G)$ if and only if $[u,v]$ is homotopic to an edge in $\Gamma(G)$.} in $\Gamma(G)$.
\end{enumerate}
\label{lem:exchange}
\end{lemma}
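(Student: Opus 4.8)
The plan is to argue by a local \emph{exchange} carried out in the interior of $\mathcal{C}_s$, combined with the fact—established in the cited density results—that \emph{every} $k$-planar graph on $n$ vertices has at most $5n-10$ (for $k=2$) or $5.5n-11$ (for $k=3$) edges. The key observation is that I never need the modified graph to coincide with $G$; it suffices to exhibit \emph{some} $k$-planar graph on the same $n$ vertices whose edge count would violate these universal bounds under the assumed optimality of $G$.

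First I would empty the interior. Starting from $\Gamma(G)$, delete the $\kappa$ edges lying completely inside $\mathcal{C}_s$ together with the $\lambda$ edges that cross its boundary; call the result $\Gamma'$. Since $\mathcal{C}_s$ has no vertex in its interior, every edge touching the interior is among these $\kappa+\lambda$ edges, so in $\Gamma'$ the interior of $\mathcal{C}_s$ is free of edges and its boundary curves are uncrossed. Deletion can only \emph{remove} crossings, hence $\Gamma'$ is still a valid $k$-planar drawing and introduces no homotopic parallel edges or self-loops. Next I would re-insert, inside the empty region, the $\mu$ pairwise non-homotopic chords guaranteed by the hypothesis; by the definition of $\mu$ this is possible while respecting $k$-planarity, and since the chords are confined to the interior they disturb no crossing outside $\mathcal{C}_s$.

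For part~\ref{prp:nonoptim}, this exchange yields a $k$-planar graph $G'$ on the same vertex set with $|E(G')| = |E(G)| - (\kappa+\lambda) + \mu$. If $\mu > \kappa+\lambda$ then $|E(G')| > |E(G)|$; but the universal density bound forces $|E(G')| \le 5n-10$ (resp.\ $5.5n-11$), whence $|E(G)| < 5n-10$ (resp.\ $<5.5n-11$) and $G$ cannot be optimal. For part~\ref{prp:boundary}, assume $G$ is optimal with $\mu=\kappa+\lambda$, so the exchange is edge-count preserving. Suppose some boundary \pe $[v_i,v_{i+1}]$ does not exist in $\Gamma(G)$, i.e.\ it is homotopic to no edge. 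In $\Gamma'$, after re-inserting the $\mu$ chords, the boundary of $\mathcal{C}_s$ is uncrossed, so I can route $[v_i,v_{i+1}]$ arbitrarily close to the boundary: it shares an endpoint with every chord emanating from $v_i$ or $v_{i+1}$ and stays disjoint from all other chords, hence it is added with zero crossings. Because $[v_i,v_{i+1}]$ is homotopic to no existing edge, no homotopic parallel edge or self-loop is created. The resulting $k$-planar graph then has $|E(G)|+1$ edges, again contradicting the universal bound; therefore every boundary \pe must already exist.

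The hard part will not be the counting but the bookkeeping that keeps the re-insertion globally legal. Two points need care: that each re-inserted chord is non-homotopic not only to the other chords but also to the edges surviving outside $\mathcal{C}_s$ (otherwise $G'$ would contain homotopic parallels and leave the admissible class); and that boundary edges surviving in $\Gamma'$, which may still be crossed \emph{outside} $\mathcal{C}_s$, do not exceed their crossing budget once chords meet them from the inside—this is precisely what the clause ``focusing on $\mathcal{C}_s$'' in the definition of $\mu$ is meant to encapsulate, since deleting the $\lambda$ crossing edges frees exactly the interior portion of each boundary edge's budget. Pinning these two points down rigorously, while the earlier Lemma~\ref{lem:crossing_twice} and Lemma~\ref{lem:crossing_adjacent} ensure that the enclosed region and its crossings are well-behaved, is where the real effort lies.
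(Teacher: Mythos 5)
Your proposal is, in skeleton, exactly the paper's own argument: empty the interior of $\mathcal{C}_s$ by deleting the $\kappa+\lambda$ edges, re-insert the $\mu$ pairwise non-homotopic chords, and compare edge counts against the density bounds $5n-10$ and $5.5n-11$ for part (i); for part (ii), additionally observe that any missing boundary edge could be routed crossing-free along the now-uncrossed boundary, contradicting optimality. The counting and the treatment of part (ii) match the paper's proof.

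The one substantive difference is the step you defer as ``where the real effort lies'' --- that an inserted chord is not homotopic to an edge surviving outside $\mathcal{C}_s$. This is not where the effort lies: the paper closes it in two lines, and you should too, since leaving it open leaves the modified graph outside the admissible class and voids the density comparison. After the deletion, no edge crosses the boundary of $\mathcal{C}_s$, so every edge of the modified drawing lies completely inside or completely outside the cycle. A chord drawn inside $\mathcal{C}_s$ cannot be homotopic to an edge drawn outside, because the closed curve formed by such a pair has vertices of $\mathcal{C}_s$ on both of its sides (the chord separates the two boundary walks between its endpoints, each of which carries a vertex), so the region they bound is not empty of vertices. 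Hence any homotopic parallel pair would have to consist of two edges drawn inside $\mathcal{C}_s$; but after the exchange every edge inside is newly inserted, and the $\mu$ inserted chords are pairwise non-homotopic by hypothesis. Homotopic self-loops are excluded because every inserted curve is a potential edge. (Your other worry, about the crossing budget of surviving boundary edges, is vacuous: the chords are drawn completely in the interior of $\mathcal{C}_s$, so they never cross its boundary.) With this observation supplied, your proof coincides with the paper's.
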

\begin{proof}
(\ref*{prp:nonoptim})~If we could replace the $\kappa + \lambda$ edges of $\Gamma(G)$ that are either drawn completely in the interior of $\mathcal{C}_s$ or cross the boundary of $\mathcal{C}_s$ with the $\mu$ ones that one can draw exclusively in the interior of $\mathcal{C}_s$, then the lemma would trivially follow. However, to do so we need to ensure that this operation introduces neither homotopic parallel edges nor homotopic self-loops. Since the edges that we introduce are \pes, it follows that no homotopic self-loops are introduced. We claim that homotopic parallel edges are not introduced either. In fact, if $e$ and $e'$ are two homotopic parallel edges, then both  must be drawn completely in the interior of $\mathcal{C}_s$, which implies that $e$ and $e'$ are both newly-introduced edges; a contradiction, since we introduce $\mu$ pairwise non-homotopic edges. (\ref*{prp:boundary})~In the exchanging scheme that we just described, we drew $\mu$ edges as chords exclusively in the interior of $\mathcal{C}_s$. Of course, one can also draw the boundary edges of $\mathcal{C}_s$, as long as they do not already exist in $\Gamma(G)$. Since $G$ is optimal, these edges must exist in $\Gamma(G)$.
\end{proof}

\begin{figure}[t]
	\centering
	\begin{minipage}[b]{.18\textwidth}
        \centering
        \includegraphics[width=\textwidth,page=1]{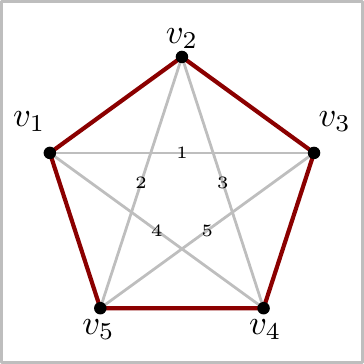}
        \subcaption{~}\label{fig:2_planar_5gon}
    \end{minipage}
    \begin{minipage}[b]{.18\textwidth}
        \centering
        \includegraphics[width=\textwidth,page=2]{images/pre_chords}
        \subcaption{~}\label{fig:2_planar_6gon}
    \end{minipage}
	\begin{minipage}[b]{.18\textwidth}
        \centering
        \includegraphics[width=\textwidth,page=3]{images/pre_chords}
        \subcaption{~}\label{fig:3_planar_6gon}
    \end{minipage}
    \begin{minipage}[b]{.18\textwidth}
        \centering
        \includegraphics[width=\textwidth,page=1]{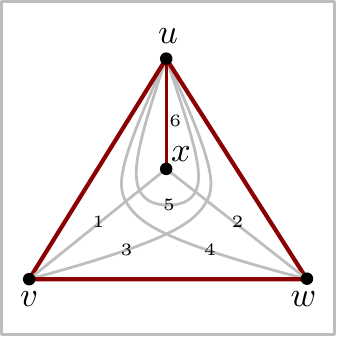}
        \subcaption{~}\label{fig:2_planar_triangle}
    \end{minipage}
    \caption{%
    (a--c)~A \pec $\mathcal{C}_s$ with
	(a)~$s=5$ and five chords with two crossings each,
    (b)~$s=6$ and six chords with at most two crossings each, and
    (c)~$s=6$ and eight chords with at most three crossings each.
    (d)~Configuration used in the proof of Property~\ref{prp:2planar_triangle}.}
    \label{fig:chord_conf}
\end{figure}

Note that in Lemma~\ref{lem:exchange} the $\kappa$ edges that are drawn completely in the interior of the \pec $\mathcal{C}_s$ and the $\lambda$ edges that cross its boundary, are the only edges that have at least one edge-segment within $\mathcal{C}_s$. This means that we can compute $\kappa+\lambda$ by counting the edges that have at least one edge-segment within $\mathcal{C}_s$. In the following sections, there will be some standard cases where we apply Lemma~\ref{lem:exchange}. In most of them, a \pec $\mathcal{C}_s$ on five or six vertices is involved, that is, $5 \leq s \leq 6$. If $s=5$, then one can draw five chords in the interior of $\mathcal{C}_s$ without affecting its $2$- or $3$-planarity; see Figure~\ref{fig:2_planar_5gon}. If $s=6$, then one can draw either six or eight chords in the interior of $\mathcal{C}_s$ without affecting its $2$- or $3$-planarity, respectively; see Figures~\ref{fig:2_planar_6gon} and~\ref{fig:3_planar_6gon}.

\section{Properties of optimal 2- and 3-planar graphs}
\label{sec:properties}

In this section, we investigate properties of optimal $2$- and $3$-planar graphs.We prove that a PMCM-drawing $\Gamma(G)$ of an optimal $2$- or $3$-planar graph $G$ can contain neither true-planar cycles of a certain length nor a pair of edges that cross twice. We use these properties to show that $\Gamma(G)$ is \emph{quasi-planar}, i.e.\ it contains no $3$ pairwise crossing edges. First, we give the following definition. Let $R$ be a simple closed region that contains at least one vertex of $G$ in its interior and one in its exterior. Let $H_1$ ($H_2$) be the subgraph of $G$ whose vertices and edges are drawn entirely in the interior (exterior) of $R$. Note that $H_1$ ($H_2$) is not necessarily an induced subgraph of $G$, since there could be edges that exit and enter $R$. We refer to $H_1$ and $H_2$ as the \emph{compact subgraphs} of $\Gamma(G)$ defined by $R$. The following lemma, used in the proofs for several properties of optimal $2$- and $3$-planar graphs, bounds the number of edges in any compact subgraph of $\Gamma(G)$.

\begin{property}
Let $\Gamma(G)$ be a drawing of an optimal $2$- or $3$-planar graph $G$ and let $H$ be a compact subgraph of $\Gamma(G)$ on $n'$ vertices that is defined by a closed region $R$. If $n'\geq 2$,  $H$ has at most $5n'-6$ edges if $G$ is optimal $2$-planar, and at most $5.5n' - 6.5$ edges if $G$ is optimal $3$-planar. Furthermore, there exists at least one edge of $G$ crossing the boundary of $R$ in $\Gamma(G)$.
\label{prp:connected}
\end{property}
\begin{proof}
We prove this property for the class of $3$-planar graphs; the proof for the class of $2$-planar graphs is analogous.
So, let $\Gamma(G)$ be a drawing of an optimal $3$-planar graph $G=(V,E)$ with $n$ vertices and $m$ edges. Let $H_1$ and $H_2$ be two compact subgraphs of $\Gamma(G)$ defined by a closed region $R$.
For $i=1,2$ let $n_i$ and $m_i$ be the number of vertices and edges of $H_i$.
Suppose that $n_1\geq 2$. In the absence of $\Gamma(H_2)$, drawing $\Gamma(H_1)$ might contain homotopic parallel edges or self-loops. To overcome this problem, we subdivide an edge-segment of the unbounded region of $\Gamma(H_1)$ by adding one vertex.\footnote{One can view this process as replacing $\Gamma(H_2)$ with a single vertex; thus no homotopic parallel edges exist in $\Gamma(H_1)$. Then we move this vertex towards the edge-segment we want to subdivide until it touches it.} The derived graph, say $H'_1$, has $n'_1=n_1+1$ vertices and $m'_1=m_1+1$ edges. Since $H'_1$ has no homotopic parallel edges or self-loops and $n'_1 \geq 3$, it follows that $m'_1 \leq 5.5n'_1-11$, which gives $m_1\leq 5.5n_1-6.5$.

For the second part, assume for the sake of contradiction that no edge of $G$ crosses the boundary of $R$. This implies that $m=m_1+m_2$. We consider first the case where $n_1,n_2 \geq 2$. By the above we have that $m_1\leq 5.5n_1-6.5$and  $m_2 \leq 5.5n_2 - 6.5$. Since $n=n_1+n_2$ and $m=m_1+m_2$, it follows that $m \leq 5.5n -13$; a contradiction to the optimality of~$G$. Since a graph consisting only of two non-adjacent vertices cannot be optimal, it remains to consider the case where either $n_1=1$ or $n_2=1$. W.l.o.g.\ assume that $n_1=1$. Since $n_2 \geq 2$, it follows that $m_2 \leq 5.5n_2 - 6.5$, which implies $m \leq 5.5n - 12$; a contradiction to the optimality of $G$.
\end{proof}

For two compact subgraphs $H_1$ and $H_2$ defined by a closed region $R$, Property~\ref{prp:connected} implies that the drawings of $H_1$ and $H_2$ cannot be ``separable''. In other words, either there exists an edge connecting a vertex of $H_1$ with a vertex of $H_2$, or there exists a pair of edges, one connecting vertices of $H_1$ and the other vertices of $H_2$, that cross in the drawing $\Gamma(G)$.

\begin{property}
In a PMCM-drawing $\Gamma(G)$ of an optimal $2$-planar graph $G$ there is no empty true-planar cycle of length three.
\label{prp:2planar_triangle}
\end{property}
\begin{proof}
Assume to the contrary that there exists an empty true-planar $3$-cycle $\mathcal{C}$ in $\Gamma(G)$ on vertices $u$, $v$ and $w$. Since $G$ is connected and since all edges of $\mathcal{C}$ are true-planar, there is neither a vertex nor an edge-segment in $\mathcal{C}$, i.e., $\mathcal{C}$ is a chordless facial cycle of $\Pi(G)$. This allows us to add a vertex $x$ in its interior and connect $x$ to vertex $u$ by a true-planar edge. Now vertices $u$, $x$, $u$, $w$ and $v$ define a \pec of length five, and we can draw five chords in its interior without violating $2$-planarity and without introducing homotopic parallel edges or self-loops; refer to Figure~\ref{fig:2_planar_triangle}. The derived graph $G'$ has one more vertex than $G$ and six more edges. Hence, if $n$ and $m$ are the number of vertices and edges of $G$ respectively, then $G'$ has $n'=n+1$ vertices and $m'=m+6$ edges. Then $m'=5n'-9$, which implies that $G'$ has more edges than allowed; a contradiction.
\end{proof}

\begin{property}
The number of vertices of an optimal $3$-planar graph $G$ is even.
\label{prp:3planar_even_order}
\end{property}
\begin{proof}
Follows directly from the density bound of $5.5n-11$ of $G$.
\end{proof}

\begin{property}
A PMCM-drawing $\Gamma(G)$ of an optimal $3$-planar graph $G$ has no true-planar cycle of odd length.
\label{prp:3planar_odd_cycle}
\end{property}
\begin{proof}
Let $s \geq 1$ be an odd number and assume to the contrary that there exists a
true-planar $s$-cycle $\mathcal{C}$ in $\Gamma(G)$.  Denote by $G_1$ ($G_2$,
respectively) the subgraph of $G$ induced by the
vertices of $\mathcal{C}$ and the
vertices of $G$ that are in the interior (exterior, respectively) of $\mathcal{C}$ in
$\Gamma(G)$ without the chords of $\mathcal{C}$ that are in the exterior
(interior, respectively) of $\mathcal{C}$ in $\Gamma(G)$. For $i=1,2$, observe that
$G_i$ contains a copy of $\mathcal{C}$. Let $n_i$ and $m_i$ be the number of
vertices and edges of $G_i$ that do not belong to $\mathcal{C}$. Based on graph
$G_i$, we construct graph $G_i'$ by employing two copies of $G_i$ that share
cycle $\mathcal{C}$. Observe that $G_i'$ is $3$-planar, because one copy of
$G_i$ can be embedded in the interior of $\mathcal{C}$, while the other one in
its exterior. Hence, in this embedding, there exist neither homotopic self-loops
nor homotopic parallel edges. Let $n_i'$ and $m_i'$ be the number of vertices
and edges of $G_i'$ that do not belong to $\mathcal{C}$. If $G$ has $n$ vertices
and $m$ edges, then by construction the following equalities hold:
\begin{inparaenum}[(i)]
\item \label{il:1} $n_i'= 2n_i+s$,
\item \label{il:2} $m_i'= 2m_i+s$,
\item \label{il:3} $n = n_1+n_2+s$, and
\item \label{il:4} $m = m_1+m_2+s$.
\end{inparaenum}

We now claim that $n_i' \geq 3$. When $s \geq 3$ the claim clearly holds. Otherwise (i.e., $s=1$), cycle $\mathcal{C}$ is degenerated to a self-loop which must contain at least one vertex in its interior and its exterior. Hence, the claim follows. Property~\ref{prp:3planar_even_order} in conjunction with Eq.(\ref{il:1}) implies that $G_i'$ is not optimal, that is, $m_i'<5.5n_i'-11$. Hence, by Eq.(\ref{il:2}) it follows that $2m_i+s < 5.5(2n_i+s)-11$. Summing up over $i$, we obtain that $2(m_1+m_2+s) < 5.5 (2n_1+2n_2+2s)-22$. Finally, from Eq.(\ref{il:3}) and Eq.(\ref{il:4}) we conclude that $m<5.5n-11$; a contradiction to the optimality of $G$.
\end{proof}

\begin{property}
In a PMCM-drawing $\Gamma(G)$ of an optimal $2$-planar graph $G$ there is no pair of edges that cross twice with each other.
\label{prp:2_planar_cross_twice}
\end{property}
\begin{proof}
Assume to the contrary that $(u,u')$ and $(v,v')$ cross twice in $\Gamma(G)$ at points $c$ and $c'$. By $2$-planarity no other edge of $\Gamma(G)$ crosses $(u,u')$ and $(v,v')$. Let $R_{c,c'}$ be the region defined by the walk along the edge segment of $(u,u')$ between $c$ and $c'$ and the edge segment of $(v,v')$ between $c'$ and $c$. As mentioned in the proof of Lemma~\ref{lem:crossing_twice}, there exist two crossing configurations for $(u,u')$ and $(v,v')$; see Figures~\ref{fig:crossing_twice_reverse} and~\ref{fig:crossing_twice}. In the crossing configuration of Figure~\ref{fig:crossing_twice_reverse}, vertices $v$ and $v'$ are in the interior of $R_{c,c'}$, while vertices $u$ and $u'$ in its exterior. Hence, $u\neq v$ and $u'\neq v'$ hold. We redraw $(u,u')$ and $(v,v')$ by exchanging the middle segments between $c$ and $c'$ and eliminate both crossings $c$ and $c'$ without affecting $2$-planarity; see the dotted edges of Figure~\ref{fig:crossing_twice_reverse}. Note that since $u\neq v$ and $u'\neq v'$ the two edges cannot be homotopic self-loops. Also, no homotopic parallel edges are introduced, since this would imply that at least one of the two edges already exists in $\Gamma(G)$ violating $2$-planarity. Now consider the crossing configuration of Figure~\ref{fig:crossing_twice}. By Lemma~\ref{lem:crossing_twice}, $R_{c,c'}$ has at least one vertex in its interior. By $2$-planarity we have that no edge of $G$ crosses the boundary of $R_{c,c'}$; a contradiction to Property~\ref{prp:connected}.
\end{proof}

\begin{property}
In a PMCM-drawing $\Gamma(G)$ of an optimal $3$-planar graph $G$ there is no pair of edges that cross more than once with each other.
\label{prp:3_planar_cross_twice}
\end{property}
\begin{proof}
We have already noted that a pair of edges cannot cross more than twice in $\Gamma(G)$. Assume to the contrary that two edges $(u,v)$ and $(u',v')$ of $G$ cross (exactly) twice in $\Gamma(G)$. Figures~\ref{fig:3_planar_cross_twice_general_reverse} and~\ref{fig:3_planar_cross_twice_general} illustrate the two possible different crossing configurations. Let $c$ and $c'$ be their crossing points. By Lemma~\ref{lem:crossing_twice} it follows that the region $R_{c,c'}$ that is defined by the walk along the the edge segment of $(u,v)$ between $c$ and $c'$ and the edge segment of $(u',v')$ between $c'$ and $c$ has at least one vertex in its interior. Let $G_{c,c'}$ be the subgraph of $G$ that is drawn completely in the interior of $R_{c,c'}$ in $\Gamma(G)$. By $3$-planarity, there exist at most two edges $e$ and $e'$ that cross $(u,v)$ and $(u',v')$ respectively.

In both crossing configurations we proceed to define two Jordan curves $[u,u']_1$ and $[u,u']_2$ in $\Gamma(G)$ with endpoints $u$ and $u'$, so that their union contains only in its interior the vertices of $G_{c,c'}$; see Figures~\ref{fig:3_planar_cross_twice_general_reverse} and~\ref{fig:3_planar_cross_twice_general}. Curve $[u,u']_1$ emanates from vertex $u$, follows edge $(u,v)$ up to point $c$ and ends at vertex $u'$ by following edge $(u',v')$. Curve $[u,u']_2$ emanates from vertex $u'$, follows edge $(u',v')$ up to point $c$, follows edge $(u,v)$ up to point $c'$, follows edge $(u',v')$ up to point $c$ and ends at vertex $u$ by following edge $(u,v)$.

We now claim that both curves $[u,u']_1$ and $[u,u']_2$ are \pes. By definition, our claim holds when $u \neq u'$. Assume now that $u=u'$. Let $R_{c}$ be the region defined by the walk along the edge-segment of $(u',v')$ from $u'$ to $c$ and the edge-segment of $(u,v)$ from $c$ to $u$ (where $u=u'$). By Lemma~\ref{lem:crossing_adjacent} $R_{c}$ has at least one vertex in its interior and at least one vertex in its exterior. This implies that the first of our curves, i.e. $[u,u']_1$, which encloses region $R_c$ is a \pe.

Now, assume to the contrary that $[u,u']_2$ is not a \pe. Then $u=u'$. Let $R_{c'}$ be the region defined by the walk along the edge-segment of $(u',v')$ from $u'$ to $c$, the edge-segment of $(u,v)$ from $c$ to $c'$, the edge-segment of $(u',v')$ from $c'$ to $c$ and the edge-segment of $(u,v)$ from $c$ to $u$ (where $u=u'$). Since $G_{c,c'}$ lies in the interior of $R_{c'}$ and $[u,u']_2$ is not a \pe, region $R_{c'}$ has no vertices in its exterior; refer to Figure~\ref{fig:3_planar_cross_twice_special_reverse}. Note that in Figure~\ref{fig:3_planar_cross_twice_special} we illustrate the same case assuming $u=u'=v=v'$. By Property~\ref{prp:3planar_odd_cycle} \pe $[u,u']_1$ must be crossed (as otherwise it is a true-planar self-loop in $\Gamma(G)$). This implies that there exists at least one edge that crosses $[u,u']_1$. This edge must also cross $(u,v)$ or $(u',v')$ and is therefore either edge $e$ or edge $e'$. Suppose w.l.o.g.~that $[u,u']_1$ is edge $e$; see Figure~\ref{fig:3_planar_cross_twice_special_reverse}. Let $p$ be the crossing point of $e$ and $(u,v)$. Now edge $(u,v)$ has exactly three crossings. We redraw $(u,v)$ and $(u',v')$ by exchanging their edge-segments between their common endpoint $u$ and their first crossing $c$, so as to eliminate $c$. Let $[u,v]$ and $[u',v']$ be the new curves in $\Gamma(G)$. Since $G$ is crossing minimal, it follows that at least one of $[u,v]$ or $[u',v']$ must be homotopic parallel to an existing edge in $\Gamma(G)$. Since $(u,v)$ has already three crossings in $\Gamma(G)$, \pe $[u',v']$ cannot exist in $\Gamma(G)$, as otherwise it would introduce a fourth crossing on $(u,v)$. Hence, \pe $[u,v]$ must exist in $\Gamma(G)$ and this is edge $e'$. Now we focus on edge $e$. Edge $e$ has an endpoint in the interior of $R_c$ and crosses $[u,u']_2$. However, since $R_{c'}$ has no vertices in its exterior, and edges $(u,v)$ and $(u',v')$ have already three crossings, edge $e$ must end at vertex $u=u'$. In this case, edges $e$ and $(u,v)$ have $u$ as a common endpoint and cross at point $p$. Hence, region $R_p$ defined by the walk along the edge segment of $(u,v)$ from $u$ to $p$ and the edge segment of $e$ from $p$ to $u$ contains at least one vertex in its interior. However, $R_p$ is contained in the exterior of $R_{c'}$, and therefore there exists at least one vertex in the exterior of $R_{c'}$, which is a contradiction. Hence, $[u,u']_2$ is a \pe.

\begin{figure}[t]
	\centering
    \begin{minipage}[b]{.18\textwidth}
        \centering
		\includegraphics[width=\textwidth,page=2]{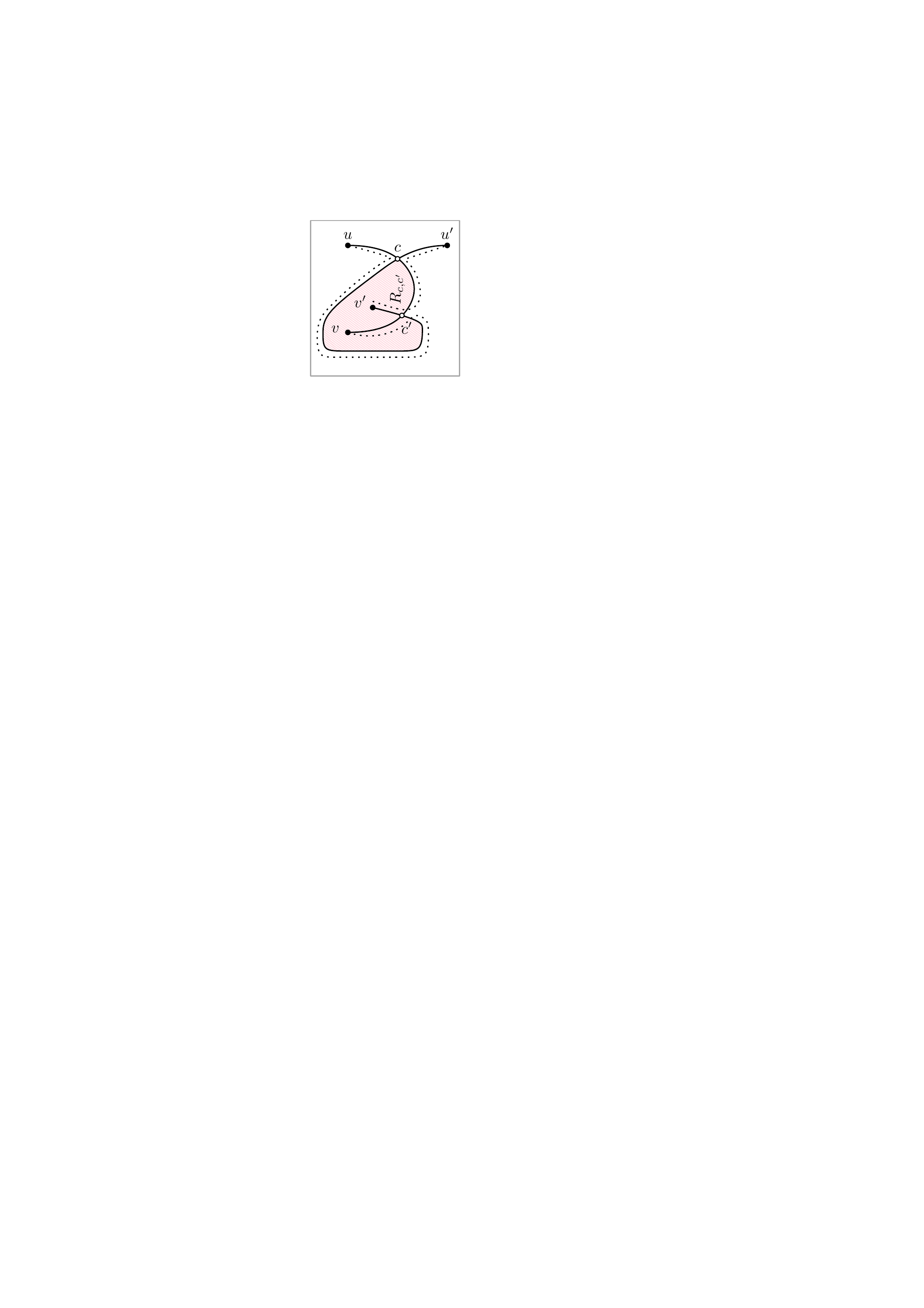}
        \subcaption{~}\label{fig:3_planar_cross_twice_general_reverse}
    \end{minipage}
     \begin{minipage}[b]{.18\textwidth}
        \centering
		\includegraphics[width=\textwidth,page=1]{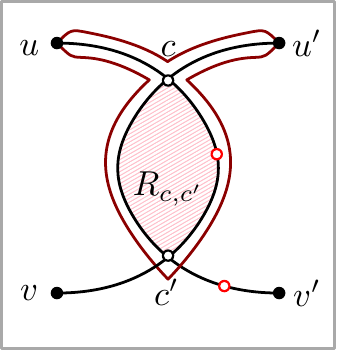}
        \subcaption{~}\label{fig:3_planar_cross_twice_general}
    \end{minipage}
	\begin{minipage}[b]{.18\textwidth}
        \centering
		\includegraphics[width=\textwidth,page=3]{images/prop_2planar_cross_twice}
        \subcaption{~}\label{fig:3_planar_cross_twice_special_reverse}
    \end{minipage}
	\begin{minipage}[b]{.18\textwidth}
        \centering
        \includegraphics[width=\textwidth,page=2]{images/prop_3planar_cross_twice}
        \subcaption{~}\label{fig:3_planar_cross_twice_special}
    \end{minipage}
    \caption{%
    Configurations used in Property~\ref{prp:3_planar_cross_twice}.
    \label{fig:3_planar_one_crossing_2}}
\end{figure}

We proceed by removing from $\Gamma(G)$ all vertices and edges of $G_{c,c'}$, edges $e$, $(u,v)$, $(u',v')$ as well as the edge that crosses $(u',v')$, if any. Then, the cycle formed by \pes $[u,u']_1$ and $[u,u']_2$ becomes empty and this allows us to follow an approach similar to the one described in the proof of Lemma~\ref{lem:exchange}. More precisely, we add in the interior of this \pec two vertices $x$ and $y$, such that $u$, $x$ and $y$ form a path (in this order) that is completely drawn in its interior. The union of this path with $[u,u']_1$ and $[u,u']_2$ defines in the derived drawing a new (non-simple) \pec of length six. In its interior one can embed $8$ additional edges as in Figure~\ref{fig:3_planar_6gon}. Summarizing, if $G_{c,c'}$ has $n_{c,c'}$ vertices and $m_{c,c'}$ edges, we removed from $G$ exactly $n_{c,c'}$ vertices and at most $m_{c,c'}+4$ edges and this allowed us to introduce two new vertices and $10$ edges without affecting $3$-planarity. Let $G'$ be the derived $3$-planar graph. The fact that $G'$ contains neither homotopic parallel edges nor homotopic self-loops can be argued as in the proof of Lemma~\ref{lem:exchange}.(\ref{prp:nonoptim}). If $G$ has $n$ vertices and $m$ edges, then  $G'$ has $n' = n -n_{c,c'}+2$ vertices and $m'$ edges, where $m' \geq m -m_{c,c'}+ 6$ edges. We distinguish two cases depending on whether $G_{c,c'}$ has one or more vertices. If $n_{c,c'}=1$, then $m_{c,c'}=0$. Also, $G'$ has exactly one more vertex than $G$. Since $G$ is optimal, by Property~\ref{prp:3planar_even_order} it follows that $G'$ cannot be optimal. Hence, $m' < 5.5n' - 11$, which implies that $m < 5.5n - 11.5$; a contradiction to the density of $G$. On the other hand if $n_{c,c'}\geq 2$, by Property~\ref{prp:connected} we have that $m_{c,c'}\leq 5.5n_{c,c'}-6.5$, as $G_{c,c'}$ is a compact subgraph of $\Gamma(G)$ defined by $R_{c,c'}$. This gives $m'\geq 5.5n'-9.5$, that is $G'$ has more edges than allowed; a clear contradiction.
\end{proof}

Now assume that $\Gamma(G)$ contains three mutually crossing edges $(u,v)$, $(u',v')$ and $(u'',v'')$. In Figures~\ref{fig:quasi_1}--\ref{fig:quasi_4} we have drawn four possible crossing configurations. First, we drew $(u,v)$ and $(u',v')$ w.l.o.g.\ as vertical and horizontal line-segments that cross at point $c$. Then, we placed vertex $u''$ and drew the first segment of its edge crossing w.l.o.g.~the edge-segment of $(u',v')$ between $u'$ and $c$ at point $c'$ from above. So the middle segment of $(u'',v'')$ starts at $c'$ and has to end at edge $(u,v)$, either from left or right, and either in the lower or in the upper segment. This gives rise to the four configurations demonstrated in Figures~\ref{fig:quasi_1}--\ref{fig:quasi_4}, which we examine in more details in the following. Note that the endpoints of the three edges are not necessarily distinct (e.g., in Figure~\ref{fig:quasi_1_1} we illustrate the case where $u=u''$ and $v'=v''$ for the crossing configuration of Figure~\ref{fig:quasi_1}). For each crossing configuration, one can draw curves connecting the endpoints of $(u,v)$, $(u',v')$ and $(u'',v'')$ (red colored in Figures~\ref{fig:quasi_1}--\ref{fig:quasi_4}), which define a region that has no vertices in its interior. This region fully surrounds $(u,v)$ and $(u',v')$ and the two segments of $(u'',v'')$ that are incident to vertices $u''$ and $v''$.

\begin{figure}[tb]
    \centering
	\begin{minipage}[b]{.19\textwidth}
        \centering
        \includegraphics[width=\textwidth,page=1]{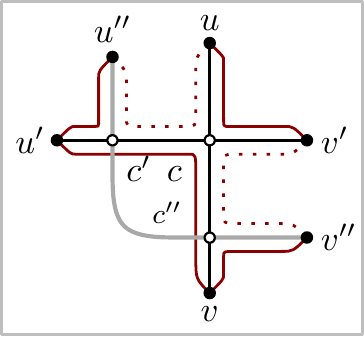}
        \subcaption{~}\label{fig:quasi_1}
	\end{minipage}
    \begin{minipage}[b]{.19\textwidth}
        \centering
        \includegraphics[width=\textwidth,page=2]{images/prop_quasi_planar}
        \subcaption{~}\label{fig:quasi_2}
    \end{minipage}
    \begin{minipage}[b]{.19\textwidth}
        \centering
        \includegraphics[width=\textwidth,page=3]{images/prop_quasi_planar}
        \subcaption{~}\label{fig:quasi_3}
    \end{minipage}
    \begin{minipage}[b]{.19\textwidth}
        \centering
        \includegraphics[width=\textwidth,page=4]{images/prop_quasi_planar}
        \subcaption{~}\label{fig:quasi_4}
    \end{minipage}
    \begin{minipage}[b]{.19\textwidth}
        \centering
        \includegraphics[width=\textwidth,page=5]{images/prop_quasi_planar}
        \subcaption{~}\label{fig:quasi_1_1}
    \end{minipage}
    \caption{%
    Crossing configurations for three mutually crossing edges.
    Potential edges are drawn solid red.
    Jordan curves that can either be potential edges or homotopic self-loops are drawn dotted red.}
    \label{fig:prp_quasi}
\end{figure}

\begin{claim}
Each of the crossing configurations of Figures~\ref{fig:quasi_2}-\ref{fig:quasi_4} induces at least $5$ \pes.
\label{clm:1}
\end{claim}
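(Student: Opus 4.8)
The plan is to exhibit the five \pes directly as arcs on the boundary of the vertex-free region that, as observed just before the claim, surrounds the three mutually crossing edges in each of Figures~\ref{fig:quasi_2}--\ref{fig:quasi_4}. Concretely, I would take as candidate \pes the arcs joining consecutive endpoints of $(u,v)$, $(u',v')$ and $(u'',v'')$ in the cyclic order in which these endpoints appear around the surrounding region; there are six such arcs, drawn red in the figures, and together they bound a region with no vertex in its interior. Since by Properties~\ref{prp:2_planar_cross_twice} and \ref{prp:3_planar_cross_twice} the three edges pairwise cross exactly once, and since no two of them are homotopic parallel, these arcs are well-defined non-self-crossing Jordan curves, so the whole task is to decide which of them are genuine \pes.

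The easy arcs are those joining two \emph{distinct} endpoints: each such arc connects two different vertices without crossing itself and is therefore a \pe by definition (these are the solid red curves). Hence an arc can fail to be a \pe only when its two consecutive endpoints are the \emph{same} vertex $a$, i.e.\ when the two edges owning these endpoints are adjacent at $a$; these are the dotted red curves. For such a coincident pair I would invoke Lemma~\ref{lem:crossing_adjacent}: the two edges meeting at $a$ cross, so the region they bound from $a$ up to their first crossing contains a vertex in its interior and a vertex in its exterior. Whenever the little loop that the arc closes off at $a$ captures such a vertex, the arc is a non-trivial self-loop and hence a \pe; it fails to be a \pe only in the degenerate situation where this loop encloses no vertex at all.

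It therefore suffices to bound the number of degenerate (vertex-free) self-loops. Here I would use that each pair of the three edges shares at most one endpoint — a second shared endpoint would make the two edges homotopic parallel — together with a determination of the cyclic order of $u,v,u',v',u'',v''$ around the surrounding region in each of the three configurations. Reading off this order, I would check that in each of Figures~\ref{fig:quasi_2}--\ref{fig:quasi_4} at most one of the six arcs can degenerate, so that the remaining five either join distinct endpoints or enclose a vertex by Lemma~\ref{lem:crossing_adjacent}, and are thus \pes. This is exactly the point that separates these configurations from the one of Figure~\ref{fig:quasi_1}, where both outer arms of $(u'',v'')$ may coincide with endpoints of the other two edges simultaneously (Figure~\ref{fig:quasi_1_1}), collapsing two arcs at once and leaving only four \pes.

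The main obstacle is precisely this per-configuration bookkeeping. For each of the three crossing patterns I have to pin down the cyclic positions of $u''$ and $v''$ relative to $u,v,u',v'$ and then rule out, using the no-double-crossing and no-homotopic-parallel constraints, that two coincident pairs can be cyclically consecutive at the same time; equivalently, whenever a second coincidence is forced I must argue that its arc still encloses a vertex guaranteed by Lemma~\ref{lem:crossing_adjacent} and hence remains a \pe. Once this finite case analysis is carried out, the count of at least five \pes follows immediately, since the first two paragraphs dispose of all the non-degenerate arcs uniformly.
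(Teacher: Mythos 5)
Your plan is essentially the paper's own argument: the paper proves the claim by observing that the five solid red arcs in Figures~\ref{fig:quasi_2}--\ref{fig:quasi_4} are \pes, where arcs with distinct endpoints are \pes by definition and a coincident-endpoint arc would otherwise be a vertex-free self-loop enclosing the first-crossing region, contradicting Lemma~\ref{lem:crossing_adjacent} --- exactly your two-step scheme. The only difference is presentational: the paper reads off from the figures which single arc per configuration is possibly degenerate (dotted) rather than carrying out the cyclic-order bookkeeping you describe, so your proposal is correct and follows the same route.
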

\begin{proof}
Observe that all solid-drawn red curves of Figures~\ref{fig:quasi_2}--\ref{fig:quasi_4} are indeed \pes: If for example $[u',u'']$ of Figure~\ref{fig:quasi_2} is not a \pe, then $u'=u''$ and $[u',u'']$ is a self-loop with no vertex either in its interior or in its exterior; a contradiction to Lemma~\ref{lem:crossing_adjacent}.
\end{proof}

\begin{claim}
The crossing configuration of Figure~\ref{fig:quasi_1} induces at least four \pes.
\label{clm:2}
\end{claim}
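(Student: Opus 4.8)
The plan is to mirror the argument of Claim~\ref{clm:1}. I would first list the red curves that connect the cyclically consecutive endpoints of the three mutually crossing edges $(u,v)$, $(u',v')$ and $(u'',v'')$ and that bound the region of Figure~\ref{fig:quasi_1} having no vertex in its interior. These split into the \emph{solid} curves, which I will show are always \pes, and the two \emph{dotted} curves $[u,u'']$ and $[v',v'']$, which may degenerate under the identifications $u=u''$ and $v'=v''$ of Figure~\ref{fig:quasi_1_1}.

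For each solid curve $[a,b]$ I would argue exactly as in Claim~\ref{clm:1}: if $[a,b]$ is not a \pe, then $a=b$ and $[a,b]$ is a homotopic self-loop. The point I would stress is that for a solid curve the two edges to which $a$ and $b$ belong meet their common crossing $c$ \emph{first} from both endpoints; hence, when $a=b$, the curve $[a,b]$ is homotopic to the boundary of the region $R_c$ of Lemma~\ref{lem:crossing_adjacent}. Since $R_c$ has a vertex in its interior and one in its exterior, $[a,b]$ cannot be a homotopic self-loop, a contradiction. As there are four solid curves, this already yields at least four \pes and establishes the claim.

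It remains to explain why, unlike in Claim~\ref{clm:1}, a fifth curve is \emph{not} guaranteed, and this is where I expect the only real difficulty. For the two dotted curves the common crossing of the corresponding edge pair is reached, from each endpoint, only \emph{after} a crossing with the third edge; consequently the outer connecting curve is no longer homotopic to a region $R_c$ of Lemma~\ref{lem:crossing_adjacent}, and when its endpoints coincide it may enclose an empty corner of the drawing, i.e.\ it may be a genuine homotopic self-loop, exactly as in Figure~\ref{fig:quasi_1_1}. The main obstacle is thus the homotopy bookkeeping: one must check, from the fixed geometry of Figure~\ref{fig:quasi_1}, that the four solid curves coincide up to homotopy with Lemma~\ref{lem:crossing_adjacent} regions while the two dotted ones do not. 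This finite case distinction is precisely what separates the ``at least four'' bound obtained here from the ``at least five'' bound of Claim~\ref{clm:1}.
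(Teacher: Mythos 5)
Your argument establishes only three of the four \pes, and the one it misses is exactly the delicate one. In the configuration of Figure~\ref{fig:quasi_1} the four curves in question are $[u',u'']$, $[u,v']$, $[u',v]$ and $[v,v'']$, while the two that may degenerate are $[u,u'']$ and $[v',v'']$ (this is confirmed by Claim~\ref{clm:3}). Your criterion --- that the two edges reach their common crossing before any crossing with the third edge, from both endpoints --- holds for $[u',u'']$, $[u,v']$ and $[v,v'']$, but it \emph{fails} for $[u',v]$: the crossing $c'$ of $(u',v')$ with $(u'',v'')$ lies between $u'$ and $c$, and the crossing $c''$ of $(u,v)$ with $(u'',v'')$ lies between $v$ and $c$ (this placement of $c''$ is forced in Figure~\ref{fig:quasi_1}, since Claim~\ref{clm:3} needs $R_{c''}$ to decompose as $\mathcal{T}$ together with the loop $[u,u'']$, i.e.\ the segment of $(u,v)$ from $u$ to $c''$ must pass through $c$). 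So from both $u'$ and $v$ the third edge is crossed first, and by your own dichotomy $[u',v]$ would land in the same ``not guaranteed'' class as the dotted curves. You would then obtain only three \pes --- not enough for the claim, and not enough for Corollary~\ref{crl:crl} and Claim~\ref{clm:3}, which need the four \pes to close up into a cycle of length $4$ on $u,v',v,u'$ once $[u,u'']$ and $[v',v'']$ degenerate.

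What you are missing is that the homotopy class of the connecting curve is a choice, not something dictated by first crossings. One draws $[u',v]$ hugging the edge-segment of $(u',v')$ from $u'$ to $c$ and then the edge-segment of $(u,v)$ from $c$ to $v$, letting it cross the middle segment of $(u'',v'')$ twice; a \pe is allowed to cross edges of $\Gamma(G)$, and this is consistent with the fact that $(u'',v'')$ crosses the boundary of the induced cycle in the proof of Property~\ref{prp:2_planar_quasi}. With this drawing, if $u'=v$ and $[u',v]$ were a homotopic self-loop, its interior would coincide, up to a vertex-free strip, with the region $R_c$ of Lemma~\ref{lem:crossing_adjacent} for the adjacent crossing pair $(u',v')$, $(u,v)$; since Lemma~\ref{lem:crossing_adjacent} concerns only crossings between the two edges of the pair, the intermediate crossings with $(u'',v'')$ are irrelevant, and the lemma yields the contradiction. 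The genuine reason $[u,u'']$ and $[v',v'']$ cannot be handled the same way is different from the one you give: keeping the surrounding region vertex-free forces them to hug walks along \emph{three} edge-segments (turning at $c$ and $c'$, respectively at $c''$ and $c$), so in the degenerate case their interiors are only the corner pieces of the corresponding Lemma~\ref{lem:crossing_adjacent} regions with the triangle $\mathcal{T}$ removed; the vertex guaranteed by the lemma may then lie in $\mathcal{T}$, which is precisely the situation exploited in Claim~\ref{clm:3}.
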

\begin{proof}
As in Claim~\ref{clm:1} we can prove that $[u',u'']$, $[u,v']$, $[u',v]$ and $[v,v'']$ are \pes.
\end{proof}

\begin{crl}
The configuration of Figure~\ref{fig:quasi_1} induces a \pec~$\mathcal{C}$ of length $\geq 4$. Each of the configurations of Figures~\ref{fig:quasi_2}--\ref{fig:quasi_4} induces a \pec~$\mathcal{C}$ of length $\geq 5$.
\label{crl:crl}
\end{crl}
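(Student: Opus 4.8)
The plan is to read off the desired \pec directly from the boundary of the empty region that was already exhibited, using Claims~\ref{clm:1} and~\ref{clm:2} to certify that enough of its bounding curves are \pes. Recall that for each configuration we observed, before the claims, that the red curves joining the endpoints of $(u,v)$, $(u',v')$ and $(u'',v'')$ bound a region surrounding the three edges whose interior contains no vertex of $G$. Hence condition~(ii) in the definition of a \pec holds automatically; it remains to check condition~(i) and to count the bounding \pes.

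For condition~(i), note that the red curves are drawn as the boundary of this region, so two consecutive ones meet only at a common endpoint of the three edges and no two of them cross. Traversing the boundary therefore produces a closed walk $v_1,\dots,v_s,v_1$ whose curves are exactly the edges $[v_i,v_{i+1}]$ and $[v_1,v_s]$ demanded by the definition, and the walk is a \pec provided each curve on it is a \pe. By Claims~\ref{clm:1} and~\ref{clm:2} the solid red curves are \pes, and there are at least five of them in Figures~\ref{fig:quasi_2}--\ref{fig:quasi_4} and at least four in Figure~\ref{fig:quasi_1}. A dotted red curve $[a,b]$ is either a \pe as well (when $a\neq b$, or $a=b$ with a vertex on both sides), in which case it merely lengthens the walk, or else a homotopic self-loop; in the latter case, since the interior of the region is empty, it encloses no vertex and can be deleted from the boundary walk, which stays closed and keeps an empty interior. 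After discarding these self-loops, every remaining curve is a \pe and the resulting cycle has length at least the number of solid red \pes, i.e.\ at least $5$ for Figures~\ref{fig:quasi_2}--\ref{fig:quasi_4} and at least $4$ for Figure~\ref{fig:quasi_1}, which is exactly the claimed bound.

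I expect the only real work to lie in the bookkeeping across the degenerate identifications of endpoints (as in Figure~\ref{fig:quasi_1_1}): one has to confirm, configuration by configuration, that deleting the homotopic self-loops still leaves the solid \pes chained into a single closed cycle rather than splitting the walk, and that no two retained curves coincide. Since each coincidence of endpoints only deletes a self-loop or shortens a bounding curve, these checks are routine once the generic boundary of three mutually crossing edges is drawn, and they reduce to inspecting Figures~\ref{fig:quasi_1}--\ref{fig:quasi_4}.
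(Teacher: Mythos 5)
Your proposal is correct and follows exactly the route the paper intends: the paper states this corollary without proof, treating it as immediate from the observation that the red curves bound a vertex-free region together with Claims~\ref{clm:1} and~\ref{clm:2}, and your argument simply makes that implicit reasoning explicit (including the absorption of degenerate dotted curves as homotopic self-loops, which is the same mechanism the paper later uses in Claim~\ref{clm:3}).
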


\begin{claim}
In the case where the crossing configuration of Figure~\ref{fig:quasi_1} induces exactly four \pes, there exists at least one vertex in the interior of region $\mathcal{T}$ defined by the walk along the edge segment of $(u,v)$ between $c$ and $c''$, the edge segment of $(u'',v'')$ between $c''$ and $c'$ and the edge segment of $(u',v')$ between $c'$ and $c$.
\label{clm:3}
\end{claim}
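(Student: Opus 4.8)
The plan is to argue by contradiction, assuming that the interior of $\mathcal{T}$ contains no vertex. First I would make precise what ``induces exactly four \pes'' means. Since Claim~\ref{clm:2} already guarantees the four \pes $[u',u'']$, $[u,v']$, $[u',v]$ and $[v,v'']$, inducing no more than four forces the two remaining candidate curves $[u,u'']$ and $[v',v'']$ to fail to be \pes. By the definition of a \pe this is possible only if $u=u''$ and $v'=v''$; moreover, each degenerate curve is then a homotopic self-loop, i.e.\ empty of vertices on one of its two sides.

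Next I would exploit these two identifications to produce adjacent crossing pairs at the corners of $\mathcal{T}$. With $u=u''$ the edges $(u,v)$ and $(u'',v'')$ are incident to the common vertex $u$ and cross at $c''$, while with $v'=v''$ the edges $(u',v')$ and $(u'',v'')$ are incident to the common vertex $v'$ and cross at $c'$. By Property~\ref{prp:2_planar_cross_twice} (resp.\ Property~\ref{prp:3_planar_cross_twice}) each such pair crosses exactly once, so $c''$ and $c'$ are the \emph{first} crossings along the respective edges starting from $u$ and from $v'$, and Lemma~\ref{lem:crossing_adjacent} applies to both. Tracing the two edge-segments that bound the region delivered by Lemma~\ref{lem:crossing_adjacent} at $v'$, one checks that this region is exactly $\mathcal{T}$ enlarged by the sub-region flanking it on the side of $v'$; symmetrically, the region obtained at $u$ is $\mathcal{T}$ enlarged by the sub-region flanking it on the side of $u$. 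In particular, each of these two regions is guaranteed to contain a vertex in its interior.

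The crux is then to confine at least one of these two vertices to $\mathcal{T}$ itself. Under the standing assumption that $\mathcal{T}$ is empty, such a vertex would have to lie in one of the two flanking sub-regions. Here I would invoke the homotopic self-loops produced in the first step: the empty side of $[u,u'']$ certifies that the sub-region flanking $\mathcal{T}$ towards $u$ carries no vertex, while the empty side of $[v',v'']$ does the same for the sub-region flanking $\mathcal{T}$ towards $v'$. With both flanking sub-regions empty and $\mathcal{T}$ assumed empty, Lemma~\ref{lem:crossing_adjacent} would place a vertex in a region just shown to contain none, yielding the desired contradiction.

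I expect the main obstacle to be precisely this confinement step: verifying that the empty side of each self-loop is the \emph{flanking} sub-region adjacent to $\mathcal{T}$ rather than the unbounded outside, and carrying this out uniformly over the admissible placements of the crossing $c''$ along $(u,v)$. Should the direct region bookkeeping become unwieldy, I would fall back on a crossing-minimality argument in the spirit of Lemmas~\ref{lem:crossing_twice} and~\ref{lem:crossing_adjacent}: if $\mathcal{T}$ were empty, the twice-crossed detour of the edge $(u'',v'')$ between $c'$ and $c''$ could be rerouted across the empty triangle so as to remove both crossings without introducing homotopic parallel edges or self-loops, contradicting the crossing-minimality of the PMCM-drawing $\Gamma(G)$. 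Either way, the emptiness of $\mathcal{T}$ is untenable and the claim follows.
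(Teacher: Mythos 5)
Your proposal is correct and takes essentially the same route as the paper's proof: you deduce from the failure of $[u,u'']$ and $[v',v'']$ to be potential edges that $u=u''$ and $v'=v''$ with both curves homotopic self-loops, apply Lemma~\ref{lem:crossing_adjacent} at the adjacent crossing to obtain a vertex in the corner region, and confine that vertex to $\mathcal{T}$ using the empty side of the self-loop --- exactly the paper's observation that $R_{c''}$ is the union of the interior of $\mathcal{T}$ and the region enclosed by $[u,u'']$. The confinement step you flag as the main obstacle is resolved just as you anticipate (the other side of the loop contains the remaining endpoints $v$, $u'$, $v'$, so the empty side is the flanking one), and since the paper needs only the corner at $u$, your symmetric application of Lemma~\ref{lem:crossing_adjacent} at $v'$ and the rerouting fallback are superfluous.
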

\begin{proof}
By Claim~\ref{clm:2}, $[u,u'']$, and $[v',v'']$ must be homotopic self-loops; see Figure~\ref{fig:quasi_1_1}. In this case, edges $(u,v)$ and $(u'',v'')$ are incident to a common vertex, namely $u=u''$ and cross. By Lemma~\ref{lem:crossing_adjacent} region $R_{c''}$ (red-shaded in Figure~\ref{fig:quasi_1_1}) has at least one vertex in its interior. Since $R_{c''}$ is the union of the interior of $\mathcal{T}$ and the homotopic self-loop $[u,u'']$, $\mathcal{T}$ contains  at least one vertex in its interior.
\end{proof}

\begin{property}
A PMCM-drawing $\Gamma(G)$ of an optimal $2$-planar graph $G$ is quasi-planar.
\label{prp:2_planar_quasi}
\end{property}
\begin{proof}
Assume to the contrary that there exist three mutually crossing edges $(u,v)$, $(u',v')$ and $(u'',v'')$ in $\Gamma(G)$; see Figure~\ref{fig:prp_quasi}. By Corollary~\ref{crl:crl}, there is a \pec $\mathcal{C}$ of length at least $4$. By $2$-planarity, there is no other edge crossing $(u,v)$, $(u',v')$ or $(u'',v'')$. Hence, the only edges that are drawn in the interior of $\mathcal{C}$ are $(u,v)$ and $(u',v')$, while $(u'',v'')$ is the only edge that crosses the boundary of $\mathcal{C}$.

First, consider the case where $\mathcal{C}$ is of length $\geq 5$. Since we can draw at least five chords completely in the interior of $\mathcal{C}$ as in Figure~\ref{fig:2_planar_5gon} or \ref{fig:2_planar_6gon} without violating its $2$-planarity, it follows by Lemma~\ref{lem:exchange}.(\ref{prp:nonoptim}) (for $\kappa+\lambda=3$ and $\mu \geq 5$) that $G$ is not optimal; a contradiction. Finally, consider the case where $\mathcal{C}$ is of length four. In this case, we have the crossing configuration of Figure~\ref{fig:quasi_1}. By Claim~\ref{clm:3} there is at least one vertex in the interior of region $\mathcal{T}$. More in general, let $G_{\mathcal{T}}$ be the compact subgraph of $G$ that is completely drawn in the interior of region $\mathcal{T}$. Since edges $(u,v)$, $(u',v')$ and $(u'',v'')$ have already two crossings, it follows that no edge of $G$ crosses the boundary of $\mathcal{T}$; a contradiction to Property~\ref{prp:connected}.
\end{proof}

\begin{property}
A PMCM-drawing $\Gamma(G)$ of an optimal $3$-planar graph $G$ is quasi-planar.
\label{prp:3_planar_quasi}
\end{property}
\begin{proof}
As in the case of $2$-planar optimal graphs, assume that there exist three mutually crossing edges $(u,v)$, $(u',v')$ and $(u'',v'')$ in $\Gamma(G)$. By Corollary~\ref{crl:crl}, there is always a \pec $\mathcal{C}$ of length at least $4$. Since $(u,v)$, $(u',v')$ and $(u'',v'')$ have already two crossings each, there exist at most three other edges that cross $(u,v)$, $(u',v')$ or $(u'',v'')$. Hence, the only edges that are drawn in the interior of $\mathcal{C}$ are $(u,v)$ and $(u',v')$, while $(u'',v'')$ and at most three other edges of $\Gamma(G)$ cross the boundary of $\mathcal{C}$.  We distinguish three cases depending on whether $\mathcal{C}$ has length $6$, $5$ or $4$.

Consider first the case where $\mathcal{C}$ has length six. Since we can draw eight chords completely in the interior of $\mathcal{C}$ as in Figure~\ref{fig:3_planar_6gon} without deviating $3$-planarity, it follows by Lemma~\ref{lem:exchange}.(\ref{prp:nonoptim}) (for $\kappa+\lambda=6$ and $\mu=8$) that $G$ is not optimal; a contradiction.

Consider now the case where $\mathcal{C}$ has length five. We claim that at least one boundary edge of $\mathcal{C}$ does not exist in $\Gamma(G)$. In order to prove the claim, we consider the four crossing configurations of Figure~\ref{fig:prp_app_quasi} separately. In Figure~\ref{fig:quasi_app_1}, if \pe $[u',v]$ is an edge in $\Gamma(G)$, then it crosses twice $(u'',v'')$, contradicting Property~\ref{prp:3_planar_cross_twice}. For Figures~\ref{fig:quasi_app_2}--\ref{fig:quasi_app_4}, if all red drawn curves belong to $\Gamma(G)$, then $(u'',v'')$ crosses $(u,v)$, $(u',v')$ and at least two of the boundary edges of $\mathcal{C}$, violating $3$-planarity. Hence, our claim follows. We proceed by removing edges $(u,v)$, $(u',v')$ and $(u'',v'')$ and any other edge crossing the boundary of $\mathcal{C}$ from $\Gamma(G)$, and we add five chords in the interior of $\mathcal{C}$, along with one ``missing'' boundary edge of $\mathcal{C}$. Let $G'$ be the derived graph. Note that, we removed at most six edges and added at least six. This implies that $G'$ is also optimal. However, $\mathcal{C}$ is a true-planar $5$-cycle in the drawing of $G'$, contradicting Property~\ref{prp:3planar_odd_cycle}.

\begin{figure}[tb]
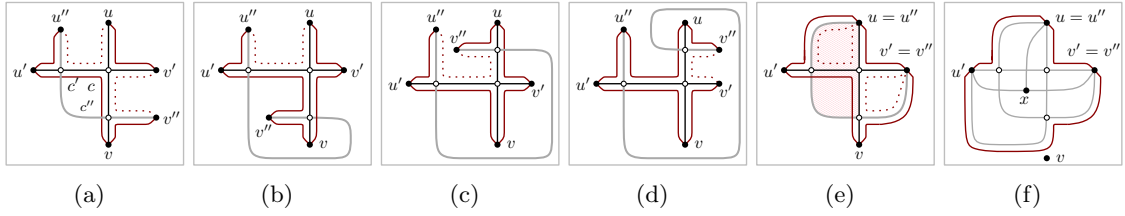

	\centering
	\begin{minipage}[b]{.16\textwidth}
        \centering
        \includegraphics[width=\textwidth,page=1]{images/prop_quasi_planar}
        \subcaption{~}\label{fig:quasi_app_1}
	\end{minipage}
    \begin{minipage}[b]{.16\textwidth}
        \centering
        \includegraphics[width=\textwidth,page=2]{images/prop_quasi_planar}
        \subcaption{~}\label{fig:quasi_app_2}
    \end{minipage}
    \begin{minipage}[b]{.16\textwidth}
        \centering
        \includegraphics[width=\textwidth,page=3]{images/prop_quasi_planar}
        \subcaption{~}\label{fig:quasi_app_3}
    \end{minipage}
    \begin{minipage}[b]{.16\textwidth}
        \centering
        \includegraphics[width=\textwidth,page=4]{images/prop_quasi_planar}
        \subcaption{~}\label{fig:quasi_app_4}
    \end{minipage}
    \begin{minipage}[b]{.16\textwidth}
        \centering
        \includegraphics[width=\textwidth,page=5]{images/prop_quasi_planar}
        \subcaption{~}\label{fig:quasi_app_1_1}
    \end{minipage}
    \begin{minipage}[b]{.16\textwidth}
        \centering
        \includegraphics[width=\textwidth,page=6]{images/prop_quasi_planar}
        \subcaption{~}\label{fig:3_planar_quasi}
    \end{minipage}
    \caption{%
    Crossing configurations for three mutually crossing edges.
    Potential edges are drawn solid red.
    Jordan curves that can either be potential edges or homotopic self-loops are drawn dotted red.}
    \label{fig:prp_app_quasi}
\end{figure}

It remains to consider the case where $\mathcal{C}$ is of length four. By Claim~\ref{clm:3} there is at least one vertex in the interior of region $\mathcal{T}$. As in the proof of Property~\ref{prp:2_planar_quasi}, we denote by $G_{\mathcal{T}}$ the subgraph of $G$ completely drawn in region $\mathcal{T}$. $G_{\mathcal{T}}$ is a compact subgraph of $\Gamma(G)$ and by Property~\ref{prp:connected}, it follows that if $G_{\mathcal{T}}$ has $n_{\mathcal{T}}\geq 2$ vertices, then it has $m_{\mathcal{T}}\leq 5.5n_{\mathcal{T}}-6.5$ edges (note that if $n_{\mathcal{T}}=1$, then $m_{\mathcal{T}}=0$). We replace $G_{\mathcal{T}}$ with one vertex, say $x$, we keep edges $(u,v)$, $(u',v')$ and $(u'',v'')$ and remove any edge crossing $(u,v)$, $(u',v')$ or $(u'',v'')$ in $\Gamma(G)$. We redraw the edge-segment of $(u,v)$ incident to $v$ so as to be incident to $u'$ (without introducing new  crossings). Finally, we add edges $(x,u)$, $(x,u')$ and $(x,v')$; see Figure~\ref{fig:3_planar_quasi}. The derived graph $G'$ has $n'=n-n_{\mathcal{T}}+1$ vertices and at least $m'\geq m-m_{\mathcal{T}}$ edges, where $n$ and $m$ are the number of vertices and edges of $G$. For $n_{\mathcal{T}}\geq 2$, we have that $m'\geq 5.5n'-10$, i.e., $G'$ has more edges than allowed. In the case where $n_{\mathcal{T}}=1$ and $m_{\mathcal{T}}=0$, it follows that $G'$ has the same number of edges as $G$ and is therefore optimal. However, \pes $[u,v']$, $[u',u'']$ and $[u',v'']$ can be added in $\Gamma(G')$ (if not present) forming thus a true-planar $3$-cycle; a contradiction to Property~\ref{prp:3planar_odd_cycle}.
\end{proof}

We next present a refinement of the notion of \pes. In particular, we focus on two main categories of \pes that we will heavily use in Sections~\ref{sec:2planar} and \ref{sec:3planar}. Consider a pair of vertices $u$ and $v$ of $G$ that are not necessarily distinct. We say that $u$ and $v$ form a \emph{corner pair} if and only if an edge $(u,u')$ crosses an edge $(v,v')$ for some $u'$ and $v'$ in $\Gamma(G)$; see Figure~\ref{fig:corner_pair}. Let $c$ be the crossing point of $(u,u')$ and $(v,v')$. Then, any Jordan curve $[u,v]$ joining vertices $u$ and $v$ induces a region $R_{u,v}$ that is defined by the walk along the edge-segment of $(u,u')$ from $u$ to $c$, the edge segment of $(v,v')$ from $c$ to $v$ and the curve $[u,v]$ from $v$ to $u$. We call $[u,v]$ \emph{corner edge} with respect to~$(u,u')$ and $(v,v')$ if and only if $R_{u,v}$ has no vertices of $\Gamma(G)$ in its interior.

\begin{figure}[tb]
	\centering
    \begin{minipage}[b]{.18\textwidth}
        \centering
        \includegraphics[width=\textwidth,page=1]{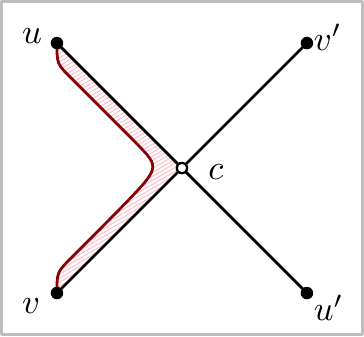}
        \subcaption{~}\label{fig:corner_pair}
    \end{minipage}
    \begin{minipage}[b]{.18\textwidth}
        \centering
        \includegraphics[width=\textwidth,page=2]{images/prop_corner}
        \subcaption{~}\label{fig:corner_pair_same}
    \end{minipage}
    \begin{minipage}[b]{.18\textwidth}
        \centering
        \includegraphics[width=\textwidth,page=1]{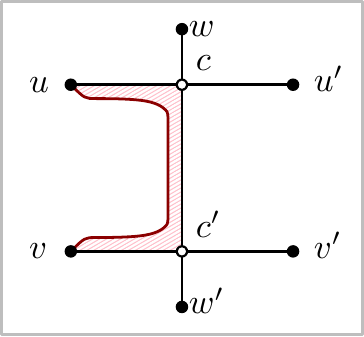}
        \subcaption{~}\label{fig:parallel_pair}
    \end{minipage}
	\begin{minipage}[b]{.18\textwidth}
        \centering
        \includegraphics[width=\textwidth,page=2]{images/prop_parallel}
        \subcaption{~}\label{fig:parallel_pair_same}
    \end{minipage}
	\begin{minipage}[b]{.18\textwidth}
        \centering
        \includegraphics[width=\textwidth,page=3]{images/prop_parallel}
        \subcaption{~}\label{fig:parallel_pair_homotopic}
    \end{minipage}
    \caption{%
    (a-b)~vertices $u$ and $v$ form a corner pair;
    (c-d)~vertices $u$ and $v$ form a \side pair;
    (e)~at least one of the two potential \ses exists.}
    \label{fig:crossing_confs}
\end{figure}

\begin{property}
In a PMCM-drawing $\Gamma(G)$ of an optimal $k$-planar graph $G$ any corner edge $[u,v]$ is a potential edge.
\label{prp:corner}
\end{property}
\begin{proof}
By the definition of potential edges, the property holds when $u \neq v$. Consider now the case where $u=v$. In this case $[u,v]$ is a self-loop; see Figure~\ref{fig:corner_pair_same}. If the property does not hold, then it follows that $[u,v]$ is a self-loop with no vertices either in its interior or in its exterior. However, this contradicts Lemma~\ref{lem:crossing_adjacent}, and the property holds.
\end{proof}

We say that vertices $u$ and $v$ form a \emph{\side pair} if and only if there exist edges $(u,u')$ and $(v,v')$ for some $u'$ and $v'$ such that they both cross a third edge $(w,w')$ in $\Gamma(G)$ and additionally $(u,u') \neq (v,v')$; see Figure~\ref{fig:parallel_pair} or \ref{fig:parallel_pair_same}. Let $c$ and $c'$ be the crossing points of $(u,u')$ and $(v,v')$ with $(w,w')$, respectively. Assume w.l.o.g.~that $c$ and $c'$ appear in this order along $(w,w')$ from vertex $w$ to vertex $w'$. Also assume that the edge-segment of $(u,u')$ between $u$ and $c$ is on the same side of edge $(w,w')$ as the edge-segment of $(v,v')$ between $v$ and $c'$; refer to Figure~\ref{fig:parallel_pair}. Then, any Jordan curve $[u,v]$ joining vertices $u$ and $v$ induces a region $R_{u,v}$ that is defined by the walk along the edge-segment of $(u,u')$ from $u$ to $c$, the edge segment of $(w,w')$ from $c$ to $c'$, the edge segment of $(v,v')$ from $c'$ to $v$ and the curve $[u,v]$ from $v$ to $u$. We call $[u,v]$ \emph{\se} w.r.t.~$(u,u')$ and $(v,v')$  if and only if $R_{u,v}$ has no vertices of $\Gamma(G)$ in its interior. Since by Properties~\ref{prp:2_planar_quasi} and \ref{prp:3_planar_quasi} edges $(u,u')$ and $(v,v')$ cannot cross with each other (as they both cross $(w,w')$), it follows that region $R_{u,v}$ is well-defined. Symmetrically we define region $R_{u',v'}$ and \se $[u',v']$ with respect to~$(u,u')$ and $(v,v')$.

\begin{property}
In a PMCM-drawing $\Gamma(G)$ of an optimal $k$-planar graph $G$ with $k \in \{2,3\}$ at least one of the \ses $[u,v]$, $[u',v']$ is a \pe.
\label{prp:parallel}
\end{property}
\begin{proof}
Before giving the proof, note that since edges $(u,u')$, $(v,v')$ and $(w,w')$ do not mutually cross, curves $[u,v]$ and $[u',v']$ cannot cross themselves. Now, for a proof by contradiction, assume that neither $[u,v]$ nor $[u',v']$ are potential edges. This implies that $u=v$, $u'=v'$ and both $[u,v]$ and $[u',v']$ are self-loops that have no vertices in their interiors or their exteriors. Figure~\ref{fig:parallel_pair_homotopic} illustrates the case where both $[u,v]$ and $[u',v']$ are self-loops with no vertices in their interiors; the other cases are similar. It is not hard to see that $(u,u')$ and $(v,v')$ are homotopic \ses; a contradiction.
\end{proof}

\noindent We say that $(u,u')$ and $(v,v')$ are \emph{\sa} if and only if both \ses $[u,v]$ and $[u',v']$ are \pes.

\section{Characterization of optimal 2-planar graphs}
\label{sec:2planar}

By using the properties we proved in Section~\ref{sec:properties}, in this section we examine some more structural properties of optimal $2$-planar graphs in order to derive their characterization (see Theorem~\ref{thm:2-characterization}).

\begin{lemma}
Let $\Gamma(G)$ be a PMCM-drawing of an optimal $2$-planar graph $G$. Any edge that is crossed twice in  $\Gamma(G)$ is a chord of a true-planar $5$-cycle in $\Gamma(G)$.
\label{lem:2_planar_small_faces}
\end{lemma}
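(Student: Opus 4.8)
The plan is to start from a twice-crossed edge $e=(u,v)$ and reconstruct the pentagonal face that witnesses $e$ as a chord. By $2$-planarity $e$ is crossed exactly twice, say at $c_1$ and $c_2$ by edges $e_1$ and $e_2$. Property~\ref{prp:2_planar_cross_twice} rules out $e_1=e_2$ (a pair of edges cannot cross twice), and if $e_1$ and $e_2$ crossed each other, then $e,e_1,e_2$ would be three mutually crossing edges, contradicting quasi-planarity (Property~\ref{prp:2_planar_quasi}); hence $e_1$ and $e_2$ are distinct and do not cross. I would orient $e$ from $u$ to $v$ so that $c_1$ precedes $c_2$, and denote the endpoints of $e_1$ (resp.\ $e_2$) lying on the two sides of $e$.

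Next I would produce the five boundary curves of the candidate pentagon as \pes. Each endpoint of $e$ together with an endpoint of the edge crossing $e$ next to it forms a corner pair, so by Property~\ref{prp:corner} the relevant corner edges incident to $u$ and to $v$ are \pes. Moreover $e_1$ and $e_2$ both cross $e$, so their endpoints form a \side pair, and by Property~\ref{prp:parallel} at least one of the two \ses joining the far endpoints of $e_1$ and $e_2$ is a \pe. The decisive point is how these curves close up: if one of the two \ses fails to be a \pe, then (exactly as in the proof of Property~\ref{prp:parallel}) the two corresponding endpoints coincide, i.e.\ $e_1$ and $e_2$ share an endpoint $w$ on one side of $e$, and the curves close into a \pec of length exactly five having $e$ as a chord; otherwise they would close into a \pec of length six.

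The step I expect to be the main obstacle is excluding the length-six case, i.e.\ showing that $e_1$ and $e_2$ must share an endpoint. Here I would build the hexagonal \pec and apply Lemma~\ref{lem:exchange} with $\mu=6$ (Figure~\ref{fig:2_planar_6gon}) together with Property~\ref{prp:connected}: since $e$, $e_1$ and $e_2$ are already crossed the maximal number of times allowed by $2$-planarity, one can control which edges reach the vertex-free interior of the hexagon, and conclude that optimality would force either more chords than $2$-planarity permits (via Lemma~\ref{lem:exchange}.(\ref{prp:nonoptim})) or a vertex inside a region that no edge can cross into (contradicting Property~\ref{prp:connected}). Either way this contradicts the optimality of $G$, leaving only the length-five \pec.

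Finally, with $w$ the shared endpoint and $x,y$ the far endpoints of $e_1,e_2$, the vertices $u,w,v,x,y$ bound a \pec $\mathcal{C}_5$ having $e$, $e_1$ and $e_2$ as interior chords. Applying Lemma~\ref{lem:exchange} with $\mu=5$ (Figure~\ref{fig:2_planar_5gon}), optimality forbids $\kappa+\lambda<5$, while $2$-planarity and quasi-planarity forbid more than five mutually compatible chords in a vertex-free pentagon, so $\kappa+\lambda=5$ and Lemma~\ref{lem:exchange}.(\ref{prp:boundary}) shows that all five boundary edges of $\mathcal{C}_5$ exist in $\Gamma(G)$. Since the interior of $\mathcal{C}_5$ then already realizes the maximal $2$-planar pattern of five chords each crossed twice, no further edge can cross a boundary edge into the interior without violating $2$-planarity; hence the five boundary edges are true-planar and form the desired true-planar $5$-cycle, of which $e$ is a chord.
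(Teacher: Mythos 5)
Your setup and the first two-thirds of the argument track the paper's proof closely: the corner edges via Property~\ref{prp:corner}, the \side pair via Property~\ref{prp:parallel}, the dichotomy between a \pec of length six (killed by Lemma~\ref{lem:exchange}.(\ref{prp:nonoptim}) with $\kappa+\lambda\leq 5$ and $\mu=6$) and a \pec of length five where the shared endpoint arises from the failed \se being a homotopic self-loop, and the application of Lemma~\ref{lem:exchange}.(\ref{prp:boundary}) with $\kappa+\lambda=\mu=5$ to force the five boundary edges to exist. All of that is sound and is exactly the paper's route.

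The genuine gap is in your last step. You claim that once $\kappa+\lambda=5$, ``the interior of $\mathcal{C}_5$ already realizes the maximal $2$-planar pattern of five chords each crossed twice,'' so no edge can cross a boundary edge without violating $2$-planarity. That is false: the actual drawing inside the pentagon is \emph{not} the five-chord pattern of Figure~\ref{fig:2_planar_5gon}. Only $(u,v)$, $e_1$ and $e_2$ are chords; the two remaining edges counted in $\kappa+\lambda$ (call them $f$ and $f'$, crossing $e_1$ and $e_2$ respectively) each have only \emph{one} crossing inside the pentagon and, since the interior is vertex-free, a segment of $f$ may perfectly well enter through a boundary edge. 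In that configuration $f$ has two crossings in total and the boundary edge has one, so nothing about $2$-planarity (or quasi-planarity) is violated, and ``the boundary edges exist'' does not yet mean they are true-planar. This residual case is precisely what the paper's final paragraph handles, and it needs a different tool: one shows that a segment of $f$ outside the pentagon together with the endpoints $v',v''$ of the crossed boundary edge bounds a \pec on three vertices; then one removes the five edges with segments inside $\mathcal{C}_5$ and redraws them as the five chords of Figure~\ref{fig:2_planar_5gon}, obtaining a graph with the same edge count (hence still optimal) in which that $3$-cycle becomes an empty true-planar triangle --- contradicting Property~\ref{prp:2planar_triangle}, not $2$-planarity. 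Without this surgery-plus-empty-triangle argument (or a substitute), your proof establishes only that the pentagon's boundary edges exist, not that $\mathcal{C}_5$ is a \emph{true-planar} $5$-cycle, which is the actual content of the lemma.
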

\begin{proof}
Let $(u,v)$ be an edge of $G$ that is crossed twice in $\Gamma(G)$ by edges $(u',v')$ and $(u'',v'')$ at points $c$ and $c'$, respectively. Note that, by Property~\ref{prp:2_planar_cross_twice} edges  $(u',v')$ and $(u'',v'')$ are not identical. We assume w.l.o.g.~that $c$ and $c'$ appear in this order along $(u,v)$ from vertex $u$ to vertex $v$. We also assume that the edge-segment of $(u',v')$ between $u'$ and $c$ is on the same side of edge $(u,v)$ as the edge-segment of $(u'',v'')$ between $u''$ and $c'$; refer to Figure~\ref{fig:2_general}. By Property~\ref{prp:corner} corner edges $[u,u']$, $[u,v']$, $[v,u'']$ and $[v,v'']$ are \pes. By Property~\ref{prp:parallel} at least one of \ses $[u',u'']$ and $[v',v'']$ is a \pe. Assume w.l.o.g.~that $[v',v'']$ is a \pe.

\begin{figure}[t]
	\centering
    \begin{minipage}[b]{.18\textwidth}
        \centering
        \includegraphics[width=\textwidth,page=1]{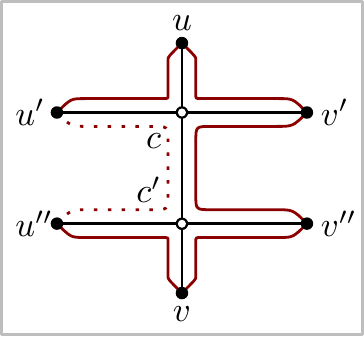}
        \subcaption{~}\label{fig:2_general}
    \end{minipage}
    \begin{minipage}[b]{.18\textwidth}
        \centering
        \includegraphics[width=\textwidth,page=2]{images/2planar_2_crossing}
        \subcaption{~}\label{fig:2_6gon}
    \end{minipage}
	\begin{minipage}[b]{.18\textwidth}
        \centering
        \includegraphics[width=\textwidth,page=3]{images/2planar_2_crossing}
        \subcaption{~}\label{fig:2_5gon}
    \end{minipage}
		\begin{minipage}[b]{.18\textwidth}
        \centering
        \includegraphics[width=\textwidth,page=4]{images/2planar_2_crossing}
        \subcaption{~}\label{fig:2_5gon_extend}
    \end{minipage}
			\begin{minipage}[b]{.18\textwidth}
        \centering
        \includegraphics[width=\textwidth,page=5]{images/2planar_2_crossing}
        \subcaption{~}\label{fig:2_5gon_final}
    \end{minipage}
    \caption{%
    Different configurations used in Lemma~\ref{lem:2_planar_small_faces}.}
    \label{fig:2_planar_potential_parallel}
\end{figure}

First consider the case that $[u',u'']$ is also a \pe; see Figure~\ref{fig:2_6gon}. In this case, vertices $u$, $v'$, $v''$, $v$, $u''$ and $u'$ define a \pec $\mathcal{C}$ on six vertices (shaded in gray in Figure~\ref{fig:2_6gon}). Edges $(u,v)$, $(u',v')$ and $(u'',v'')$ are drawn in the interior of $\mathcal{C}$, and there exist at most two other edges that cross $(u',v')$ or $(u'',v'')$. In total there exist at most five edges that have an edge-segment within $\mathcal{C}$. However, in the interior of $\mathcal{C}$ one can draw six chords as in Figure~\ref{fig:2_planar_6gon} without deviating $2$-planarity. By Lemma~\ref{lem:exchange}.(\ref{prp:nonoptim}) for $\kappa+\lambda\leq5$ and $\mu=6$, it follows that $G$ is not optimal; a contradiction.

To complete the proof, it remains to consider the cases where $[u',u'']$ is not a \pe; see Figure~\ref{fig:2_5gon}. In this case, $[u',u'']$ is a homotopic self-loop (hence, the red-shaded region of Figure~\ref{fig:2_5gon} contains no vertices in its interior). Vertices $u$, $v'$, $v''$, $v$ and $u'$ define a \pec $\mathcal{C}$ on five vertices (shaded in gray in Figure~\ref{fig:2_5gon}). However, in the interior of $\mathcal{C}$ one can draw five chords as in Figure~\ref{fig:2_planar_5gon} without deviating $2$-planarity. By Lemma~\ref{lem:exchange}.(\ref{prp:boundary}), for $\kappa+\lambda\leq 5$ and $\mu=5$, it follows that all boundary edges of $\mathcal{C}$ exist in $\Gamma(G)$. Furthermore, $\kappa+\lambda=5$ must hold, which implies that there exist two edges (other than $(u,v)$), say $e$ and $e'$, that cross $(u',v')$ and $(u'',v'')$ respectively.

If $\mathcal{C}$ is a true-planar $5$-cycle in $\Gamma(G)$ the lemma holds. If it is not, then at least one of edges $e$ or $e'$ crosses a boundary edge of $\mathcal{C}$. Suppose w.l.o.g.~that edge $e$ crosses $(v',v'')$ of $\mathcal{C}$ at point $p$ and let $w$ and $w'$ be the endpoints of $e$ (other cases are similar). Observe that $e$ already has two crossings in $\Gamma(G)$. By $2$-planarity, either the edge-segment of $(w,w')$ between $w$ and $p$ or the one between $w'$ and $p$ is drawn completely in the exterior of $\mathcal{C}$. Suppose w.l.o.g.~that this edge-segment is the one between $w$ and $p$. Then vertices $v'$, $w$ and $v''$ define a \pec $\mathcal{C}'$ on three vertices; see Figure~\ref{fig:2_5gon_extend}. We proceed as follows: We remove edges $(u,v)$, $(u',v')$, $(u'',v'')$, $e$ and $e'$ and replace them with five chords drawn in the interior of $\mathcal{C}$ (as in Figure~\ref{fig:2_5gon_final}). The derived graph $G'$ has the same number of edges as $G$. However, $\mathcal{C'}$ becomes a true-planar $3$-cycle in $G'$, contradicting Property~\ref{prp:2planar_triangle}.
\end{proof}

By Lemma~\ref{lem:2_planar_small_faces}, any edge of $G$ that is crossed twice in $\Gamma(G)$ is a chord of a true-planar $5$-cycle. So, it remains to consider edges of $G$ that have only one crossing in $\Gamma(G)$. In fact, the following lemma states that there are no such edges in $\Gamma(G)$.

\begin{lemma}
Let $\Gamma(G)$ be a PMCM-drawing of an optimal $2$-planar graph $G$. Then, every edge of $\Gamma(G)$ is either true-planar or has exactly two crossings.
\label{lem:2_planar_one_crossing}
\end{lemma}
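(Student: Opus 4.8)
The plan is to argue by contradiction. Suppose some edge $(u,v)$ is crossed exactly once, say by $(u',v')$ at a point $c$; by Property~\ref{prp:2_planar_cross_twice} these two edges meet only at $c$, and by Property~\ref{prp:2_planar_quasi} no third edge crosses both of them. I would split the argument according to how often $(u',v')$ itself is crossed, since Lemma~\ref{lem:2_planar_small_faces} already controls twice-crossed edges.

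The first case is when $(u',v')$ is crossed twice. Then Lemma~\ref{lem:2_planar_small_faces} makes $(u',v')$ a chord of a true-planar $5$-cycle $\mathcal{P}$ whose interior is filled by exactly five pairwise non-identical chords drawn as in Figure~\ref{fig:2_planar_5gon}; in that configuration each chord is crossed exactly twice and all five lie in one crossing component. Since $(u,v)$ crosses $(u',v')$, it belongs to $\mathcal{X}((u',v'))$ and is therefore one of these five chords, hence crossed twice, contradicting our assumption. So it remains to handle the case where $(u',v')$ is also crossed exactly once, i.e.\ the crossing at $c$ is \emph{isolated}.

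In this remaining case I would exploit the corner structure of the crossing $c$. By Property~\ref{prp:corner} the four corner edges $[u,u']$, $[u',v]$, $[v,v']$ and $[v',u]$ are \pes, and, drawn tightly along the edge-segments they accompany, they do not cross one another and bound a \pec $\mathcal{C}_4$ whose only interior edge-segments are those of the two diagonals $(u,v)$ and $(u',v')$ (the degenerate situations in which some of $u,u',v,v'$ coincide are ruled out, or reduced to an interior/exterior vertex, by Lemma~\ref{lem:crossing_adjacent}). Thus $\kappa+\lambda=2$, and since a quadrilateral admits only its two homotopically distinct diagonals as chords, also $\mu=2$. Lemma~\ref{lem:exchange}(\ref{prp:boundary}) then forces all four sides of $\mathcal{C}_4$ to exist in $\Gamma(G)$, so that $\{u,u',v,v'\}$ carries a copy of $K_4$ drawn with the single crossing $c$ inside the otherwise empty region $\mathcal{C}_4$.

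The hard part will be turning this isolated-crossing gadget into a contradiction, and this is where I expect the real work to lie. The gadget is \emph{locally saturated}: its interior is empty, all four sides are present, and no further non-homotopic chord or endpoint-preserving uncrossing can be added inside $\mathcal{C}_4$ (any reroute of a diagonal to an adjacent corner duplicates an existing side, and any second interior arc is homotopic to an existing one), so the naive moves used elsewhere are blocked precisely because the $K_4$ is complete. At the same time it is \emph{globally deficient}, carrying fewer edges than the pentagram configuration an optimal drawing packs around a crossing. My plan is therefore to make this deficiency quantitative rather than local: I would either (i) compare the gadget against the $5n-10$ bound by a global count over the crossing components and the faces of $\Pi(G)$, forcing $m<5n-10$ and contradicting optimality; or (ii) enlarge $\mathcal{C}_4$ outward along the now-existing true-planar sides until the surrounding region admits an application of Property~\ref{prp:connected} or of Lemma~\ref{lem:exchange}(\ref{prp:nonoptim}) with strict slack $\mu>\kappa+\lambda$, which would again contradict optimality, or would expose an empty true-planar $3$-cycle contradicting Property~\ref{prp:2planar_triangle}. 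Making one of these global steps precise — in particular controlling what lies immediately outside the four sides of $\mathcal{C}_4$ — is the main obstacle.
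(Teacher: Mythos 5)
Your reduction is sound and matches the paper's own opening moves: the paper likewise first argues that a once-crossed edge must be crossed by a once-crossed edge (it extracts from the proof of Lemma~\ref{lem:2_planar_small_faces} that the crossing component of any twice-crossed edge consists entirely of twice-crossed edges), and then forms the \pec $\mathcal{C}$ on the four endpoints $u,v',v,u'$, concluding via Lemma~\ref{lem:exchange} that all four boundary edges exist as true-planar edges, i.e.\ a $K_4$ drawn with one crossing inside an otherwise empty quadrilateral. But from that point on your proposal has a genuine gap: you stop at the gadget and only sketch two vague escape routes (a global count over crossing components, or an outward enlargement of $\mathcal{C}$ requiring control of what lies outside), neither of which is developed, and you explicitly flag the outside-control problem as the obstacle. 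The actual contradiction needs no control of the outside at all, and the idea you are missing is precisely the one that defeats your ``locally saturated'' worry: saturation only blocks adding \emph{edges}; it does not block adding \emph{vertices}.

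The paper's finish is purely local. Delete the diagonal $(u',v')$; the quadrilateral is now split by the true-planar edge $(u,v)$ into two empty true-planar $3$-cycles. Into each of them plug the pattern from the proof of Property~\ref{prp:2planar_triangle} (Figure~\ref{fig:2_planar_triangle}): add a vertex $x$ inside the triangle together with a true-planar edge from $x$ to one corner, so that the walk $u,x,u,w,v$ becomes a non-simple \pec of length five, in whose interior five chords can be drawn $2$-planarly. Each triangle thus gains one vertex and six edges, so in total one edge is removed and two vertices and twelve edges are added, all without homotopic parallel edges or self-loops. The resulting $2$-planar graph $G'$ has $n'=n+2$ vertices and $m'=m+11=5n-10+11=5n'-9$ edges, exceeding the $5n'-10$ density bound for $2$-planar graphs --- a contradiction not merely with optimality but with $2$-planarity itself, which is why nothing outside $\mathcal{C}$ needs to be examined. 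Note also that your route (ii) hunch about ``exposing an empty true-planar $3$-cycle contradicting Property~\ref{prp:2planar_triangle}'' cannot be applied literally: after deleting $(u',v')$ the graph is no longer optimal, so that Property no longer applies as a black box; one must instead repeat its quantitative edge-count, which is exactly what the paper does.
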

\begin{proof}
As shown in the proof of Lemma~\ref{lem:2_planar_small_faces}, for any edge $e$ of $G$ that is crossed twice in $\Gamma(G)$, both edges that cross $e$ also have two crossings in $\Gamma(G)$. So, the crossing component $\mathcal{X}(e)$ consists exclusively of edges with two pairwise crossings. This implies that if edges $(u,v)$ and $(u',v')$ cross in $\Gamma(G)$ and $(u,v)$ has only one crossing, then the same holds for $(u',v')$; see Figure~\ref{fig:2_planar_one_crossing_before}. Vertices $u$, $v'$, $v$ and $u'$ define a \pec $\mathcal{C}$ on four vertices (gray-shaded in Figure~\ref{fig:2_planar_one_crossing_before}). Since edges $(u,v)$ and $(u',v')$ have only one crossing each, the boundary of $\mathcal{C}$ exists in $\Gamma(G)$ and are true-planar edges. We proceed by removing edge $(u',v')$. Now $\mathcal{C}$ is split into two true-planar $3$-cycles; see Figure~\ref{fig:2_planar_one_crossing_after}. In both of them, we plug the $2$-planar pattern of Figure~\ref{fig:2_planar_triangle}. In total, we removed one edge and added two vertices and a total of $12$ edges, without creating any homotopic parallel edges or self-loops. Hence, if $G$ has $n$ vertices and $m$ edges, the derived graph $G'$ is $2$-planar and has $n'=n+2$ vertices and $m'=m+11$ edges. Hence $m'=5n'-9$, i.e. $G'$ has more edges than allowed; a contradiction.
\end{proof}

\begin{figure}[t]
	\centering
    \begin{minipage}[b]{.18\textwidth}
        \centering
        \includegraphics[width=\textwidth,page=1]{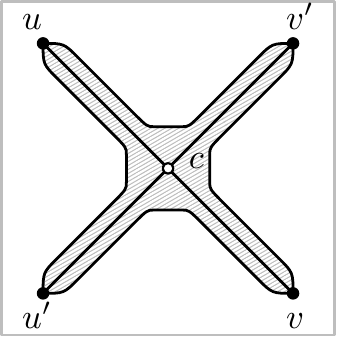}
        \subcaption{~}\label{fig:2_planar_one_crossing_before}
    \end{minipage}
    \begin{minipage}[b]{.18\textwidth}
        \centering
        \includegraphics[width=\textwidth,page=2]{images/2planar_one_crossing}
        \subcaption{~}\label{fig:2_planar_one_crossing_after}
    \end{minipage}
    \caption{%
    Different configurations used in~Lemma~\ref{lem:2_planar_one_crossing}.}
    \label{fig:2_planar_one_crossing}
\end{figure}

\begin{lemma}
The true-planar skeleton $\Pi(G)$ of a PMCM-drawing $\Gamma(G)$ of an optimal $2$-planar graph is connected.
\label{prp:2_planar_skeleton_connected}
\end{lemma}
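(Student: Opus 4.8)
The plan is to show that no edge of $G$ joins two distinct connected components of the true-planar skeleton $\Pi(G)$; since an optimal graph is connected (as already used implicitly, e.g.\ in the proof of Property~\ref{prp:2planar_triangle}), this immediately forces $\Pi(G)$ to consist of a single component. First I would classify the edges of $G$ by invoking Lemma~\ref{lem:2_planar_one_crossing}: every edge of $\Gamma(G)$ is either true-planar or crossed exactly twice. For a true-planar edge the two endpoints are trivially joined by a true-planar edge, so they lie in the same component of $\Pi(G)$.

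The substantive step handles the crossed edges. Let $(u,v)$ be an edge crossed twice in $\Gamma(G)$. By Lemma~\ref{lem:2_planar_small_faces}, $(u,v)$ is a chord of a true-planar $5$-cycle $\mathcal{F}$, whose five boundary edges are all true-planar. Hence both $u$ and $v$ are vertices of $\mathcal{F}$ and are mutually reachable along the true-planar boundary of $\mathcal{F}$; they therefore belong to a common component of $\Pi(G)$. In particular, a vertex incident to a crossed edge is never isolated in $\Pi(G)$, since it already lies on the true-planar boundary of such a $5$-cycle.

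Combining the two cases, every edge of $G$ — whether true-planar or crossed — has both endpoints in the same connected component of $\Pi(G)$. Equivalently, the partition of $V(G)$ induced by the components of $\Pi(G)$ admits no edge of $G$ running between distinct parts. As $G$ is connected, there can be only one such part, and $\Pi(G)$ is connected.

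I do not anticipate a genuine obstacle here, since the content is essentially packaged into the already-established fact that every crossed edge lives inside a true-planar pentagon. The only point requiring minor care is that $\mathcal{F}$ may be a non-simple facial cycle; nevertheless its vertices remain mutually reachable through its true-planar boundary, so the component argument is unaffected. As a consistency check, a hypothetical splitting of $\Pi(G)$ into two parts with no connecting edge of $G$ would also contradict Property~\ref{prp:connected}: one could enclose one part in a closed region whose boundary is crossed by no edge of $G$, which is exactly the separated configuration that Property~\ref{prp:connected} rules out.
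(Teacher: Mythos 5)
Your proof is correct, but it is organized differently from the paper's. The paper argues by contradiction: assuming $\Pi(G)$ has a proper component $H$, it invokes Property~\ref{prp:connected} to produce either an edge of $G$ joining a vertex of $H$ to a vertex of $G \setminus H$, or a pair of crossing edges, one with endpoints in $H$ and the other with endpoints in $G \setminus H$; in both cases Lemma~\ref{lem:2_planar_small_faces} yields a true-planar $5$-cycle whose boundary connects the two sides within $\Pi(G)$, a contradiction. You instead argue directly: by Lemma~\ref{lem:2_planar_one_crossing} every edge is true-planar or crossed exactly twice, by Lemma~\ref{lem:2_planar_small_faces} every crossed edge is a chord of a true-planar $5$-cycle, hence every edge of $G$ has both endpoints in a single component of $\Pi(G)$, and connectivity of $G$ finishes the argument. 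Your route trades Property~\ref{prp:connected} for the weaker fact that $G$ is connected (which the paper itself already assumes, e.g.\ in the proof of Property~\ref{prp:2planar_triangle}); this buys you a uniform, case-free argument and avoids the somewhat delicate step of enclosing a skeleton component in a closed region, which the paper's application of Property~\ref{prp:connected} implicitly requires. One caveat: your closing ``consistency check'' is the only shaky part. Two parts of $\Pi(G)$ with no edge of $G$ between them need not be separable by a region whose boundary no edge crosses, because edges internal to one part may still cross edges internal to the other --- this is precisely the second case the paper's proof must handle. That remark should be dropped, but your main argument does not rely on it.
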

\begin{proof}
Assume to the contrary that $\Pi(G)$ is not connected and let $H$ be a connected component of $\Pi(G)$.
By  Property~\ref{prp:connected} either there exists an edge $(u,v)$ with $u \in H$ and $v \in G \setminus H$, or two crossing edges $e_1 \in H$ and $e_2 \in G \setminus H$. In the first case, $(u,v)$ is not a true-planar edge. By Lemma~\ref{lem:2_planar_small_faces}, there exists a true-planar $5$-cycle with chord $(u,v)$ connecting $u$ to $v$ in $\Pi(G)$; a contradiction.
In the second case, edges $e_1$ and $e_2$ belong to the same crossing component and by Lemma~\ref{lem:2_planar_small_faces}, there exists a true-planar $5$-cycle with $e_1$ and $e_2$ as chords, therefore connecting their endpoints in $\Pi(G)$; a contradiction.
\end{proof}

\begin{lemma}
The true-planar skeleton $\Pi(G)$ of a PMCM-drawing $\Gamma(G)$ of an optimal $2$-planar graph $G$ contains only faces of length $5$, each of which contains $5$ crossing edges in $\Gamma(G)$.
\label{lem:2_planar_faces}
\end{lemma}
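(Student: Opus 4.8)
The plan is to combine the two preceding lemmas to pin down the structure of every face of $\Pi(G)$. By Lemma~\ref{lem:2_planar_one_crossing}, every edge of $\Gamma(G)$ is either true-planar or crossed exactly twice, and by Lemma~\ref{lem:2_planar_small_faces} every twice-crossed edge is a chord of a true-planar $5$-cycle. Since by Lemma~\ref{prp:2_planar_skeleton_connected} the skeleton $\Pi(G)$ is connected, I would argue that each face $\mathcal{F}$ of $\Pi(G)$ is filled exactly by one such configuration. Concretely, I would first show that every face has length exactly $5$: by Property~\ref{prp:2planar_triangle} there are no empty true-planar $3$-cycles, so no face can be a triangle, and a counting argument via Lemma~\ref{lem:exchange} should rule out faces of length $\neq 5$. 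For a face of length $s$, the crossing edges drawn in its interior together with the boundary edges must match the optimal edge density; the key numerical fact is that a \pec on $s$ vertices admits a number of interior chords that, combined with $s$ boundary edges, yields exactly $5n-10$ only when $s=5$ (five chords) — any deviation lets Lemma~\ref{lem:exchange}.(\ref{prp:nonoptim}) produce a denser graph, contradicting optimality.

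Once faces are shown to have length $5$, the second half is to count the crossing edges inside each such face. Since each twice-crossed edge is a chord of a true-planar $5$-cycle and every non-planar edge has exactly two crossings, the crossing component inside a face is a self-contained configuration of mutually crossing chords. I would argue that the five interior chords of Figure~\ref{fig:2_planar_5gon} realize exactly the optimal density, so that each face of length $5$ must contain precisely $5$ crossing edges: fewer than five would leave room to add chords (contradicting optimality via Lemma~\ref{lem:exchange}.(\ref{prp:nonoptim})), and more than five is impossible since five pairwise-crossing chords in a pentagon already saturate $2$-planarity (every chord is crossed twice). A global double-counting check confirms consistency: with $f$ pentagonal faces, the skeleton contributes true-planar edges while each face adds $5$ crossing edges, and Euler's formula applied to $\Pi(G)$ together with the $5n-10$ total must force exactly this distribution.

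The main obstacle I anticipate is handling the \emph{non-simplicity} of the faces. Because $\Pi(G)$ may contain non-simple facial cycles (as emphasized in the preliminaries and Figure~\ref{fig:non_simple_face}), a face of length $5$ need not be a simple pentagon; a vertex or edge may be repeated along its boundary walk. I would need to verify that the \pec machinery and the chord-drawing of Figure~\ref{fig:2_planar_5gon} still apply when the boundary is a non-simple closed walk, and that the counting of $\kappa+\lambda$ via ``edge-segments within $\mathcal{C}_s$'' (as noted after Lemma~\ref{lem:exchange}) remains valid. A second delicate point is ensuring that distinct faces do not share crossing edges in a way that would double-count: since every non-planar edge lives inside a single true-planar $5$-cycle by Lemma~\ref{lem:2_planar_small_faces}, and its crossing partners are also chords of that same cycle, the crossing component is confined to one face, which should rule out sharing. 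I would close by verifying that the per-face count of exactly $5$ interior chords, summed over all faces, together with the true-planar edges on the boundaries, exactly reproduces the bound $m = 5n-10$, thereby showing both that all faces have length $5$ and that each carries exactly $5$ crossings are forced simultaneously by optimality.
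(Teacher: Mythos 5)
There is a genuine gap, and it sits exactly where you placed your ``key numerical fact.'' Your plan excludes faces of length $s\neq 5$ by a density count: an $s$-face can hold so few chords that only $s=5$ is compatible with $5n-10$ edges. Written out, the global count (Euler's formula plus $\sum_f |f| = 2m$) shows that optimality is equivalent to $\sum_f \bigl(c_f - (2|f|-5)\bigr) = 0$, where $c_f$ is the number of crossing edges inside face $f$. So your argument needs, for every face length $s$, the bound $c_f \leq 2s-5$ with equality attainable only at $s=5$. For $s=5$ this is trivial (a pentagon has only five chords), and for $s=4$ or $s=6$ it can be checked by hand, but for general $s$ it is a substantive claim about how many chords a $2$-planar drawing can pack inside an $s$-gon --- essentially an outer-$2$-planar density bound that appears nowhere in the paper and that you never prove. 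It is not a harmless omission: for example, a $7$-face with $9$ chords, or a $9$-face with $13$ chords, would satisfy the equality and hence also yield exactly $5n-10$ edges, so these configurations must be excluded by a structural argument, not by arithmetic. Lemma~\ref{lem:exchange} does not supply this either; it lets you derive a contradiction when $\mu > \kappa+\lambda$, but to invoke it you would need to know both the maximum $\mu$ and the maximum realizable $c_f$ for every $s$, which is precisely the unproven quantity.

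The paper's proof avoids the per-face maximum entirely, and the missing link is one you gesture at (``each face is filled exactly by one such configuration'') but never cash out. By Lemmas~\ref{lem:2_planar_small_faces} and~\ref{lem:2_planar_one_crossing}, every crossing edge is a chord of a true-planar $5$-cycle $\mathcal{C}$; since $\mathcal{C}$ consists of edges of $\Pi(G)$ and the chord lies in the interior of the face, the face must coincide with the interior of $\mathcal{C}$ and therefore has length $5$. Consequently one only needs to rule out \emph{chordless} faces, which is easy: a chordless face of size $\geq 4$ admits an extra chord, contradicting optimality; size $3$ is excluded by Property~\ref{prp:2planar_triangle}; sizes $1$ and $2$ would be homotopic self-loops or parallel edges. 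Once every face is known to be a pentagon, your second paragraph is fine: at most $5$ chords fit (there are only five), and fewer than $5$ contradicts optimality via Lemma~\ref{lem:exchange}.(\ref{prp:nonoptim}). If you replace the density argument by this structural step, your proof aligns with the paper's; as written, the exclusion of long faces is unsupported.
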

\begin{proof}
Since $\Pi(G)$ is connected (by Lemma~\ref{prp:2_planar_skeleton_connected}), all faces of $\Pi(G)$ are also connected. By Lemmas~\ref{lem:2_planar_small_faces} and \ref{lem:2_planar_one_crossing}, all crossing edges are chords of true-planar $5$-cycles. We claim that $\Pi(G)$ has no chordless faces. First, $\Pi(G)$ cannot contain a chordless face of size $\geq 4$, as otherwise we could draw in its interior a chord, contradicting the optimality of $G$. Property~\ref{prp:2planar_triangle} ensures that $\Pi(G)$ contains no faces of size $3$. Finally, faces of size $1$ or $2$ correspond to homotopic self-loops and parallel~edges.
\end{proof}

\noindent We are now ready to state the main theorem of this section.

\begin{theorem}
A graph $G$ is optimal $2$-planar if and only if $G$ admits a drawing $\Gamma(G)$ without homotopic parallel edges and self-loops, such that the true-planar skeleton $\Pi(G)$ of $\Gamma(G)$ spans all vertices of $G$, it contains only faces of length $5$ (that are not necessarily simple), and each face of $\Pi(G)$ has $5$ crossing edges in its interior in $\Gamma(G)$.
\label{thm:2-characterization}
\end{theorem}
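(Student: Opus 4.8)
The plan is to prove the two directions separately. The forward (``only if'') direction essentially packages the lemmas already established in this section, while the backward (``if'') direction is a short Euler-type counting argument. Throughout I rely on the fact, supplied by Lemmas~\ref{lem:2_planar_small_faces} and~\ref{lem:2_planar_one_crossing}, that in a PMCM-drawing every crossed edge is crossed exactly twice and is a chord of a true-planar $5$-cycle.

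For the ``only if'' direction I would start from a PMCM-drawing $\Gamma(G)$ of the optimal $2$-planar graph $G$; by our global convention $\Gamma(G)$ has neither homotopic parallel edges nor self-loops. Lemma~\ref{lem:2_planar_faces} immediately gives that $\Pi(G)$ has only faces of length $5$, each carrying $5$ crossing edges in its interior, so the only remaining point is that $\Pi(G)$ spans $V(G)$. For this I would argue that every vertex lies on $\Pi(G)$: a vertex incident to a true-planar edge trivially belongs to $\Pi(G)$, while a vertex incident only to crossed edges is, by Lemmas~\ref{lem:2_planar_small_faces} and~\ref{lem:2_planar_one_crossing}, an endpoint of a chord of some true-planar $5$-cycle, hence again a vertex of $\Pi(G)$. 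Together with the connectivity of $\Pi(G)$ from Lemma~\ref{prp:2_planar_skeleton_connected}, this finishes the direction.

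For the ``if'' direction, suppose $G$ admits a drawing $\Gamma(G)$ with the stated structure. I would first note that the hypothesis that every face of $\Pi(G)$ is a single closed walk of length $5$ forces $\Pi(G)$ to be connected, since a disconnected plane skeleton would produce a face whose boundary is not a single length-$5$ walk. Writing $m_0$ for the number of true-planar edges and $f$ for the number of faces (including the outer one), the facial-length count gives $2m_0 = 5f$, because each true-planar edge is traversed exactly twice over all facial walks and each walk has length $5$. Combining with Euler's formula $n - m_0 + f = 2$ yields $m_0 = 5(n-2)/3$. Since each face contains exactly $5$ crossing edges and distinct faces contribute disjoint sets of them (every such edge is drawn inside a single face), the number of crossing edges equals $5f = 2m_0$, so $G$ has $m = m_0 + 2m_0 = 3m_0 = 5(n-2) = 5n-10$ edges. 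Finally, the $5$ crossing edges inside each pentagonal face are its five diagonals, which, drawn as a pentagram as in Figure~\ref{fig:2_planar_5gon}, are each crossed exactly twice; hence $\Gamma(G)$ is $2$-planar, and since $5n-10$ is the maximum edge count for a $2$-planar graph without homotopic parallel edges or self-loops~\cite{BKR16,PachT97}, $G$ is optimal $2$-planar.

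The hard part will not be the counting but the two ``soft'' structural points: justifying in the forward direction that $\Pi(G)$ spans \emph{all} vertices (ruling out vertices hidden among crossing edges), and justifying in the backward direction that $\Pi(G)$ is connected so that Euler's formula applies cleanly. Both reduce to the same fact that every crossed edge is a chord of a true-planar $5$-cycle, which the preceding lemmas provide, so I expect these to be quick once stated carefully.
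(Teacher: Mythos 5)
Your proposal is correct and takes essentially the same route as the paper: the forward direction assembles Lemmas~\ref{lem:2_planar_small_faces}, \ref{lem:2_planar_one_crossing}, \ref{prp:2_planar_skeleton_connected} and~\ref{lem:2_planar_faces} exactly as the paper does, and the reverse direction is the same Euler-formula count ($5f=2m$, hence $m=5(n-2)/3$ and $m+5f=5n-10$). Your two added remarks --- that the length-$5$ facial walks force $\Pi(G)$ to be connected so Euler's formula applies, and that the five chords per face can be drawn as a pentagram so $G$ is indeed $2$-planar and hence optimal --- simply make explicit two points the paper's proof leaves implicit.
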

\begin{proof}
For the forward direction, consider an optimal $2$-planar graph $G$. By Lemma~\ref{lem:2_planar_faces}, the true-planar skeleton $\Pi(G)$ of its $2$-planar PMCM-drawing $\Gamma(G)$ contains only faces of length $5$ and each face of $\Pi(G)$ has $5$ crossing edges in its interior in $\Gamma(G)$. Since the endpoints of two crossing edges are within a true-planar $5$-cycle (by Lemmas~\ref{lem:2_planar_small_faces} and \ref{lem:2_planar_one_crossing}) and since $\Pi(G)$ is connected (by Lemma~\ref{prp:2_planar_skeleton_connected}), $\Pi(G)$ spans all vertices of $G$. This completes the proof of this direction.

For the reverse direction, denote by $n$, $m$ and $f$ the number of vertices, edges and faces of $\Pi(G)$. Since $\Pi(G)$ spans all vertices of $G$, it suffices to prove that $G$ has exactly $5n-10$ edges. The fact that $\Pi(G)$ contains only faces of length $5$ implies that $5f=2m$. By Euler's formula for planar graphs, $m=5(n-2)/3$ and $f=2(n-2)/3$ follows. Since each face of $\Pi(G)$ contains exactly $5$ crossing edges, the total number of edges of $G$ equals $m+5f=5n-10$.
\end{proof}

\section{Characterization of optimal 3-planar graphs}
\label{sec:3planar}

In this section we explore several structural properties of optimal $3$-planar graphs to derive their characterizations (see Theorem~\ref{thm:3-characterization}).

\begin{lemma}
Let $\Gamma(G)$ be a PMCM-drawing of an optimal $3$-planar graph $G$, and suppose that there exists a \pec $\mathcal{C}$ of $6$ vertices in $\Gamma(G)$, such that the potential boundary edges of $\mathcal{C}$ exist in $\Gamma(G)$. Let $E_{\mathcal{C}}$ be the set of edge-segments within $\mathcal{C}$. If the conditions C.\ref{cnd:1} and C.\ref{cnd:2} hold, then $\mathcal{C}$ is an empty true-planar $6$-cycle in $\Gamma(G)$ and all edges with edge-segments in $E_{\mathcal{C}}$ are drawn as chords in its interior.
\begin{enumerate}[C.1:]
\item \label{cnd:1} $|E_{\mathcal{C}}|\leq 8$, and,
\item \label{cnd:2} every edge-segment of $E_{\mathcal{C}}$ has at least one crossing in the interior of $\mathcal{C}$.
\end{enumerate}
\label{lem:size9}
\end{lemma}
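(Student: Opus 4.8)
The plan is to argue by contradiction, leveraging the density machinery of Lemma~\ref{lem:exchange}. Recall that in a \pec $\mathcal{C}$ of length six whose boundary edges exist, one can draw eight chords in its interior without violating $3$-planarity (Figure~\ref{fig:3_planar_6gon}). First I would set $\kappa+\lambda = |E_{\mathcal{C}}|$, which by the remark following Lemma~\ref{lem:exchange} counts exactly the edges having at least one edge-segment in the interior of $\mathcal{C}$. By condition C.\ref{cnd:1} we have $\kappa+\lambda \le 8 = \mu$. Since $G$ is optimal, Lemma~\ref{lem:exchange}.(\ref{prp:nonoptim}) forbids $\mu > \kappa+\lambda$, so in fact $\kappa+\lambda = 8$ must hold, i.e.\ $|E_{\mathcal{C}}| = 8$ exactly. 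Then Lemma~\ref{lem:exchange}.(\ref{prp:boundary}) applies with $\mu=\kappa+\lambda=8$, confirming (as already assumed in the hypothesis) that all boundary edges of $\mathcal{C}$ are present, and moreover guarantees the exchange operation is realizable without introducing homotopic parallel edges or self-loops.

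The crux is to upgrade this density equality into the stronger structural conclusion that every one of the eight edge-segments is in fact a full chord drawn entirely in the interior of $\mathcal{C}$, and that no edge-segment merely crosses the boundary. Here I would invoke condition C.\ref{cnd:2}: each edge-segment in $E_{\mathcal{C}}$ already carries at least one crossing inside $\mathcal{C}$. The key observation is that an edge $e$ which crosses the boundary of $\mathcal{C}$ (i.e.\ is counted in $\lambda$) spends a crossing on the boundary itself; combined with a crossing strictly inside $\mathcal{C}$ forced by C.\ref{cnd:2}, and tracking the crossings incurred by the eight-chord pattern of Figure~\ref{fig:3_planar_6gon} (where the interior chords already realize the maximum crossing budget compatible with $3$-planarity), I expect a counting contradiction: allowing any $\lambda>0$ boundary-crossing edge would either exceed the per-edge crossing bound of $3$ or force $\mu > \kappa+\lambda$, again contradicting optimality via Lemma~\ref{lem:exchange}.(\ref{prp:nonoptim}). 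Thus $\lambda=0$ and all eight edges lie completely inside $\mathcal{C}$ as chords.

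It then remains to show $\mathcal{C}$ is an \emph{empty true-planar} $6$-cycle. Emptiness (no vertices in the interior) is immediate from the definition of \pec. For true-planarity of the boundary, I would observe that once all eight interior edges are genuine chords with all their crossings confined to the interior of $\mathcal{C}$, no interior edge can cross a boundary edge; and since $\lambda=0$ means no external edge crosses the boundary either, every boundary edge of $\mathcal{C}$ is uncrossed, hence true-planar. The six boundary edges together then form a true-planar $6$-cycle with exactly the eight chords of Figure~\ref{fig:3_planar_6gon} drawn inside.

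The main obstacle I anticipate is the middle step: ruling out $\lambda > 0$ cleanly. It is not enough to count edges; one must carefully account for \emph{where} the crossings of a boundary-crossing edge land relative to the interior pattern, using the $3$-planarity budget together with C.\ref{cnd:2}, and ensure that redrawing (as in the proof of Lemma~\ref{lem:exchange}) does not inadvertently create a homotopic duplicate of a boundary edge. I would handle this by a case analysis on the crossing pattern of a putative boundary-crossing edge $e$, showing each case either violates $3$-planarity or yields $\mu > \kappa + \lambda$, and in every case contradicts the optimality of $G$.
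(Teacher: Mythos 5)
There is a genuine gap at exactly the step you yourself flag as the crux: ruling out $\lambda>0$ by a counting contradiction is impossible, because no such contradiction exists locally. A boundary-crossing edge $e$ has one crossing inside $\mathcal{C}$ (forced by C.\ref{cnd:2}) and one crossing on a boundary edge of $\mathcal{C}$ --- only two crossings, comfortably within the $3$-planarity budget. Moreover, with $|E_{\mathcal{C}}|=8=\mu$ the exchange machinery gives only the equality case $\mu=\kappa+\lambda$: replacing the $8$ old edges by the $8$ chords of Figure~\ref{fig:3_planar_6gon} produces a graph with the \emph{same} number of edges, so Lemma~\ref{lem:exchange}.(\ref{prp:nonoptim}) yields no contradiction, and Lemma~\ref{lem:exchange}.(\ref{prp:boundary}) merely returns the boundary edges, which the hypothesis already assumes to exist. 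So neither horn of your intended dichotomy (``violates $3$-planarity or forces $\mu>\kappa+\lambda$'') can fire, and the case analysis you sketch cannot close.

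What the paper actually does is substantially different and shows which extra ingredients are needed. Supposing some edge $e$ crosses a boundary edge $(v_1,v_6)$ of $\mathcal{C}$, it adjoins the relevant endpoint $w$ of $e$ to form a \pec $\mathcal{C}'$ on \emph{seven} vertices with $|E_{\mathcal{C}'}|\le 10$. The parity property (Property~\ref{prp:3planar_odd_cycle}: no true-planar odd cycles) then forces some edge $g$ to cross the boundary of $\mathcal{C}'$ --- otherwise an exchange would create a true-planar $7$-cycle --- and an exchange argument shows all boundary edges of $\mathcal{C}'$ exist. Adjoining an endpoint of $g$ gives a \pec $\mathcal{C}''$ on \emph{eight} vertices with $|E_{\mathcal{C}''}|\le 12$, and the final blow is a surgery that removes these at most $12$ edges, splits $\mathcal{C}''$ by a true-planar chord into a $6$-cycle and a $4$-cycle, and inserts one new vertex plus $18$ edges, giving $n'=n+1$ and $m'\ge m+6$, i.e.\ $m'\ge 5.5n'-10.5$, which exceeds the density bound. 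The essential ideas missing from your proposal are (i) the use of the odd-cycle parity obstruction to propagate boundary crossings outward, and (ii) a vertex-adding surgery that nets strictly more than $5.5$ edges per added vertex; density counting on the original $6$-cycle $\mathcal{C}$ alone, as you propose, cannot distinguish a chord from a boundary-crossing edge.
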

\begin{proof}
We start with the following observation: If $e$ is an edge of $G$, then due to $3$-planarity at most one edge-segment of $e$ belongs to $E_{\mathcal{C}}$. More precisely, if $E_{\mathcal{C}}$ contains at least two edge-segments of $e$, then we claim that $e$ has at least four crossings. By Condition C.\ref{cnd:2} each of the two edge-segments of $e$ contributes one crossing to $e$. Since $\mathcal{C}$ is empty and contains two edge-segments of $e$, it follows that $e$ exists and enters $\mathcal{C}$. Hence, $e$ has two more crossings, summing up to a total of at least four crossings.

Let $v_1,\dots,v_6$ be the vertices of $\mathcal{C}$. If all edges with edge-segments in $E_{\mathcal{C}}$ completely lie in $\mathcal{C}$,  then $\mathcal{C}$ is a true-planar $6$-cycle and the lemma trivially holds. Otherwise, there is at least one edge $e$ with an edge-segment in $E_{\mathcal{C}}$, that crosses a boundary edge of $\mathcal{C}$. W.l.o.g.~we can assume that $e$ crosses $(v_1,v_6)$ of $\mathcal{C}$ at point $c$ (refer to Figure~\ref{fig:cs1}). If $w$ and $w'$ are the two endpoints of $e$, then by the observation we made at the beginning of the proof it follows that either the edge-segment of $(w,w')$ between $w$ and $c$ or the one between $c$ and $w'$ is drawn completely in the exterior of $\mathcal{C}$ (as otherwise $e$ would have at least two edge-segments in $E_{\mathcal{C}}$). W.l.o.g.~assume that this is the edge-segment between $w$ and $c$. Then, corner edges $[v_1,w]$ and $[w,v_6]$ are \pes (by Property~\ref{prp:corner}).

\begin{figure}[t!]
	\centering
    \begin{minipage}[b]{.16\textwidth}
        \centering
        \includegraphics[width=\textwidth,page=1]{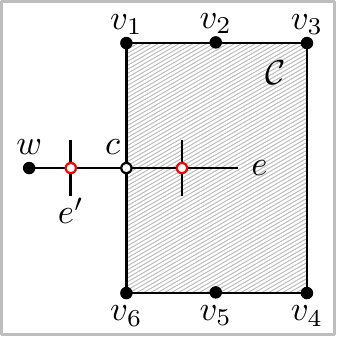}
        \subcaption{~}\label{fig:cs1}
    \end{minipage}
    \begin{minipage}[b]{.16\textwidth}
        \centering
        \includegraphics[width=\textwidth,page=2]{images/3planar_6gon_bound}
        \subcaption{~}\label{fig:cs2}
    \end{minipage}
	\begin{minipage}[b]{.16\textwidth}
        \centering
        \includegraphics[width=\textwidth,page=5]{images/3planar_6gon_bound}
        \subcaption{~}\label{fig:7_stick}
    \end{minipage}
	\begin{minipage}[b]{.16\textwidth}
        \centering
        \includegraphics[width=\textwidth,page=3]{images/3planar_6gon_bound}
        \subcaption{~}\label{fig:cp}
    \end{minipage}
    \begin{minipage}[b]{.16\textwidth}
        \centering
        \includegraphics[width=\textwidth,page=4]{images/3planar_6gon_bound}
        \subcaption{~}\label{fig:cs_final}
    \end{minipage}
    \begin{minipage}[b]{.16\textwidth}
        \centering
        \includegraphics[width=\textwidth,page=1]{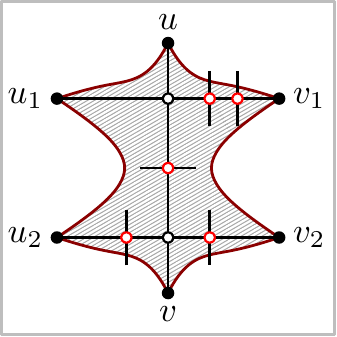}
        \subcaption{~}\label{fig:independent}
    \end{minipage}
    \caption{%
    Different configurations used in
    (a)-(d)~Lemma~\ref{lem:size9},
    (e)~Lemma~\ref{lem:3_planar_independent}.}
    \label{fig:replacements_2}
\end{figure}

Recall that $e$ has one crossing in the interior of $\mathcal{C}$ (by Condition C.\ref{cnd:2} of the lemma) and one more crossing with edge $(v_1,v_6)$. By $3$-planarity, it follows that edge $e$ may have at most one more crossing, say with edge $e'$. Note that $e'$ may or may not have an edge-segment in $E_{\mathcal{C}}$. Vertices $w$, $v_1$, $\dots$, $v_6$ define a \pec $\mathcal{C}'$ on $7$ vertices (see Figure~\ref{fig:cs2}). The set $E_{\mathcal{C}'}$ of edge-segments within $\mathcal{C}'$ contains all edge-segments of $E_{\mathcal{C}}$ (that is, $E_{\mathcal{C}} \subseteq E_{\mathcal{C}'}$) plus at most two additional edge-segments: the one defined by edge $(v_1,v_6)$, and possibly an edge-segment of $e'$. Hence $|E_{\mathcal{C}'}| \leq 10$. In the following we make some observations in the form of claims.

\begin{claim}
All edges with an edge-segment in $E_{\mathcal{C}'}$ have at least one crossing in the interior of $\mathcal{C}'$.
\label{nclm:1}
\end{claim}
\begin{proof}
The claim clearly holds for all edge-segments of $E_{\mathcal{C}}$ (recall that $E_{\mathcal{C}}\subset E_{\mathcal{C}'}$). Since $(v_1,v_{6})$ and $e'$ (if it exists) both cross $e$ in the interior of $\mathcal{C}'$, the remaining edge-segments within $\mathcal{C}'$ (i.e., the ones defined by edges $(v_1,v_{6})$ and $e'$) have at least one crossing in the interior of $\mathcal{C}'$.
\end{proof}

\begin{claim}
At least one edge with an edge-segment in $E_{\mathcal{C}'}$ crosses one edge of $\mathcal{C}'$.
\label{nclm:2}
\end{claim}
\begin{proof}
If all edges with an edge-segment in $E_{\mathcal{C}'}$ do not cross $\mathcal{C}'$, then all edges with an edge-segment in $E_{\mathcal{C}'}$ can be drawn completely in the interior of $\mathcal{C}'$. Hence, all \pes of $\mathcal{C}'$ can be added in $\Gamma(G)$ (if they are not present already). Then, $\mathcal{C}'$ is a true-planar $7$-cycle contradicting Property~\ref{prp:3planar_odd_cycle}.
\end{proof}

\noindent By Claim~\ref{nclm:2}, there is an edge $g$ that crosses a boundary edge, say $[w,v_1]$, of $\mathcal{C}'$ at point $c'$; see Figure~\ref{fig:7_stick}.

\begin{claim}
All boundary edges of $\mathcal{C}'$ exist in $\Gamma(G)$ and $g$ has one crossing in the interior of $\mathcal{C}'$.
\label{nclm:3}
\end{claim}
\begin{proof}
To prove this claim, we remove all edges with an edge-segment in $E_{\mathcal{C}'}$ (recall that $|E_{\mathcal{C}'}| \leq 10$) and replace them with the $10$ edges of the $3$-planar crossing pattern of Figure~\ref{fig:cp}, i.e., we redraw the segment of $g$ in the interior of $\mathcal{C}'$ so that: %
\begin{inparaenum}[(i)]
\item $g$ emanates from vertex $v_6$ of $\mathcal{C}'$,
\item $g$ crosses only \pe $[w,v_1]$ at point $c'$, and
\item $g$ has no other crossings in the interior of  $\mathcal{C}'$.
\end{inparaenum}
This allows us to add all boundary edges of  $\mathcal{C}'$ in $\Gamma(G)$ (if they are not present). Hence, $3$-planarity is preserved and the derived graph has at least as many edges as $G$. Since $G$ is optimal, it follows that all boundary edges of $\mathcal{C}'$ must exist in $\Gamma(G)$, which completes the proof of the claim.
\end{proof}

We follow an analogous approach to the one we used for expanding $\mathcal{C}$ (that has $6$ vertices) to $\mathcal{C}'$ (that has $7$ vertices). We can find an endpoint of $g$, say $z$, such that $w$, $z$, $v_1$, $v_2$, $\dots$, $v_6$ define a \pec $\mathcal{C}''$ on $8$ vertices. Furthermore, the set $E_{\mathcal{C}''}$ of edge-segments within $\mathcal{C}''$ has at most $12$ elements (at most two more than $E_{\mathcal{C}'}$). We proceed by removing all edges with an edge-segment in $E_{\mathcal{C}''}$ and split $\mathcal{C}''$ into two true-planar cycles of length $6$ and $4$, by adding  true-planar chord $(v_1,v_6)$; see Figure~\ref{fig:cs_final}. In the interior of the $6$-cycle, we add $8$ crossing edges as in Figure~\ref{fig:3_planar_6gon}. In the interior of the $4$-cycle, we add a vertex $x$ with a true planar edge $(v_1,x)$. Vertices $v_1$, $x$, $v_1$, $v_6$, $w$ and $z$ define a new \pec on $6$ vertices, allowing us to add $8$ more crossing edges. In total, we removed at most $12$ edges, added a vertex and $18$ edges. If $n$ and $m$ are the number of vertices and edges of $G$, then the derived graph $G'$ has $n'=n+1$ vertices and $m'\geq m+6$ edges. The last equation gives $m'\geq 5.5n'-10.5$, i.e. $G'$ has more edges than allowed; a contradiction.
\end{proof}

Let $(u,v)$ be an edge of $G$ that is crossed by two edges $(u_1,v_1)$ and $(u_2,v_2)$ in $\Gamma(G)$ at points $c_1$ and $c_2$. By Property~\ref{prp:3_planar_cross_twice} edges $(u_1,v_1)$ and $(u_2,v_2)$ are not identical. We assume w.l.o.g.~that $c_1$ and $c_2$ appear in this order along $(u,v)$ from $u$ to $v$. We also assume that the edge-segment of $(u_1,v_1)$ between $u_1$ and $c$ is on the same side of edge $(u,v)$ as the edge-segment of $(u_2,v_2)$ between $u_2$ and $c_2$; refer to Figure~\ref{fig:independent}. Vertices $u_1$, $u_2$ and $v_1$, $v_2$ define two \side pairs. By Property~\ref{prp:parallel}, at least one of \ses $[u_1,u_2]$ and $[v_1,v_2]$ is a \pe of $\Gamma(G)$. Recall that if both \ses $[u_1,u_2]$ and $[v_1,v_2]$ are \pes of $\Gamma(G)$, then edges $(u_1,v_1)$ and $(u_2,v_2)$ are called \sa.

\begin{lemma}
Let $\Gamma(G)$ be a PMCM-drawing of an optimal $3$-planar graph~$G$. If $(u,v)$ is crossed by \sa edges $(u_1,v_1)$ and $(u_2,v_2)$ in $\Gamma(G)$, then it is a chord of an empty true-planar~$6$-cycle.
\label{lem:3_planar_independent}
\end{lemma}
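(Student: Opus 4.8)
The plan is to reduce the statement to Lemma~\ref{lem:size9} by exhibiting a suitable \pec of six vertices around $(u,v)$ and verifying its hypotheses. Following the set-up preceding the lemma, corner edges $[u,u_1]$, $[u,v_1]$, $[v,u_2]$ and $[v,v_2]$ are \pes by Property~\ref{prp:corner}, while the \sa assumption guarantees that both \ses $[u_1,u_2]$ and $[v_1,v_2]$ are \pes as well. First I would check that the vertices $u,v_1,v_2,v,u_2,u_1$ (in this cyclic order) define a \pec $\mathcal{C}$: its six boundary curves are exactly these potential edges, and its interior is the union of the four empty corner regions and the two empty \side regions (the segment of $(u,v)$ between the two crossings $c_1,c_2$ separates the two \side regions), so no vertex lies inside $\mathcal{C}$; see the hexagon of Figure~\ref{fig:independent}, in analogy to the $2$-planar Figure~\ref{fig:2_6gon}. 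Non-simplicity is not an issue, since the corner/side machinery yields \pes even when endpoints coincide.

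Next I would count the edges having an edge-segment inside $\mathcal{C}$, i.e.\ the quantity $\kappa+\lambda$ of Lemma~\ref{lem:exchange}. The three central edges $(u,v)$, $(u_1,v_1)$, $(u_2,v_2)$ lie in the interior of $\mathcal{C}$. By Property~\ref{prp:3_planar_cross_twice} none of them is crossed twice by the same edge, and by quasi-planarity (Property~\ref{prp:3_planar_quasi}) no edge crosses both $(u,v)$ and $(u_1,v_1)$, nor both $(u,v)$ and $(u_2,v_2)$, as each such central pair already crosses. Since $(u,v)$ already carries $c_1,c_2$, the central edges offer a total crossing capacity of only $1+2+2=5$ for further edges, so at most five extra edges cross them; assuming, as argued below, that every edge with a segment inside $\mathcal{C}$ crosses a central edge, we obtain $\kappa+\lambda\le 8$. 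As a $6$-\pec admits $\mu=8$ chords (Figure~\ref{fig:3_planar_6gon}), Lemma~\ref{lem:exchange}.(\ref{prp:nonoptim}) excludes $\kappa+\lambda<8$ by optimality, whence $\kappa+\lambda=8$ and, by Lemma~\ref{lem:exchange}.(\ref{prp:boundary}), all boundary edges of $\mathcal{C}$ exist in $\Gamma(G)$.

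With the boundary edges now realized, I would verify the two hypotheses of Lemma~\ref{lem:size9}. For condition C.\ref{cnd:2}, every segment of $E_{\mathcal{C}}$ carries an internal crossing: the three central segments mutually cross, and each of the (at most five) extra edges crosses a central edge inside $\mathcal{C}$. For condition C.\ref{cnd:1}, each of these eight edges contributes a single segment — a chord for the central edges, and for an extra edge its two crossings on the now-real boundary together with its central crossing already saturate $3$-planarity, preventing re-entry — so $|E_{\mathcal{C}}|=\kappa+\lambda=8$. Lemma~\ref{lem:size9} then yields that $\mathcal{C}$ is an empty true-planar $6$-cycle all of whose interior segments are chords; in particular $(u,v)$ is a chord of an empty true-planar $6$-cycle, as claimed.

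The step I expect to be the main obstacle is ruling out an edge with a \emph{crossing-free} segment inside $\mathcal{C}$ (needed both for the bound $\kappa+\lambda\le 8$ and for condition C.\ref{cnd:2}), compounded by an apparent circularity: the clean argument for this uses that the boundary of $\mathcal{C}$ is real, whereas realizing the boundary via Lemma~\ref{lem:exchange} must not presuppose it. Such a segment would lie in a single empty sub-region of $\mathcal{C}$ and cross that region's lone boundary curve twice; if the curve is already an edge this contradicts Property~\ref{prp:3_planar_cross_twice}, and if it is merely a \pe the ``dip'' bounds an empty lens that one tries to push out of $\mathcal{C}$ without creating crossings, contradicting the crossing-minimality of the PMCM-drawing. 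The care lies precisely in this last push, since an edge crossing the potential boundary arc within the lens could be displaced; I would handle it by choosing an innermost such dip, eliminate the dips by minimality first, carry out the count, and only afterwards reuse Property~\ref{prp:3_planar_cross_twice} once the boundary edges are known to be present.
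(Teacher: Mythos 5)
Your proposal is correct and takes essentially the same route as the paper's proof: the corner and side \pes yield the six-vertex \pec $\mathcal{C}$, the crossing-capacity count gives $\kappa+\lambda\le 8$ against $\mu=8$ drawable chords, Lemma~\ref{lem:exchange} then forces $\kappa+\lambda=8$ and the existence of all boundary edges, and Lemma~\ref{lem:size9} concludes that $(u,v)$ is a chord of an empty true-planar $6$-cycle. The ``dip'' obstacle you single out is genuine but is passed over silently in the paper as well; it disappears once the corner/side curves are drawn tightly along the corresponding edge-segments (so that any edge-segment entering $\mathcal{C}$ must cross one of the three central edges), and note that your appeal to crossing-minimality there is misplaced, since these boundary curves are not edges of $\Gamma(G)$ and may simply be redrawn past any empty lens at no cost to the drawing.
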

\begin{proof}
Refer to Figure~\ref{fig:independent}. Since $(u_1,v_1)$ and $(u_2,v_2)$ are \sa, \ses $[u_1,u_2]$ and $[v_1,v_2]$ are \pes. By Property~\ref{prp:corner}, corner edges $[u, u_1]$, $[u,v_1]$, $[u,u_2]$ and $[v,v_2]$ are \pes. Hence, vertices $u$, $v_1$, $v_2$, $v$, $u_2$ and $u_1$ define a \pec $\mathcal{C}$ on six vertices (gray-shaded in Figure~\ref{fig:independent}). Edges $(u,v)$, $(u_1,v_1)$ and $(u_2,v_2)$ are drawn completely in the interior of $\mathcal{C}$ and there exist at most five other edges either drawn in the interior of $\mathcal{C}$ or crossing its boundary:  at most one that crosses $(u,v)$, and at most four others that cross $(u_1,v_1)$ and $(u_2,v_2)$. Since we can draw eight chords in the interior of $\mathcal{C}$ as in Figure~\ref{fig:3_planar_6gon}, by Lemma~\ref{lem:exchange}.(\ref{prp:boundary}), for $\kappa+\lambda\leq 8$ and $\mu=8$, all boundary edges of $\mathcal{C}$ exist in $\Gamma(G)$. Furthermore $\kappa+\lambda=8$ must hold. Note that the set $E_{\mathcal{C}}$ of edge-segments within $\mathcal{C}$ contains only edge-segments of these $\kappa+\lambda$ edges. Also, these $8$ edges have exactly one edge-segment within $\mathcal{C}$ that is crossed in the interior of $\mathcal{C}$. Hence, conditions C.\ref{cnd:1} and C.\ref{cnd:2} of Lemma~\ref{lem:size9} are satisfied and there exists an empty true-planar $6$-cycle that has $(u,v)$ as chord.
\end{proof}

\begin{lemma}
Let $\Gamma(G)$ be a PMCM-drawing of an optimal $3$-planar graph~$G$. If $e$ is crossed by two \sa edges in $\Gamma(G)$, then all edges of $\mathcal{X}(e)$ are chords of an empty true-planar $6$-cycle.
\label{lem:3_planar_xing_comp}
\end{lemma}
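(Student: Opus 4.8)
The plan is to leverage Lemma~\ref{lem:3_planar_independent} to pin down a single empty true-planar $6$-cycle that confines the entire crossing component of $e$, and then to argue that the crossing relation cannot escape that cycle. First I would apply Lemma~\ref{lem:3_planar_independent} to $e$ and the two \sa edges that cross it. This immediately gives an empty true-planar $6$-cycle $\mathcal{C}$ having $e$ as a chord. Crucially, the proof of Lemma~\ref{lem:3_planar_independent} establishes (through Lemma~\ref{lem:size9}) the stronger conclusion that \emph{every} edge with an edge-segment in the interior of $\mathcal{C}$ is in fact drawn as a chord of $\mathcal{C}$, that $\mathcal{C}$ has no vertices in its interior, and that its boundary consists of true-planar edges. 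I would record these three facts as the starting point.

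The key step is then a confinement argument. I would prove the claim that any edge $f$ crossing a chord of $\mathcal{C}$ is itself a chord of $\mathcal{C}$. Indeed, to cross a chord, $f$ must have an edge-segment in the interior of $\mathcal{C}$; since $\mathcal{C}$ is empty, $f$ has no endpoint in its interior, and since every boundary edge of $\mathcal{C}$ is true-planar, $f$ cannot cross the boundary in order to enter. Hence both endpoints of $f$ lie on $\mathcal{C}$ and $f$ is drawn in its interior, i.e.\ $f$ is a chord of $\mathcal{C}$.

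With this claim in hand, I would propagate along the crossing graph $\mathcal{X}(G)$ by induction on the distance from $e$. The base case is $e$ itself, which is a chord of $\mathcal{C}$. For the inductive step, any edge at the next distance crosses some edge already shown to be a chord, hence is a chord by the claim; since that new chord is again drawn entirely inside $\mathcal{C}$, all of its crossings occur inside $\mathcal{C}$ and the claim keeps applying. Therefore every edge of $\mathcal{X}(e)$ is a chord of the same empty true-planar $6$-cycle $\mathcal{C}$, which is exactly the statement.

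I expect the main obstacle to be the bookkeeping when invoking the strengthened output of Lemma~\ref{lem:3_planar_independent}: one must extract from its proof not merely that $e$ is a chord, but that $\mathcal{C}$ is genuinely an empty true-planar $6$-cycle whose interior edge-segments are all chords, since the confinement argument hinges entirely on the boundary being true-planar (so that nothing can cross into the interior) and on the interior being empty (so that nothing can originate there). Once those properties are secured, both the confinement claim and the induction are routine.
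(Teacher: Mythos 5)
Your proposal is correct and follows essentially the same route as the paper: apply Lemma~\ref{lem:3_planar_independent} to obtain the empty true-planar $6$-cycle containing $e$ as a chord, and then propagate along the crossing component — the paper compresses your confinement-plus-induction step into the observation, already stated in its preliminaries, that for a chord $e$ of a true-planar $s$-cycle with no vertices in its interior, all edges of $\mathcal{X}(e)$ are also chords of that cycle. Note also that the emptiness and true-planarity of the cycle are part of the \emph{statement} of Lemma~\ref{lem:3_planar_independent}, so no extraction from its proof is needed.
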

\begin{proof}
The lemma follows by the observation that since $e$ is a chord of an empty true-planar $6$-cycle (by Lemma~\ref{lem:3_planar_independent}), all edges of $\mathcal{X}(e)$ are also chords of this $6$-cycle.
\end{proof}

\begin{lemma}
Let $\Gamma(G)$ be a PMCM-drawing of an optimal $3$-planar graph $G$. Any edge that is crossed three times in $\Gamma(G)$ is a chord of an empty true-planar $6$-cycle in $\Gamma(G)$.
\label{lem:3_planar_small_faces}
\end{lemma}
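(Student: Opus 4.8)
The plan is to reduce the statement to Lemma~\ref{lem:3_planar_independent} whenever a \sa pair is available among the crossing edges, and to dispatch the remaining degenerate configurations by a direct construction certified through Lemma~\ref{lem:size9}. First I would fix the combinatorial picture. Since $(u,v)$ is crossed three times, Property~\ref{prp:3_planar_cross_twice} forces the three crossings to be realized by three \emph{distinct} edges $(u_1,v_1)$, $(u_2,v_2)$, $(u_3,v_3)$, meeting $(u,v)$ at points $c_1,c_2,c_3$ appearing in this order from $u$ to $v$. Because each of these edges crosses $(u,v)$, quasi-planarity (Property~\ref{prp:3_planar_quasi}) forbids any two of them from crossing one another, so they are pairwise non-crossing; I would then orient them so that $u_1,u_2,u_3$ lie on one side of $(u,v)$ and $v_1,v_2,v_3$ on the other, which makes every pair a \side pair with respect to $(u,v)$.

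The central step is the reduction. By Property~\ref{prp:parallel}, for each pair $(u_i,v_i),(u_j,v_j)$ at least one of the \ses $[u_i,u_j]$, $[v_i,v_j]$ is a \pe. If some pair is \sa, I am done immediately: applying Lemma~\ref{lem:3_planar_independent} to $(u,v)$ and that \sa pair already yields that $(u,v)$ is a chord of an empty true-planar $6$-cycle, and that lemma explicitly allows the crossed edge $(u,v)$ to carry a third crossing, which is precisely our situation. Hence it remains only to handle the case in which \emph{no} pair is \sa. In that case, for every pair exactly one \se is a \pe while the other is a homotopic self-loop, which forces the two corresponding endpoints to coincide (either $u_i=u_j$ or $v_i=v_j$).

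I would then enumerate the resulting coincidence patterns. In the typical patterns -- for instance $u_1=u_2=u_3$, or $u_1=u_2$ together with $v_2=v_3$ -- the set $\{u,v\}$ together with the surviving endpoints consists of exactly six vertices. Using the corner edges at $u$ and at $v$ (which are \pes by Property~\ref{prp:corner}) together with the surviving potential \ses, these six vertices bound a \pec $\mathcal{C}$ of length six that contains $(u,v)$ and all three crossing edges among its interior chords; I would check that $E_{\mathcal{C}}$ has at most eight edge-segments and that each of them is crossed in the interior of $\mathcal{C}$, i.e.\ Conditions C.\ref{cnd:1} and C.\ref{cnd:2}, and invoke Lemma~\ref{lem:size9} to conclude that $\mathcal{C}$ is an empty true-planar $6$-cycle with $(u,v)$ as a chord. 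The degenerate patterns in which the coincidences collapse the count below six are discarded directly: two of the crossing edges would then become homotopic parallel edges (excluded by our standing assumption), or the boundary would close into a true-planar cycle of odd length, contradicting Property~\ref{prp:3planar_odd_cycle}. The hard part will be exactly this final case analysis, since the shared endpoints are double-edged: they reduce the number of boundary vertices to the six needed to apply Lemma~\ref{lem:size9}, but they simultaneously threaten to create homotopic parallel edges or odd true-planar cycles, so each coincidence pattern must be examined individually to certify that it lands in one of these controlled outcomes.
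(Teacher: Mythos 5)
Your overall strategy---dichotomize on \sa pairs and finish with Lemma~\ref{lem:size9}---is the same as the paper's, and your first case is sound: if some pair among $(u_1,v_1)$, $(u_2,v_2)$, $(u_3,v_3)$ is \sa, then Lemma~\ref{lem:3_planar_independent} applies directly, and you read its proof correctly in that it tolerates the third crossing on $(u,v)$. The genuine gap is in the complementary case, at precisely the step you dismiss as a routine check: ``I would check that $E_{\mathcal{C}}$ has at most eight edge-segments.'' Under your hypothesis this cannot be verified. A priori the count is $10$, not $8$: the four edges $(u,v),(u_1,v_1),(u_2,v_2),(u_3,v_3)$ plus up to two further edges crossing each $(u_i,v_i)$, and those six extra edges may all be distinct. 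The bound of $8$ is also needed \emph{before} Lemma~\ref{lem:size9} can even be invoked, since that lemma presupposes that the boundary edges of $\mathcal{C}$ exist in $\Gamma(G)$; the paper obtains their existence from Lemma~\ref{lem:exchange}.(\ref{prp:boundary}) with $\kappa+\lambda\leq 8=\mu$, so the counting comes first there as well.

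The paper closes exactly this hole with Claim~\ref{clm:shared_crossing}: every edge crossing the middle edge $(u_2,v_2)$ in the interior of $\mathcal{C}$ must also cross $(u_1,v_1)$ or $(u_3,v_3)$, which reduces the count from $10$ to $8$ and yields conditions C.\ref{cnd:1} and C.\ref{cnd:2}. But the proof of that claim needs to know that an arbitrary edge $(u',v')$ crossing $(u_2,v_2)$ is not \sa with $(u,v)$---and this does \emph{not} follow from your hypothesis that no pair among the three edges crossing $(u,v)$ is \sa, because $(u',v')$ is not one of those three edges. This is why the paper's case distinction is taken over the whole crossing component $\mathcal{X}((u,v))$ (``some edge of $\mathcal{X}((u,v))$ is crossed by two \sa edges'' versus ``none is''), with Lemma~\ref{lem:3_planar_xing_comp} propagating the empty true-planar $6$-cycle from any member of the component back to $(u,v)$; your split, which never involves $\mathcal{X}((u,v))$ or Lemma~\ref{lem:3_planar_xing_comp}, is too weak to support the counting. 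The repair is to add the missing subcase: if such an $(u',v')$ is \sa with $(u,v)$, apply Lemma~\ref{lem:3_planar_independent} to $(u_2,v_2)$ and propagate along the crossing component; otherwise argue as in Claim~\ref{clm:shared_crossing}. A smaller inaccuracy: your ``typical pattern'' $u_1=u_2$ together with $v_2=v_3$ cannot occur at all---the union argument in the paper's Claim~\ref{clm:fan_planar} forces all coincidences onto one side ($u_1=u_2=u_3$ or $v_1=v_2=v_3$), so the enumeration you defer is really a proof that only the fan pattern survives.
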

\begin{proof}
Our proof is based on a case analysis and in order to lighten the presentation we will use intermediate observations in the form of claims. Let $(u,v)$ be an edge of $G$ that crosses edges $(u_i,v_i)$ in $\Gamma(G)$, for $i=1,2,3$. Let also $c_1$, $c_2$ and $c_3$ be the corresponding crossing points as they appear along $(u,v)$ from vertex $u$ to vertex $v$; see Figure~\ref{fig:3_planar_three_crossing}. We assume w.l.o.g.~that the edge-segment of $(u_i,v_i)$ between $u_i$ and $c_i$ is on the same side of edge $(u,v)$ as the edge-segment of $(u_j,v_j)$ between $u_j$ and $c_j$, for $1\leq i<j\leq 3$. Consider the crossing component $\mathcal{X}((u,v))$. We distinguish two cases depending on whether there exists an edge in $\mathcal{X}((u,v))$ that crosses two \sa edges or not. Assume that there is an edge of $\mathcal{X}((u,v))$ that crosses two \sa edges. Then, by Lemma~\ref{lem:3_planar_xing_comp} all edges of $\mathcal{X}((u,v))$, including $(u,v)$, are chords of an empty true-planar $6$-cycle and the lemma follows. Assume now that there exists no edge in $\mathcal{X}((u,v))$ that crosses two \sa edges. Hence, for edge $(u,v)$, that crosses edges $(u_1,v_1)$, $(u_2,v_2)$ and $(u_3,v_3)$, we have that any two edges $(u_i,v_i)$ and $(u_j,v_j)$ with $1\leq i<j\leq 3$, are not \sa. Observe that by definition, exactly one of \ses $[u_i,u_j]$ or $[v_i,v_j]$ is not a \pe. In the following claim, we refine this observation.

\begin{claim}
Either \ses $[u_1,u_2]$, $[u_1,u_3]$ and $[u_2,u_3]$ are not \pes or \ses $[v_1,v_2]$, $[v_1,v_3]$ and $[v_2,v_3]$ are not \pes.
\label{clm:fan_planar}
\end{claim}
\begin{proof}
Consider \ses $[u_1,u_2]$ and $[u_2,u_3]$ and assume that both are not \pes. It follows that $[u_1,u_2]$ and $[u_2,u_3]$ are both homotopic self-loops. Hence, $u_1=u_2=u_3$. We will prove that \se $[u_1,u_3]$ is not a \pe either. Let $R_{1,2}$ be the region defined by the edge-segment of $(u_1,v_1)$ between $u_1$ and crossing $c_1$, the edge-segment of $(u,v)$ between crossings $c_1$ and $c_2$ and the edge-segment of $(u_2,v_2)$ between $c_2$ and $u_2$ (recall that $u_1=u_2$; see Figure~\ref{fig:3_planar_small_faces_example}). Similarly, we define regions $R_{2,3}$ and $R_{1,3}$. Observe that $R_{1,2}\cup R_{2,3} = R_{1,3}$. Since $[u_1,u_2]$ and $[u_2,u_3]$ are homotopic self-loops, $R_{1,2}$ and $R_{2,3}$ do not contain any vertex in their interiors. Hence, $R_{1,3}$ does not contain any vertex in its interior either. More in general, since $R_{1,2} \cup R_{2,3} = R_{1,3}$ we can prove that whenever any two of $[u_1,u_2]$, $[u_2,u_3]$ and $[u_1,u_3]$ are not \pes, then the third one is not a \pe either.  Similarly, we can prove that whenever any two of $[v_1,v_2]$, $[v_2,v_3]$ and $[v_1,v_3]$ are not \pes, then the third one is not a \pe either.

\begin{figure}[t]
	\centering
    \begin{minipage}[b]{.18\textwidth}
        \centering
		\includegraphics[width=\textwidth,page=1]{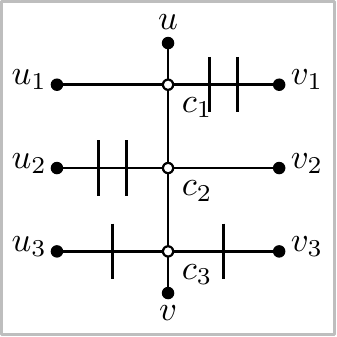}
        \subcaption{~}\label{fig:3_planar_three_crossing}
    \end{minipage}
    \begin{minipage}[b]{.18\textwidth}
        \centering
        \includegraphics[width=\textwidth,page=1]{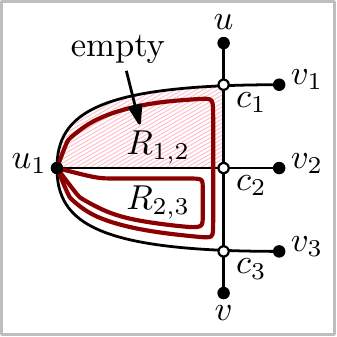}
        \subcaption{~}\label{fig:3_planar_small_faces_example}
    \end{minipage}
    \begin{minipage}[b]{.18\textwidth}
        \centering
        \includegraphics[width=\textwidth,page=2]{images/3planar_fan_crossing}
        \subcaption{~}\label{fig:3fan}
    \end{minipage}
    \begin{minipage}[b]{.18\textwidth}
        \centering
        \includegraphics[width=\textwidth,page=3]{images/3planar_fan_crossing}
        \subcaption{~}\label{fig:3fan_middle_a}
    \end{minipage}
	\begin{minipage}[b]{.18\textwidth}
        \centering
        \includegraphics[width=\textwidth,page=4]{images/3planar_fan_crossing}
        \subcaption{~}\label{fig:3fan_middle_b}
    \end{minipage}
    \caption{%
    Different configurations used in Lemma~\ref{lem:3_planar_small_faces}.}
\end{figure}

Finally, we show that at least two of $[u_1,u_2]$, $[u_1,u_3]$ and $[u_2,u_3]$  or at least two of $[v_1,v_2]$, $[v_1,v_3]$ and $[v_2,v_3]$ are not \pes, which, by our previous arguments, implies that the third \se is not a \pe either. If for example $[u_1,u_2]$ and $[u_1,u_3]$ are \pes, then neither $[v_1,v_2]$ nor $[v_1,v_3]$ is a \pe, as otherwise, either $(u_1,v_1)$ and $(u_2,v_2)$ are \sa or $(u_1,v_1)$ and $(u_3,v_3)$ are \sa, contradicting our previous observation.
\end{proof}

By Claim~\ref{clm:fan_planar} we can assume w.l.o.g.~that \ses $[u_1,u_2]$, $[u_1,u_3]$ and $[u_2,u_3]$ are not \pes in $\Gamma(G)$. This implies that regions $R_{1,2}$, $R_{2,3}$ and $R_{1,3}$ do not contain any vertex in their interiors (and also $u_1=u_2=u_3$). Hence, each edge of $\mathcal{X}(e)$ which is crossed by three edges in $\Gamma(G)$ complies with the crossing pattern of Figure~\ref{fig:3fan}, where the red-shaded region has no vertices in its interior. Now, vertices $u$, $v_1$, $v_2$, $v$, $u_2$ and $u_1$ define a \pec $\mathcal{C}$ on six vertices. Our goal is to use Lemma~\ref{lem:size9}, whose precondition C.\ref{cnd:1} requires at most $8$ edge-segments within $\mathcal{C}$. Note that since $(u,v)$ has three crossings and since each of $(u_1,v_1)$, $(u_2,v_2)$ and $(u_3,v_3)$ has one crossing, there may exist at most $10$ with at least one edge-segment within $\mathcal{C}$; see also Figure~\ref{fig:3_planar_three_crossing}. In the following, we prove that this is not the case.

\begin{claim}
Any edge crossing $(u_2,v_2)$ in the interior of $\mathcal{C}$ must also cross $(u_1,v_1)$ or $(u_3,v_3)$.
\label{clm:shared_crossing}
\end{claim}
\begin{proof}
Suppose that edge $(u',v')$ crosses $(u_2,v_2)$ at point $c_2'$ in the interior of $\mathcal{C}$. Recall that $c_2$ denotes the crossing point between $(u_2,v_2)$ and $(u,v)$. Since $(u_2,v_2) \in \mathcal{X}((u,v))$, edge $(u_2,v_2)$ is not crossed by \sa edges. So, edges $(u',v')$ and $(u,v)$ are not \sa, and exactly one of \ses $[u,u']$ or $[v,v']$ is not a \pe. Assume w.l.o.g.~that \se $[u,u']$ is not a \pe; see Figure~\ref{fig:3fan_middle_a}. This implies that $u=u'$ and that the region $R_{u,u'}$ defined by the edge-segment of $(u,v)$ between $u$ and $c_2$, the edge-segment of $(u_2,v_2)$ between $c_2$ and $c_2'$ and the edge-segment of $(u',v')$ between $c_2'$ and $u'$ has no vertices in its interior (red-shaded in Figure~\ref{fig:3fan_middle_a}). Then, edge $(u',v')$ must cross $(u_1,v_1)$, as otherwise vertex $v_1$ would be in the interior of $R_{u,u'}$; see Figure~\ref{fig:3fan_middle_b}. This completes the proof of this claim.
\end{proof}

Recall that our goal is to use Lemma~\ref{lem:size9}. Claim~\ref{clm:shared_crossing} implies that there exist at most four other edges that cross edges $(u_1,v_1)$, $(u_2,v_2)$ or $(u_3,v_3)$, i.e. we have at most $8$ edges that are either drawn in the interior of $\mathcal{C}$ or cross its boundary. Since one can draw eight chords in the interior of $\mathcal{C}$ as in Figure~\ref{fig:3_planar_6gon}, by Lemma~\ref{lem:exchange}.(\ref{prp:boundary}), for $\kappa+\lambda\leq8$ and $\mu=8$, it follows that the boundary edges of $\mathcal{C}$ exist in $\Gamma(G)$. Furthermore $\kappa+\lambda=8$ must hold. Note that the set $E_{\mathcal{C}}$ of edge-segments within $\mathcal{C}$ contains only edge-segments of these $\kappa+\lambda$ edges. Also, these $8$ edges have exactly one edge-segment within $\mathcal{C}$ that is crossed in the interior of $\mathcal{C}$. Hence conditions C.\ref{cnd:1} and C.\ref{cnd:2} of Lemma~\ref{lem:size9} are satisfied, and therefore we conclude that $(u,v)$ is a chord of a true planar $6$-cycle.
\end{proof}

\noindent By Lemma~\ref{lem:3_planar_small_faces}, any edge of $G$ that is crossed three times in $\Gamma(G)$ is a chord of an empty true-planar $6$-cycle. In the following, we consider edges of $G$ that have two or fewer crossings in $\Gamma(G)$. Hence, their crossing components contain edges with at most two crossings. Our approach is slightly different than the one we followed in the proof of Lemma~\ref{lem:2_planar_small_faces} for the optimal $2$-planar graphs.

\begin{lemma}
Let $\Gamma(G)$ be a PMCM-drawing of an optimal $3$-planar graph $G$ and let $\mathcal{X}$ be a crossing component of $\Gamma(G)$. Then, there is at least one edge in $\mathcal{X}$ that has three crossings.
\label{lem:3_planar_small_faces_2}
\end{lemma}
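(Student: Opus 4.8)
The plan is to argue by contradiction: suppose some crossing component $\mathcal{X}$ contains only edges with at most two crossings each. By Lemma~\ref{lem:3_planar_small_faces}, any edge with three crossings is already a chord of an empty true-planar $6$-cycle, so the "bad" component $\mathcal{X}$ lives entirely among edges crossed once or twice. The goal is to show that such a component admits a local replacement that increases the edge count beyond $5.5n-11$, contradicting optimality. The natural tool is Lemma~\ref{lem:exchange}: I would exhibit a \pec $\mathcal{C}$ bounding a region whose interior contains exactly the edge-segments of $\mathcal{X}$, and then compare the number $\kappa+\lambda$ of edges touching that region against the number $\mu$ of chords that a \pec of that size admits in the $3$-planar patterns of Figures~\ref{fig:2_planar_6gon}--\ref{fig:3_planar_6gon}.

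First I would pin down the structure forced by the two-crossing assumption. Take an edge $e=(u,v)\in\mathcal{X}$ that is crossed twice, say by $(u_1,v_1)$ and $(u_2,v_2)$, with crossings $c_1,c_2$ in this order along $e$. Since no edge of $\mathcal{X}$ crosses \sa edges (otherwise Lemma~\ref{lem:3_planar_xing_comp} would put $e$ on a true-planar $6$-cycle, and in particular some edge of the component would effectively behave like a three-crossing case), the pair $(u_1,v_1),(u_2,v_2)$ is not \sa, so by Property~\ref{prp:parallel} exactly one of the \ses $[u_1,u_2]$, $[v_1,v_2]$ is a \pe while the other is a homotopic self-loop, forcing coincidences among the endpoints just as in Claim~\ref{clm:fan_planar}. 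Using Property~\ref{prp:corner} the four corner edges $[u,u_1]$, $[u,v_1]$, $[v,u_2]$, $[v,v_2]$ are \pes, and together with the one surviving \se this yields a \pec $\mathcal{C}$ on five vertices (the self-loop collapses one boundary vertex), completely enclosing the three edges $(u,v)$, $(u_1,v_1)$, $(u_2,v_2)$.

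Next I would count. Because every edge of $\mathcal{X}$ has at most two crossings, and any edge crossing one of $(u_1,v_1)$, $(u_2,v_2)$ in the interior of $\mathcal{C}$ must (by an argument paralleling Claim~\ref{clm:shared_crossing}) also cross another boundary-incident edge rather than escape freely, the number $\kappa+\lambda$ of edges with an edge-segment inside $\mathcal{C}$ stays small. For a five-vertex \pec one can draw five chords without violating $3$-planarity (Figure~\ref{fig:2_planar_5gon}), and in fact a six-vertex \pec admits eight (Figure~\ref{fig:3_planar_6gon}). The arithmetic should give $\mu>\kappa+\lambda$, whence Lemma~\ref{lem:exchange}.(\ref{prp:nonoptim}) declares $G$ non-optimal — the desired contradiction. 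Alternatively, if the count comes out as equality, I would invoke Lemma~\ref{lem:exchange}.(\ref{prp:boundary}) to force all boundary edges present and then produce a true-planar odd cycle, contradicting Property~\ref{prp:3planar_odd_cycle}.

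The hard part will be the bookkeeping that establishes the inequality $\kappa+\lambda<\mu$ robustly across all endpoint-coincidence subcases. The two-crossing restriction gives more crossing slack than the three-crossing setting of Lemma~\ref{lem:3_planar_small_faces}, so I expect to need a sharp analogue of Claim~\ref{clm:shared_crossing} guaranteeing that edges crossing the component's edges cannot independently inflate $\lambda$; without it the count could tie rather than strictly beat $\mu$, and one would be pushed into the more delicate boundary-edge/odd-cycle argument. I would therefore spend most effort ensuring that every interior crossing is \emph{shared}, so that the $\lambda$-edges are few enough for the density gain to be strict.
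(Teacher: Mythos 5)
Your main case---an edge $(u,v)\in\mathcal{X}$ crossed twice by edges that are not \sa---follows the paper's argument essentially step for step: corner edges plus the one surviving \se give a \pec on five vertices, the count gives $\kappa+\lambda\leq 5=\mu$, and since this is a tie (not a strict inequality), the paper does exactly what you sketch as your fallback: redraw the five edges as chords, force the boundary of $\mathcal{C}$ to be true-planar, and contradict the prohibition of true-planar odd cycles. So that part of your plan is sound, and your instinct that the count ties rather than beats $\mu$ is correct.

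The genuine gap is that you never handle the case in which $\mathcal{X}$ contains \emph{no} edge with two crossings. Your proof opens with ``take an edge $e=(u,v)\in\mathcal{X}$ that is crossed twice,'' but under the contradiction hypothesis such an edge need not exist: the component may consist of exactly two edges that cross each other once (this is precisely the first case in the paper's proof). This case cannot be absorbed by your counting framework. The four endpoints define a \pec on only four vertices; in its interior one can draw at most two non-homotopic chords, so $\mu=2=\kappa+\lambda$ and Lemma~\ref{lem:exchange}.(\ref{prp:nonoptim}) gives nothing, while the boundary-forcing route of Lemma~\ref{lem:exchange}.(\ref{prp:boundary}) only produces a true-planar $4$-cycle, which is even and therefore does not contradict Property~\ref{prp:3planar_odd_cycle}. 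The paper needs an extra idea here: remove the two crossing edges, insert a \emph{new vertex} $x$ inside the quadrilateral joined by a true-planar edge, so that the walk $u$, $v'$, $x$, $v'$, $v$, $u'$ becomes a non-simple \pec of length six, into which $8$ crossing edges can be placed as in Figure~\ref{fig:3_planar_6gon}; the resulting graph has $n'=n+1$ vertices and at least $m+6$ edges, i.e.\ $m'\geq 5.5n'-10.5$, exceeding the $3$-planar density bound. Without this (or some substitute) construction, your proof is incomplete; with it, your argument would match the paper's.
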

\begin{proof}
Assume to the contrary that there exists a crossing component $\mathcal{X}$ where all edges have at most two crossings. We distinguish two cases depending on whether $\mathcal{X}$ contains an edge with two crossings or not. Assume first that $\mathcal{X}$ does not contain an edge with two crossings. Then, $|\mathcal{X}|=2$. W.l.o.g.~assume that $\mathcal{X}=\{e,e'\}$. The four endpoints of edges $e$ and $e'$ define a \pec $\mathcal{C}$ on $4$ vertices; see Figure~\ref{fig:3_planar_one_crossing_before}. Since $e$ and $e'$ have only one crossing each, the \pes of the boundary of $\mathcal{C}$ exist in $\Gamma(G)$ and are true-planar edges. Note that there are no other edges passing through the interior of $\mathcal{C}$. We proceed by removing edges $e$ and $e'$ and replace them with the $3$-planar pattern of Figure~\ref{fig:3_planar_one_crossing_after}. In particular we add a vertex $x$ in the interior of $\mathcal{C}$ and true-planar edge $(v',x)$. Vertices $u$, $v'$, $x$, $v'$, $v$ and $u'$ define a \pec on six vertices, and we can add $8$ crossing edges in its interior as in Figure~\ref{fig:3_planar_6gon}. If $G$ has $n$ vertices and $m$ edges, the derived graph $G'$ has $n'=n+1$ vertices and $m'=m-2+8$ edges Then, $G'$ is $3$-planar and has $m'=5.5n'-10.5$ edges, that is, $G'$ has more edges than allowed by $3$-planarity; a contradiction.

To complete the proof, assume that there exists an edge $(u,v)\in \mathcal{X}$ which has two crossings, say with $(u_1,v_1)$ and $(u_2,v_2)$. By Lemma~\ref{lem:3_planar_independent}, $(u_1,v_1)$ and $(u_2,v_2)$ are not \sa. Since all edges in $\mathcal{X}$ have at most two crossings, adopting the proof of Lemma~\ref{lem:2_planar_small_faces} we can prove that the endpoints of $(u,v)$, $(u',v')$ and $(u'',v'')$ define a \pec $\mathcal{C}$ on five vertices, with at most five edges passing through its interior. We proceed by redrawing these five edges as chords of $\mathcal{C}$ (as in Figure~\ref{fig:2_planar_5gon}). All its boundary edges are true-planar in the new drawing. The derived graph is optimal, as it has at least as many edges as $G$. Observe, however, that $\mathcal{C}$ becomes a true-planar $5$-cycle in the new drawing; a contradiction to Property~\ref{prp:3planar_even_order}.
\end{proof}

\begin{figure}[t]
	\centering
    \begin{minipage}[b]{.18\textwidth}
        \centering
        \includegraphics[width=\textwidth,page=1]{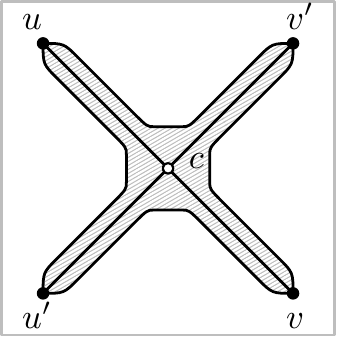}
        \subcaption{~}\label{fig:3_planar_one_crossing_before}
    \end{minipage}
    \begin{minipage}[b]{.18\textwidth}
        \centering
        \includegraphics[width=\textwidth,page=2]{images/3planar_one_crossing}
        \subcaption{~}\label{fig:3_planar_one_crossing_after}
    \end{minipage}
    \caption{%
    Configurations used in Lemma~\ref{lem:3_planar_small_faces_2}.}.
    \label{fig:3_planar_one_crossing_1}
\end{figure}

\noindent The proof of Lemma~\ref{prp:3_planar_skeleton_connected} is similar to the one of Lemma~\ref{prp:2_planar_skeleton_connected} and is therefore omitted.

\begin{lemma}
The true planar skeleton $\Pi(G)$ of a PMCM-drawing $\Gamma(G)$ of an optimal $3$-planar graph is connected.
\label{prp:3_planar_skeleton_connected}
\end{lemma}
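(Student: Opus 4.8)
The plan is to mirror the proof of Lemma~\ref{prp:2_planar_skeleton_connected}, replacing each appeal to the $2$-planar structural results with its $3$-planar counterpart. I would argue by contradiction: assume $\Pi(G)$ is disconnected and let $H$ be one connected component of the true-planar skeleton. The goal is to exhibit a true-planar path in $\Pi(G)$ that joins $H$ to a vertex outside $H$, contradicting the assumption that $H$ is a maximal connected piece.

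First I would invoke Property~\ref{prp:connected} applied to the two compact subgraphs defined by a closed region separating $H$ from the rest of $G$. This guarantees that $H$ cannot be cleanly separated: either there is an edge $(u,v)$ with $u \in H$ and $v \in G \setminus H$, or there is a pair of crossing edges $e_1 \in H$ and $e_2 \in G \setminus H$ (so $e_1$ and $e_2$ lie in a common crossing component $\mathcal{X}$). In the first case, since $(u,v)$ joins two distinct components it cannot itself be a true-planar edge, so it is crossed at least once. By Lemma~\ref{lem:3_planar_small_faces_2}, the crossing component $\mathcal{X}((u,v))$ contains an edge with three crossings, and then by Lemma~\ref{lem:3_planar_small_faces} that edge---and hence every edge of $\mathcal{X}((u,v))$, including $(u,v)$ itself---is a chord of an empty true-planar $6$-cycle. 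The boundary of this $6$-cycle is a true-planar path connecting $u$ and $v$ in $\Pi(G)$, placing them in the same component; a contradiction.

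For the second case, $e_1$ and $e_2$ belong to the same crossing component $\mathcal{X}$. Again Lemma~\ref{lem:3_planar_small_faces_2} supplies an edge of $\mathcal{X}$ with three crossings, and Lemma~\ref{lem:3_planar_small_faces} shows this edge lies on an empty true-planar $6$-cycle; since all edges of a crossing component that contains a thrice-crossed edge are chords of the same $6$-cycle, both $e_1$ and $e_2$ are chords of this common true-planar $6$-cycle. Consequently the endpoints of $e_1$ and $e_2$ all lie on the boundary of a single empty true-planar $6$-cycle, so they are connected through true-planar edges in $\Pi(G)$; this again contradicts the fact that $e_1 \in H$ while $e_2 \in G\setminus H$.

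The substantive difference from the $2$-planar proof, and the step I would flag as needing the most care, is that a crossing edge is no longer automatically a chord of the relevant cycle: in the $2$-planar setting Lemma~\ref{lem:2_planar_small_faces} directly hands us a true-planar $5$-cycle for any crossed edge, whereas here the guarantee from Lemma~\ref{lem:3_planar_small_faces} attaches only to edges with \emph{three} crossings. The bridge is precisely Lemma~\ref{lem:3_planar_small_faces_2}, which ensures every crossing component harbours a thrice-crossed edge, together with the fact that membership in a common crossing component forces all edges to share the same true-planar $6$-cycle. Once those two facts are in place the argument is routine, which is exactly why the authors state that the proof is analogous to that of Lemma~\ref{prp:2_planar_skeleton_connected} and omit it.
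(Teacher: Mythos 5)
Your proposal is correct and is precisely the argument the paper intends when it declares the proof ``similar to the one of Lemma~\ref{prp:2_planar_skeleton_connected}'': separation via Property~\ref{prp:connected}, then Lemma~\ref{lem:3_planar_small_faces_2} to obtain a thrice-crossed edge in the relevant crossing component, Lemma~\ref{lem:3_planar_small_faces} to place it on an empty true-planar $6$-cycle, and the shared-cycle observation (as in Lemma~\ref{lem:3_planar_xing_comp}) to conclude that the separating edge(s) are chords of that cycle, contradicting disconnectedness. You also correctly identified the only non-routine point in the adaptation, namely that the $6$-cycle guarantee attaches directly only to thrice-crossed edges and must be transported to the whole crossing component.
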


\begin{lemma}
The true-planar skeleton $\Pi(G)$ of a PMCM-drawing $\Gamma(G)$ of an optimal $3$-planar graph $G$ contains only faces of length $6$, each of which contains $8$ crossing edges in $\Gamma(G)$.
\label{lem:3_planar_faces}
\end{lemma}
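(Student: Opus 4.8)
The plan is to mirror the structure of the proof of Lemma~\ref{lem:2_planar_faces}, replacing the true-planar $5$-cycles by $6$-cycles and feeding in the $3$-planar machinery developed above. First I would invoke Lemma~\ref{prp:3_planar_skeleton_connected} to guarantee that $\Pi(G)$ is connected, so that every face of $\Pi(G)$ is a connected region whose boundary is a (possibly non-simple) closed walk of true-planar edges. This reduces the statement to identifying, for each face, whether it carries crossings in its interior or is chordless.

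The key step is to show that every crossing edge lies inside an empty true-planar $6$-cycle carrying exactly $8$ crossing edges. By Lemma~\ref{lem:3_planar_small_faces_2}, every crossing component contains an edge $e$ with three crossings, and by Lemma~\ref{lem:3_planar_small_faces} this $e$ is a chord of an empty true-planar $6$-cycle $\mathcal{C}$. Since $\mathcal{C}$ is empty and its boundary consists of true-planar edges, no edge crossing $e$ can leave the interior of $\mathcal{C}$; arguing as in Lemma~\ref{lem:3_planar_xing_comp}, the whole component $\mathcal{X}(e)$ is drawn as chords inside $\mathcal{C}$. The counting already performed in the proofs of Lemmas~\ref{lem:3_planar_independent} and \ref{lem:3_planar_small_faces} (the identity $\kappa+\lambda=8$, together with Lemma~\ref{lem:size9}) then shows that exactly $8$ such chords fill $\mathcal{C}$. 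As the interior of $\mathcal{C}$ contains no true-planar edges, $\mathcal{C}$ bounds a single face of $\Pi(G)$; hence every face of $\Pi(G)$ that carries a crossing is a $6$-cycle with $8$ crossing edges.

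It then remains to exclude chordless faces. Faces of length $1$ or $2$ would be homotopic self-loops or parallel edges, forbidden by the assumption on $\Gamma(G)$; faces of odd length are ruled out by Property~\ref{prp:3planar_odd_cycle}. For a chordless face of even length at least $4$, I would add a true-planar chord joining two non-consecutive, distinct boundary vertices in its empty interior; this increases the edge count beyond $5.5n-11$, contradicting the density bound and hence the optimality of $G$. Consequently every face carries at least one crossing, and by the previous paragraph each such face is a $6$-cycle with exactly $8$ crossing edges, which is the claim.

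The hard part has essentially been discharged into the preceding lemmas, so the delicacy remaining here is of two kinds. First, I must verify that a crossing-bearing face is \emph{precisely} the $6$-cycle of Lemma~\ref{lem:3_planar_small_faces}: its empty interior cannot be further subdivided by true-planar edges, and two distinct crossing components cannot share a face, since each is confined to its own $6$-cycle whose true-planar boundary separates it from the rest of $\Pi(G)$. Second, the non-simple case when inserting the chord into a chordless even face needs care, as one must choose the two boundary endpoints to be distinct so that the added curve is a genuine \pe rather than a homotopic self-loop.
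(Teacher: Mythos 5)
Your proposal is correct and follows essentially the same route as the paper's proof: connectivity of $\Pi(G)$, then Lemmas~\ref{lem:3_planar_small_faces_2}, \ref{lem:3_planar_small_faces} and the observation of Lemma~\ref{lem:3_planar_xing_comp} to confine every crossing component to an empty true-planar $6$-cycle, followed by the same three-way exclusion of chordless faces (chord insertion for length $\geq 4$, Property~\ref{prp:3planar_odd_cycle} for odd lengths, and the homotopy assumption for lengths $1$ and $2$). Your additional care --- deriving the exact count of $8$ chords from the $\kappa+\lambda=8$ identities of the earlier lemmas, and choosing distinct non-consecutive endpoints for the inserted chord in a non-simple face --- only makes explicit what the paper leaves implicit.
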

\begin{proof}
Since by Lemma~\ref{prp:3_planar_skeleton_connected} $\Pi(G)$ is connected, all faces of $\Pi(G)$ are connected as well. By Lemma~\ref{lem:3_planar_small_faces}, any edge that is crossed three times in $\Gamma(G)$ is a chord of an empty true-planar $6$-cycle in $\Gamma(G)$. By Lemma~\ref{lem:3_planar_small_faces_2}, an edge $e$ that is crossed fewer than three times belongs to a crossing component $\mathcal{X}(e)$ containing an edge that is crossed three times. This last edge defines an empty true-planar $6$-cycle in $\Gamma(G)$ and by the observation we made in the proof of Lemma~\ref{lem:3_planar_xing_comp} all edges of $\mathcal{X}(e)$, including $e$, are also chords of this cycle. So, every crossing edge is a chord of a true-planar $6$-cycle. Note that one cannot embed nine edges in the interior of a true-planar $6$-cycle without deviating $3$-planarity but at most eight. We claim that $\Pi(G)$ has no chordless faces. First, we observe that $\Pi(G)$ cannot contain a chordless face of size $\geq 4$, as otherwise we could draw in its interior at least one chord, which would contradict the optimality of $G$. Also, by Property~\ref{prp:3planar_odd_cycle} $\Pi(G)$ contains no faces of length $3$. Finally, observe that $\Pi(G)$ cannot contain faces of length $1$ or $2$, as those would correspond to homotopic self-loops and parallel edges. This completes the proof.
\end{proof}

\noindent We say that a chord of a cycle of length $2s$ is a \emph{middle chord} if the two paths along the cycle connecting its endpoints both have length $s$. Next we state the main theorem of this section.

\begin{theorem}
A graph $G$ is optimal $3$-planar if and only if $G$ admits a drawing $\Gamma(G)$ without homotopic parallel edges and self-loops, such that the true-planar skeleton $\Pi(G)$ of $\Gamma(G)$ spans all vertices of $G$, it contains only faces of length $6$ (that are not necessarily simple), and each face of $\Pi(G)$ has $8$ crossing edges in its interior in $\Gamma(G)$ such that one of the middle chords is missing.
\label{thm:3-characterization}
\end{theorem}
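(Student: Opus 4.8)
The plan is to prove both directions, using the structural lemmas already established for the forward direction and Euler's formula together with a careful crossing-pattern analysis for the reverse direction.

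\medskip

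\noindent\textbf{Forward direction.} Assume $G$ is optimal $3$-planar and fix a PMCM-drawing $\Gamma(G)$. By Lemma~\ref{lem:3_planar_faces}, the true-planar skeleton $\Pi(G)$ contains only faces of length $6$, each carrying exactly $8$ crossing edges in its interior. Since every crossing edge is a chord of a true-planar $6$-cycle (by Lemmas~\ref{lem:3_planar_small_faces}, \ref{lem:3_planar_small_faces_2}, and \ref{lem:3_planar_xing_comp}) and since $\Pi(G)$ is connected (Lemma~\ref{prp:3_planar_skeleton_connected}), the endpoints of every crossing edge lie on a face of $\Pi(G)$, so $\Pi(G)$ spans all vertices of $G$. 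It remains to argue the precise crossing pattern inside each $6$-face, in particular that the $8$ chords realize exactly the pattern of Figure~\ref{fig:3_planar_6gon}, namely the unique (up to symmetry) way of drawing $8$ pairwise non-homotopic chords in a hexagon within $3$-planarity, and that in this pattern exactly one of the three middle chords is absent. The key combinatorial fact here is that a true-planar $6$-cycle admits room for at most $8$ interior chords under $3$-planarity, and the extremal configuration forces one missing middle chord; this can be verified by a direct local count on the hexagon, checking that any attempt to include all three middle chords would force a fourth crossing on some edge.

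\medskip

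\noindent\textbf{Reverse direction.} Suppose $G$ admits a drawing $\Gamma(G)$ of the prescribed form. Let $n$, $m$, and $f$ denote the number of vertices, edges, and faces of $\Pi(G)$. Because every face of $\Pi(G)$ has length $6$ and each edge of $\Pi(G)$ borders two face-incidences, we have $6f = 2m$, i.e.\ $m = 3f$. Combining this with Euler's formula $n - m + f = 2$ for the connected planar skeleton yields $m = 3(n-2)$ and $f = (n-2)/2$. Each face contributes exactly $8$ additional crossing edges drawn in its interior, and these crossing edges are not shared between faces (each is a chord wholly inside one hexagon), so the total edge count of $G$ is
\[
m + 8f = 3(n-2) + 8\cdot\frac{n-2}{2} = 3(n-2) + 4(n-2) = 7(n-2).
\]
This gives $7n - 14$, which is \emph{not} the target $5.5n - 11$; I therefore expect the intended count to run differently, and the main obstacle is getting the face/edge bookkeeping exactly right. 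The correct accounting should treat $\Pi(G)$ as the crossing-free subgraph: with $5.5n - 11$ total edges and each hexagonal face contributing $8$ crossing chords, one solves $m + 8f = 5.5n - 11$ simultaneously with $6f = 2m$ and Euler's formula, which pins down $m$ and $f$ and confirms the density $5.5n - 11$ is achieved exactly.

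\medskip

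\noindent The delicate step throughout is ensuring that the drawing has no homotopic parallel edges or self-loops, so that the density bound of $5.5n - 11$ from~\cite{BKR16} genuinely applies and equality certifies optimality. For the reverse direction this is guaranteed by hypothesis; for the forward direction it follows from the PMCM assumption and the fact that $\Gamma(G)$ realizes the bound. The hardest part is the forward-direction verification that the interior crossing pattern of each hexagon is forced to be the eight-chord configuration with exactly one missing middle chord, rather than merely asserting eight chords exist: one must rule out alternative $8$-chord patterns and show the missing chord is necessarily a middle chord by examining which chord omissions remain consistent with all three boundary-to-boundary crossings staying within $3$-planarity.
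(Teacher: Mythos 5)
Your reverse direction contains the genuine gap, and it is twofold. First, an algebra slip: from $6f=2m$ you get $m=3f$, and substituting into Euler's formula $n-m+f=2$ gives $n-2f=2$, hence $f=(n-2)/2$ and $m=3(n-2)/2$ --- not $m=3(n-2)$ as you wrote. That error is what produced your count $m+8f=7n-14$. Second, and more seriously, your proposed repair is circular: you suggest solving $m+8f=5.5n-11$ simultaneously with $6f=2m$ and Euler's formula, but $m+8f=5.5n-11$ is precisely the conclusion the reverse direction must \emph{derive}; you may not assume it. With the correct values the count closes immediately and unconditionally, exactly as in the paper: $m+8f=3(n-2)/2+8(n-2)/2=11(n-2)/2=5.5n-11$. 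Note also that applying Euler's formula with the same $n$ as $G$ is licensed only because $\Pi(G)$ spans all vertices (a hypothesis of the theorem), and that the prescribed drawing is $3$-planar (each face carries the pattern of Figure~\ref{fig:3_planar_6gon}), so achieving $5.5n-11$ edges without homotopic parallel edges or self-loops certifies optimality against the density bound of~\cite{BKR16}.

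Your forward direction is essentially the paper's --- the citations of Lemmas~\ref{lem:3_planar_small_faces}, \ref{lem:3_planar_small_faces_2}, \ref{lem:3_planar_faces} and~\ref{prp:3_planar_skeleton_connected} are the right ones --- except for the missing-middle-chord step, where the paper has a one-line argument you overlooked: the three middle chords of a hexagonal face pairwise cross, so Property~\ref{prp:3_planar_quasi} (quasi-planarity of $\Gamma(G)$) immediately forbids all three from being present. Your alternative ``direct local count'' is in fact salvageable (each middle chord crosses the other two middles plus two short chords, so with all nine possible chords minus a single short chord some middle chord would retain four crossings), but you left it as an unverified sketch, and your additional claim that the $8$-chord configuration is the \emph{unique} pattern up to symmetry is stronger than the theorem needs and was not established; all that is required is that exactly one chord of the nine pairwise non-homotopic candidates is absent and that it must be a middle chord.
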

\begin{proof}
For the forward direction, consider an optimal $3$-planar graph $G$. By Lemma~\ref{lem:3_planar_faces}, the true-planar skeleton $\Pi(G)$ of its $3$-planar PMCM-drawing $\Gamma(G)$ contains only faces of length $6$ and each face of $\Pi(G)$ has $8$ crossing edges in its interior in $\Gamma(G)$. By Property~\ref{prp:3_planar_quasi}, one of the three middle chords of each face of $\Pi(G)$ cannot be present. Since the endpoints of two crossing edges are within a true-planar $6$-cycle (by Lemmas~\ref{lem:3_planar_small_faces} and \ref{lem:3_planar_small_faces_2}) and since $\Pi(G)$ is connected (by Lemma~\ref{prp:3_planar_skeleton_connected}), $\Pi(G)$ spans all vertices of $G$. This completes the proof of this direction.

For the reverse direction, denote by $n$, $m$ and $f$ the number of vertices, edges and faces of $\Pi(G)$. Since $\Pi(G)$ spans all vertices of $G$, it suffices to prove that $G$ has exactly $5.5n-11$ edges. The fact that $\Pi(G)$ contains only faces of length $6$ implies that $6f=2m$. By Euler's formula for planar graphs, $m=3(n-2)/2$ and $f=(n-2)/2$ follows. Since each face of $\Pi(G)$ contains exactly $8$ crossing edges, the total number of edge of $G$ equals to $m+8f=5.5n-11$.
\end{proof}

\section{Further Insights From Our Work}
\label{sec:discussion}
In this section, we give new insights which follow from the new characterization of optimal $2$- and $3$-planar graphs. For simple optimal $3$-planar graphs we can note the following. Since the planar skeleton of an optimal $3$-planar graph consists exclusively of faces of length $6$, it cannot be simple. Hence, simple $3$-planar graphs do not reach the bound of $5.5n -11$ edges. Note that the best-known lower bound for simple optimal $3$-planar graph is $5.5n-15$~\cite{PachT97}.

\begin{crl}
Simple $3$-planar graphs have at most $5.5n - 11.5$ edges.
\label{rec:simple3-planar}
\end{crl}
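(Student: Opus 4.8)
The plan is to show that no \emph{simple} $3$-planar graph attains the density bound $5.5n-11$, and then to obtain the stated half-integer bound purely by integrality. I would first reduce according to the parity of $n$. Every $3$-planar graph without homotopic parallel edges or self-loops satisfies $m\le 5.5n-11$, and in particular so does every simple graph. If $n$ is odd, then $5.5n-11$ is not an integer, so $m\le 5.5n-11$ together with $m\in\mathbb{Z}$ already yields $m\le 5.5n-11.5$, and nothing further is needed. If $n$ is even, then $5.5n-11$ is an integer, and it suffices to rule out equality, i.e.\ to prove that a simple graph cannot be optimal; this gives $m\le 5.5n-12\le 5.5n-11.5$. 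Hence the whole content reduces to the claim $(\ast)$: \emph{every optimal $3$-planar graph is non-simple.}

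To prove $(\ast)$, I would fix a PMCM-drawing and invoke Theorem~\ref{thm:3-characterization}: the true-planar skeleton $\Pi(G)$ consists only of (not necessarily simple) faces of length $6$, with $m_{\Pi}=3(n-2)/2$ edges and $f=(n-2)/2$ faces (exactly as computed in the reverse direction of that theorem), and each face carries $8$ chords, namely all $9$ possible chords of its hexagonal boundary walk except for one middle chord. In particular, in every face all six \emph{short} chords (joining boundary positions two apart) are present. Assuming $G$ simple, I would then derive a self-loop or a pair of parallel edges from this chord pattern, contradicting simplicity, splitting into two cases according to whether some face of $\Pi(G)$ is non-simple.

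If some face has a vertex $w$ occupying two boundary positions, I read off a forbidden configuration from their distance along the length-$6$ walk: at distance $1$ the two occurrences are joined by a boundary (skeleton) edge, a self-loop at $w$; at distance $2$ they are joined by a present short chord, again a self-loop; and at distance $3$ either the middle chord joining them is present (a self-loop) or it is the missing one, in which case the short chord from $w$ to the intermediate boundary vertex $p$ and the skeleton edge $wp$ of the boundary walk form a pair of parallel edges. In every subcase $G$ is non-simple. If instead all faces are simple, then $\Pi(G)$ is $2$-connected, and since $\sum_{v}(3-\deg_{\Pi}v)=3n-2m_{\Pi}=6>0$ there is a vertex $v$ of skeleton-degree $2$ with two distinct neighbours $a,b$. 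Such a $v$ lies on exactly two distinct faces $F_1,F_2$, and on each of them $a,v,b$ are consecutive, so the short chord joining the two boundary-neighbours of $v$ is in both cases the pair $(a,b)$; this produces two distinct edges between $a$ and $b$, i.e.\ parallel edges. Thus $(\ast)$ holds in all cases, and the corollary follows.

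The hard part is the all-simple-faces case: there is no repeated vertex to exploit directly, and for $n\ge 10$ one cannot conclude from the crude inequality $m>\binom{n}{2}$ (which only disposes of $n\le 9$). The decisive point is that the deficiency identity $\sum_v(3-\deg_{\Pi}v)=6$ forces a skeleton vertex of degree exactly $2$, whose two incident hexagonal faces are each \emph{forced} by the characterization to contain the same short chord across that vertex, yielding the parallel edges. The steps requiring care are verifying that these two short chords are genuinely distinct edges of $G$ (they are drawn inside different faces, hence cannot be identified), and that a degree-$2$ vertex really sees two distinct faces, which is precisely where $2$-connectivity, equivalently the absence of non-simple faces, enters.
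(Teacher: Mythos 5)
Your proposal is correct and follows essentially the same route as the paper: the paper likewise deduces the corollary by arguing that an optimal $3$-planar graph cannot be simple (because its true-planar skeleton consists solely of hexagonal faces, each carrying all chords except one middle chord) and then drops to $5.5n-11.5$ by integrality. The difference is only in level of detail: the paper compresses the non-simplicity claim into a single sentence, whereas your case analysis (a repeated vertex on a face walk yielding a self-loop or parallel pair, or else a degree-$2$ skeleton vertex, guaranteed by the deficiency count $3n-2m_{\Pi}=6$, whose two incident faces both contain the same short chord) is exactly the justification the paper leaves implicit.
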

\medskip

A \emph{bar-visibility representation} of a graph is a representation where vertices are represented as horizontal bars, and edges as vertical segments, called \emph{visibilities}, between corresponding bars. In the traditional bar-visibility model, a visibility edge is not allowed to cross any other bar except for the two bars at its endpoints. A central result here is due to Tamassia and Tollis~\cite{DBLP:journals/dcg/TamassiaT86} who showed that any biconnected planar graph admits a bar-visibility representation, which can be computed in linear time. The variant of \emph{bar 1-visibility} allows each visibility edge to cross at most one vertex bar. This model allows to represent also non-planar graphs in a limited way, e.g., the number of edges of a bar 1-visible graph on $n$ vertices can be at most $6n-20$~\cite{DBLP:journals/jgaa/DeanEGLST07}. Notable is a result by Brandenburg~\cite{DBLP:journals/jgaa/Brandenburg14} who showed that $1$-planar graphs admit bar 1-visibility representations; see also~\cite{DBLP:journals/jgaa/Evans0LMW14}.

We follow a similar technique to the one of Brandenburg~\cite{DBLP:journals/jgaa/Brandenburg14} to prove that simple optimal $2$-planar graphs are bar 1-visible. Since the faces defined by the true-planar skeleton $\Pi(G)$ of a simple optimal $2$-planar graph $G$ have size $5$, we can construct a bar-visibility representation $\mathcal{L}(G)$ of $\Pi(G)$ based on an $s$-$t$ ordering of $\Pi(G)$~\cite{DBLP:journals/dcg/TamassiaT86}. In the $s$-$t$ ordering each face is oriented such that it consists of a source and a target vertex joined by two chains of vertices (one on the left and one on the right). Since $\Pi(G)$ consists of faces of length $5$, the two chains have either $1$ and $2$ vertices each, or, $0$ and $3$ vertices each. In $\mathcal{L}(G)$, the source and target bars of a face $f$ see each other through a vertical visibility edge $b_f$ and the bars of the two chains are arranged to the left and to the right of $b_f$. Now it is straightforward to extend the bars of the two chains towards $b_f$, such that the bars of the two chains are vertically overlapping, and all five crossing edges of that face are realized. We conclude this observation in the following corollary.

\begin{crl}
Simple optimal $2$-planar graphs admit bar $1$-visibility representations.
\label{rec:2-planar-bar1}
\end{crl}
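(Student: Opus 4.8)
The plan is to read the representation off the structural characterization of Theorem~\ref{thm:2-characterization}. Let $G$ be a simple optimal $2$-planar graph and fix a PMCM-drawing $\Gamma(G)$. By Lemma~\ref{lem:2_planar_faces} the true-planar skeleton $\Pi(G)$ is a connected plane graph all of whose faces are $5$-cycles, and every face carries exactly five crossing edges in its interior. Since these five edges pairwise cross inside a pentagon and use only its five corners as endpoints, they are precisely the five diagonals of that pentagon, i.e.\ they form a pentagram. Hence it suffices to produce a bar-visibility drawing of the \emph{planar} skeleton $\Pi(G)$ and then, face by face, realise the five diagonals of each pentagon as visibilities that cross at most one vertex bar.

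First I would build a bar-visibility representation $\mathcal{L}$ of $\Pi(G)$ in the classical way: compute an $s$-$t$ ordering of $\Pi(G)$ and apply the Tamassia--Tollis construction~\cite{DBLP:journals/dcg/TamassiaT86}, so that each vertex becomes a horizontal bar whose vertical position respects the ordering and each skeleton edge becomes a vertical visibility. With respect to this ordering, every pentagonal face $f$ has a unique source $s_f$ (its lowest bar) and a unique sink $t_f$ (its highest bar), and its boundary splits into a left and a right chain from $s_f$ to $t_f$. As $f$ has five vertices, the three remaining ones are distributed among the two chains as $(1,2)$ or as $(0,3)$. In $\mathcal{L}$ the bars $s_f$ and $t_f$ see each other through the interior of $f$ along a vertical corridor, and I place the central visibility $b_f$ there.

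Next I would realise the five diagonals of each pentagon by lengthening the interior bars horizontally toward the corridor of $b_f$, following the technique of Brandenburg~\cite{DBLP:journals/jgaa/Brandenburg14}. A diagonal joining two vertices separated by a single interior bar is drawn as a vertical visibility on the corresponding chain side, crossing exactly that one bar; a diagonal that would otherwise have to pass two interior bars of a $3$-chain is instead routed through the source--sink corridor, where by construction no bar lies strictly between its endpoints, so that it crosses no bar at all. A short case distinction over the $(1,2)$ and $(0,3)$ splits then shows that all five diagonals of each face can be realised simultaneously with the bar $1$-visibility property; and since every bar extension and every visibility stays inside the horizontal slab occupied by $f$, distinct faces do not interfere, so the edges of $\Pi(G)$ together with all realised diagonals give a bar $1$-visibility representation of $G$.

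The step I expect to be most delicate is this simultaneous routing: the three interior bars of a face must be stretched toward the central corridor in a coherent left-to-right order so that no diagonal is accidentally forced across a second bar, and the corridor must remain wide enough to host $b_f$ together with the (at most two) long diagonals rerouted through it. A separate technical point is the existence of the $s$-$t$ ordering, which requires $\Pi(G)$ to be $2$-connected; for a simple optimal $2$-planar graph this follows from all faces being bounded by $5$-cycles, but if necessary one first augments $\Pi(G)$ by true-planar edges, preserving the pentagonal face structure, before invoking the Tamassia--Tollis construction.
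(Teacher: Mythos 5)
Your proposal is correct and follows essentially the same route as the paper: build a Tamassia--Tollis bar-visibility representation of the true-planar skeleton $\Pi(G)$ via an $s$-$t$ ordering, observe that each pentagonal face splits into chains of $(1,2)$ or $(0,3)$ interior vertices around the source--sink corridor $b_f$, and extend the chain bars toward that corridor so that all five crossing chords of the face are realized with at most one bar crossed each, following Brandenburg's technique. Your treatment is in fact somewhat more careful than the paper's, which leaves the per-diagonal routing and the biconnectivity of $\Pi(G)$ (needed for the $s$-$t$ ordering) implicit.
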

\medskip

In a \emph{fan-planar drawing} of a graph an edge can cross only edges with a common endpoint. Graphs that admit fan-planar drawings are called \emph{fan-planar}. Fan-planar graphs have been introduced by Kaufmann and Ueckerdt~\cite{KU14}, who proved that every simple $n$-vertex fan-planar drawing has at most $5n-10$ edges, and that this bound is tight for $n \geq 20$. This density result immediately implies that optimal $3$-planar graphs are not fan-planar. On the other hand, the density bound of $2$-planar graphs is the same as the one for fan-planar graphs. Binucci et al.~\cite{DBLP:journals/tcs/BinucciGDMPST15} already investigated the relationship between these two classes and in particular they proved that there exist $2$-planar graphs that are not fan-planar. Our characterization for simple optimal $2$-planar graphs, however, implies that all optimal $2$-planar graphs are fan-planar, as their PMCM-drawings are in fact fan-planar. We conclude this observation in the following corollary.

\begin{crl}
Simple optimal $2$-planar graphs are optimal fan-planar.
\label{rec:2-planar-fan}
\end{crl}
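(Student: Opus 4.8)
The plan is to show that the PMCM-drawing $\Gamma(G)$ furnished by Theorem~\ref{thm:2-characterization} is \emph{itself} a fan-planar drawing; combined with the fact that $G$ already has $5n-10$ edges, this immediately yields optimality in the fan-planar sense. Recall that Kaufmann and Ueckerdt~\cite{KU14} proved that every simple $n$-vertex fan-planar graph has at most $5n-10$ edges. Hence, once $G$ is shown to be fan-planar, its $5n-10$ edges force it to attain this bound, so it is optimal fan-planar, and the only substantive work is the local verification of the fan-planar condition on $\Gamma(G)$.

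First I would invoke Theorem~\ref{thm:2-characterization}: the true-planar skeleton $\Pi(G)$ spans all vertices of $G$ and consists solely of faces of length five, each carrying exactly five crossing chords drawn in its interior as in Figure~\ref{fig:2_planar_5gon}. The key structural observation is that every crossing edge is a chord of a \emph{single} such face, so both of its two crossings lie inside that one face, and no crossing ever occurs between chords belonging to different faces. Consequently it suffices to check the fan-planar requirement locally, face by face.

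Next I would analyze the five chords inside one pentagonal face with boundary vertices $v_1,\dots,v_5$ in cyclic order. Since each chord must be crossed exactly twice while respecting $2$-planarity, the five chords are forced to realize the five ``short'' diagonals $\{v_i,v_{i+2}\}$ (indices taken mod $5$), i.e.\ the pentagram of Figure~\ref{fig:2_planar_5gon}. In this configuration the chord $\{v_i,v_{i+2}\}$ is crossed precisely by the two chords $\{v_{i+1},v_{i+3}\}$ and $\{v_{i-1},v_{i+1}\}$, both of which emanate from the single vertex $v_{i+1}$ lying between its endpoints along the boundary. Thus the two edges crossing any chord share the common endpoint $v_{i+1}$, which is exactly the defining property of a fan-planar drawing. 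Repeating this for every face shows that in $\Gamma(G)$ each edge is crossed only by edges sharing a common vertex, so $\Gamma(G)$ is fan-planar and therefore $G$ is a fan-planar graph.

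I expect the main obstacle to be the combinatorial step of arguing that the five interior chords must realize the pentagram pattern, and that the two chords crossing a given chord necessarily emanate from the same boundary vertex even when the facial $5$-cycle is non-simple; this requires tracking identifications of boundary vertices carefully, though for the \emph{simple} graphs considered in the corollary the configuration is clean. Once the fan-planarity of $\Gamma(G)$ is settled, the conclusion is immediate: $G$ has $n$ vertices and $5n-10$ edges, matching the upper bound of~\cite{KU14}, and hence $G$ is optimal fan-planar.
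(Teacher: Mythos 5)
Your proposal is correct and follows essentially the same route as the paper: the paper's justification is precisely that the PMCM-drawing supplied by Theorem~\ref{thm:2-characterization} is itself fan-planar (each pentagonal face of $\Pi(G)$ carries the five-chord pentagram, in which the two edges crossing any chord share the boundary vertex between its endpoints), and then the $5n-10$ edge count matches the Kaufmann--Ueckerdt bound, giving optimality. Your write-up merely makes the face-by-face verification more explicit than the paper does.
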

\medskip

\noindent Our characterizations naturally lead to many open questions. In the following we name a few.
\begin{itemize}
\item What is the complexity of the recognition problem for optimal $2$- and $3$-planar graphs?
\item What is the exact upper bound on the number of edges of simple optimal $3$-planar graphs? We conjecture that they do not have more than $5.5n-15$ edges.
\item Theorems~\ref{thm:2-characterization} and~\ref{thm:3-characterization} imply that optimal $2$- and $3$-planar graphs have a fully triangulated planar subgraph. Can this property be proved for optimal $4$-planar or more in general for optimal $k$-planar graphs? Proving this property would be useful to derive better density bounds for $k\geq4$.
\item By Properties~\ref{prp:2_planar_quasi} and \ref{prp:3_planar_quasi}, optimal $2$- and $3$-planar graphs are quasi-planar. Angelini et al.~\cite{ABBL17} proved that every simple $k$-planar graph is $(k+1)$-quasi planar for $k \geq 3$ (i.e., it can be  drawn with no $k+1$ pairwise crossing edges). Our results about optimal $2$-planar and even more about optimal $3$-planar graphs give indications that the result by Angelini et al.~\cite{ABBL17} may hold also for $k=2$.
\item We have found a RAC drawing (i.e., a drawing in which all crossing edges form right angles) with at most one bend per edge for the optimal $2$-planar graph obtained from the dodecahedron as its true-planar structure. Is this generalizable to all simple optimal $2$-planar graphs?
\end{itemize}

\bibliographystyle{abbrv}
\bibliography{references}

\begin{thebibliography}{10}

\bibitem{DBLP:journals/corr/Ackerman15}
E.~Ackerman.
\newblock On topological graphs with at most four crossings per edge.
\newblock {\em CoRR}, 1509.01932, 2015.

\bibitem{ACNS82}
M.~Ajtai, V.~Chvátal, M.~Newborn, and E.~Szemerédi.
\newblock Crossing-free subgraphs.
\newblock In {\em Theory and Practice of Combinatorics}, pages 9--12.
  North-Holland Mathematics Studies, 1982.

\bibitem{DBLP:journals/dcg/AlonE89}
N.~Alon and P.~Erd\H{o}s.
\newblock Disjoint edges in geometric graphs.
\newblock {\em Discrete {\&} Computational Geometry}, 4:287--290, 1989.

\bibitem{ABBL17}
P.~Angelini, M.~A. Bekos, F.~J. Brandenburg, G.~{Da Lozzo}, G.~{Di Battista},
  W.~Didimo, G.~Liotta, F.~Montecchiani, and I.~Rutter.
\newblock On the relationship between $k$-planar and $k$-quasi planar graphs.
\newblock {\em CoRR}, abs/1702.08716, 2017.

\bibitem{BB0R15}
M.~A. Bekos, T.~Bruckdorfer, M.~Kaufmann, and C.~N. Raftopoulou.
\newblock 1-planar graphs have constant book thickness.
\newblock In N.~Bansal and I.~Finocchi, editors, {\em ESA}, volume 9294 of {\em
  LNCS}, pages 130--141. Springer, 2015.

\bibitem{BKR16}
M.~A. Bekos, M.~Kaufmann, and C.~N. Raftopoulou.
\newblock On the density of non-simple 3-planar graphs.
\newblock In Y.~Hu and M.~N{\"{o}}llenburg, editors, {\em Graph Drawing},
  volume 9801 of {\em LNCS}, pages 344--356. Springer, 2016.

\bibitem{DBLP:journals/tcs/BinucciGDMPST15}
C.~Binucci, E.~D. Giacomo, W.~Didimo, F.~Montecchiani, M.~Patrignani,
  A.~Symvonis, and I.~G. Tollis.
\newblock Fan-planarity: Properties and complexity.
\newblock {\em Theor. Comput. Sci.}, 589:76--86, 2015.

\bibitem{BSW84}
R.~Bodendiek, H.~Schumacher, and K.~Wagner.
\newblock {\"U}ber 1-optimale {G}raphen.
\newblock {\em Mathematische Nachrichten}, 117(1):323--339, 1984.

\bibitem{Borodin95}
O.~V. Borodin.
\newblock A new proof of the 6 color theorem.
\newblock {\em J. of Graph Theory}, 19(4):507--521, 1995.

\bibitem{DBLP:journals/jgaa/Brandenburg14}
F.~J. Brandenburg.
\newblock 1-visibility representations of 1-planar graphs.
\newblock {\em J. Graph Algorithms Appl.}, 18(3):421--438, 2014.

\bibitem{DBLP:journals/corr/Brandenburg16a}
F.~J. Brandenburg.
\newblock Recognizing optimal 1-planar graphs in linear time.
\newblock {\em CoRR}, 1602.08022, 2016.

\bibitem{DBLP:journals/algorithmica/CheongHKK15}
O.~Cheong, S.~Har{-}Peled, H.~Kim, and H.~Kim.
\newblock On the number of edges of fan-crossing free graphs.
\newblock {\em Algorithmica}, 73(4):673--695, 2015.

\bibitem{DBLP:journals/jgaa/DeanEGLST07}
A.~M. Dean, W.~S. Evans, E.~Gethner, J.~D. Laison, M.~A. Safari, and W.~T.
  Trotter.
\newblock Bar k-visibility graphs.
\newblock {\em J. Graph Algorithms Appl.}, 11(1):45--59, 2007.

\bibitem{DBLP:journals/tcs/DidimoEL11}
W.~Didimo, P.~Eades, and G.~Liotta.
\newblock Drawing graphs with right angle crossings.
\newblock {\em Theor. Comput. Sci.}, 412(39):5156--5166, 2011.

\bibitem{DBLP:journals/jgaa/Evans0LMW14}
W.~S. Evans, M.~Kaufmann, W.~Lenhart, T.~Mchedlidze, and S.~K. Wismath.
\newblock Bar 1-visibility graphs vs. other nearly planar graphs.
\newblock {\em J. Graph Algorithms Appl.}, 18(5):721--739, 2014.

\bibitem{DBLP:journals/cj/GiacomoDLMW15}
E.~D. Giacomo, W.~Didimo, G.~Liotta, H.~Meijer, and S.~K. Wismath.
\newblock Planar and quasi-planar simultaneous geometric embedding.
\newblock {\em Comput. J.}, 58(11):3126--3140, 2015.

\bibitem{DBLP:conf/cocoon/HongELP12}
S.~Hong, P.~Eades, G.~Liotta, and S.~Poon.
\newblock F{\'{a}}ry's theorem for 1-planar graphs.
\newblock In J.~Gudmundsson, J.~Mestre, and T.~Viglas, editors, {\em {COCOON}},
  volume 7434 of {\em LNCS}, pages 335--346. Springer, 2012.

\bibitem{KU14}
M.~Kaufmann and T.~Ueckerdt.
\newblock The density of fan-planar graphs.
\newblock {\em CoRR}, 1403.6184, 2014.

\bibitem{Lei83}
T.~Leighton.
\newblock {\em Complexity Issues in VLSI. Foundations of Computing Series.}
\newblock MIT Press., 1983.

\bibitem{DBLP:journals/dcg/LovaszPS97}
L.~Lov{\'{a}}sz, J.~Pach, and M.~Szegedy.
\newblock On {C}onway's thrackle conjecture.
\newblock {\em Discrete {\&} Computational Geometry}, 18(4):369--376, 1997.

\bibitem{PachRTT06}
J.~Pach, R.~Radoi{\v{c}}i{\'{c}}, G.~Tardos, and G.~T{\'{o}}th.
\newblock Improving the crossing lemma by finding more crossings in sparse
  graphs.
\newblock {\em Discrete {\&} Computational Geometry}, 36(4):527--552, 2006.

\bibitem{PachT97}
J.~Pach and G.~T{\'{o}}th.
\newblock Graphs drawn with few crossings per edge.
\newblock {\em Combinatorica}, 17(3):427--439, 1997.

\bibitem{Ringel65}
G.~Ringel.
\newblock Ein {S}echsfarbenproblem auf der {K}ugel.
\newblock {\em Abhandlungen aus dem Mathematischen Seminar der Universität
  Hamburg (in German)}, 29:107--117, 1965.

\bibitem{DBLP:journals/siamdm/Suzuki10}
Y.~Suzuki.
\newblock Re-embeddings of maximum 1-planar graphs.
\newblock {\em {SIAM} J. Discrete Math.}, 24(4):1527--1540, 2010.

\bibitem{DBLP:journals/dcg/TamassiaT86}
R.~Tamassia and I.~G. Tollis.
\newblock A unified approach a visibility representation of planar graphs.
\newblock {\em Discrete {\&} Computational Geometry}, 1:321--341, 1986.

\bibitem{DBLP:journals/jgt/Turan77}
P.~Tur{\'{a}}n.
\newblock A note of welcome.
\newblock {\em J. of Graph Theory}, 1(1):7--9, 1977.

\end{thebibliography}

\end{document}